\let\C\relax
\newcommand{\multiline}[1]{%
  \parbox{\dimexpr\linewidth-4em-\ALG@thistlm}{#1}
}
\newtheorem{theorem}{Theorem}[section]
\newtheorem{lemma}[theorem]{Lemma}
\newtheorem{definition}[theorem]{Definition}
\newtheorem{proposition}[theorem]{Proposition}
\newtheorem{assumption}{Assumption}
\newtheorem{problem}[theorem]{Problem}
\theoremstyle{remark}
\newtheorem{remark}[theorem]{Remark}
\newcommand{\wt}{\widetilde}
\renewcommand{\tilde}{\wt}
\newcommand{\R}{\mathbb{R}}
\newcommand{\C}{\mathbb{C}}
\renewcommand{\d}{\mathrm{d}}
\renewcommand{\i}{\mathbf{i}}
\newcommand{\diag}{\mathrm{diag}}
\renewcommand{\d}{\mathrm{d}}
\newcommand{\norm}[1]{\| #1 \|}
\DeclareMathOperator*{\E}{{\mathbb{E}}}
\DeclareMathOperator{\tr}{tr}
\DeclareMathOperator*{\argmin}{arg\,min}
\DeclareMathOperator*{\argmax}{arg\,max}
\newsavebox{\mstrut}
\newcommand{\bbra}[1]{%
    \sbox{\mstrut}{\(#1\)}%
    \mathinner{\left\langle\kern-0.2\ht\mstrut\left\langle{#1}\right|\right.}%
}
\newcommand{\kett}[1]{%
    \sbox{\mstrut}{\(#1\)}%
    \mathinner{\left.\left|{#1}\right\rangle\kern-0.2\ht\mstrut\right\rangle}%
}
\newcommand{\bbrakett}[1]{%
    \sbox{\mstrut}{\(#1\)}%
    \mathinner{\left\langle\kern-0.2\ht\mstrut\left\langle{#1}\right\rangle\kern-0.2\ht\mstrut\right\rangle}%
}
\newcommand{\RZ}[1]{{\color{cyan}[Ruizhe: #1]}}
\newcommand{\zd}[1]{{\color{purple}[ZD: #1]}}
\newcommand{\LL}[1]{{\color{brown}[LL: #1]}}
\newcommand{\nc}{\newcommand}
\newcommand{\ben}{\begin{enumerate}}
\newcommand{\een}{\end{enumerate}}
\nc{\nnl}{\nn \\ &}  %new new line
\nc{\fot}{\frac{1}{2}} %frac one two
\nc{\oo}[1]{\frac{1}{#1}} % one over
\nc{\mc}{\mathcal}
\nc{\onenorm}[1]{\L\| #1 \R\|_1} %one norm
\nc{\Ra}{\Rightarrow}
\nc{\zo}{\{0,1\}}
\newenvironment{breakablealgorithm}
  {% \begin{breakablealgorithm}
   \begin{center}
     \refstepcounter{algorithm}% New algorithm
     \hrule height.8pt depth0pt \kern2pt% \@fs@pre for \@fs@ruled
     \renewcommand{\caption}[2][\relax]{% Make a new \caption
       {\raggedright\textbf{\fname@algorithm~\thealgorithm} ##2\par}%
       \ifx\relax##1\relax % #1 is \relax
         \addcontentsline{loa}{algorithm}{\protect\numberline{\thealgorithm}##2}%
       \else % #1 is not \relax
         \addcontentsline{loa}{algorithm}{\protect\numberline{\thealgorithm}##1}%
       \fi
       \kern2pt\hrule\kern2pt
     }
  }{% \end{breakablealgorithm}
     \kern2pt\hrule\relax% \@fs@post for \@fs@ruled
   \end{center}
  }
\renewcommand\nomgroup[1]{%
  \item[\bfseries
  \ifstrequal{#1}{B}{Hamiltonian notations}{%
  \ifstrequal{#1}{C}{Initial states notations}{%
  \ifstrequal{#1}{D}{Hamiltonian evolution times notations}{%
  \ifstrequal{#1}{E}{Tensor notations}{%
  \ifstrequal{#1}{F}{Algorithm notations}{%
  }}}}}%
]}
\begin{document}

\title{Quantum Filtering and Analysis of Multiplicities in Eigenvalue Spectra}

\begin{abstract}

Fine-grained spectral properties of quantum Hamiltonians, including both eigenvalues and their multiplicities, provide useful information for characterizing many-body quantum systems as well as for understanding phenomena such as topological order. Extracting such information with small additive error is \#\textsf{BQP}-complete in the worst case. In this work, we introduce QFAMES (Quantum Filtering and Analysis of Multiplicities in Eigenvalue Spectra), a quantum algorithm that efficiently identifies clusters of closely spaced dominant eigenvalues and determines their multiplicities under physically motivated assumptions, which allows us to bypass worst-case complexity barriers. QFAMES also enables the estimation of observable expectation values within targeted energy clusters, providing a powerful tool for studying quantum phase transitions and other physical properties. 
We validate the effectiveness of QFAMES through numerical demonstrations, including its applications to characterizing quantum phases in the transverse-field Ising model and estimating the ground-state degeneracy of a topologically ordered phase in the two-dimensional toric code model. {We also generalize QFAMES to the setting of mixed initial states.}
Our approach offers rigorous theoretical guarantees and significant advantages over existing subspace-based quantum spectral analysis methods, particularly in terms of the sample complexity and the ability to resolve degeneracies.

\end{abstract}

\author{Zhiyan Ding}
\affiliation{Department of Mathematics, University of Michigan, Ann Arbor, Michigan 48109, USA}

\author{Lin Lin}
\email{linlin@math.berkeley.edu}
\affiliation{Department of Mathematics, University of California, Berkeley, California 94720, USA}
\affiliation{Applied Mathematics and Computational Research Division, Lawrence Berkeley National Laboratory, California 94720, USA}

\author{Yilun Yang}
\affiliation{Department of Mathematics, University of California, Berkeley, California 94720, USA}

\author{Ruizhe Zhang}
\affiliation{Department of Computer Science, Purdue University, West Lafayette, Indiana 47907, USA}
\date{\today}
\maketitle

\section{Introduction}

Understanding the energy spectrum of a Hamiltonian is a central problem in quantum computing, quantum complexity theory, quantum chemistry, and many-body physics.  While estimating the ground state energy or a few low-lying excitations is a common task, many physical properties require more fine-grained spectral information.
For instance, the ground-state degeneracy (\textsc{gsd}) can indicate the presence of topological order---a hallmark of certain condensed matter phases~\cite{wen2017colloquium} and a key feature for topological quantum computation~\cite{nayak2008non}. More generally, we are interested in estimating both locations and the multiplicities of eigenvalues.

Despite their fundamental physical relevance, extracting such spectral information is computationally challenging: the Hilbert space dimension grows exponentially with system size, making exact diagonalization infeasible even for modestly sized systems.
Indeed, the \textsc{gsd} or eigenstate degeneracy problem is shown to be in the worst case \#\textsf{BQP}-complete---the counting version of \textsf{QMA}~\cite{bfs11}~\footnote{Ref.~\cite{bfs11} showed that exactly computing the density of states (\textsc{dos}) for a local Hamiltonian is \#\textsf{BQP}-complete, under the promise that no eigenvalue lies near the boundary of the target interval $[a,b]$. On the other hand, if one only requires a small multiplicative approximation error but allows an exponentially large additive error, the problem becomes significantly easier. For instance,  Refs. \cite{brandao2008entanglement,gyurik2022towards}  apply QPE to the maximally mixed state to sample the spectrum, yielding an efficient estimator for the \textsc{gsd} or \textsc{dos} when the target eigenspace degeneracy is exponentially large. In contrast, we focus on the regime where the target degeneracy is $\mathcal{O}(1)$.}.
As a result, we can only hope to solve this problem by leveraging additional assumptions for physically motivated Hamiltonians.

\begin{figure*}
    \centering
    \includegraphics[width=0.8\linewidth]{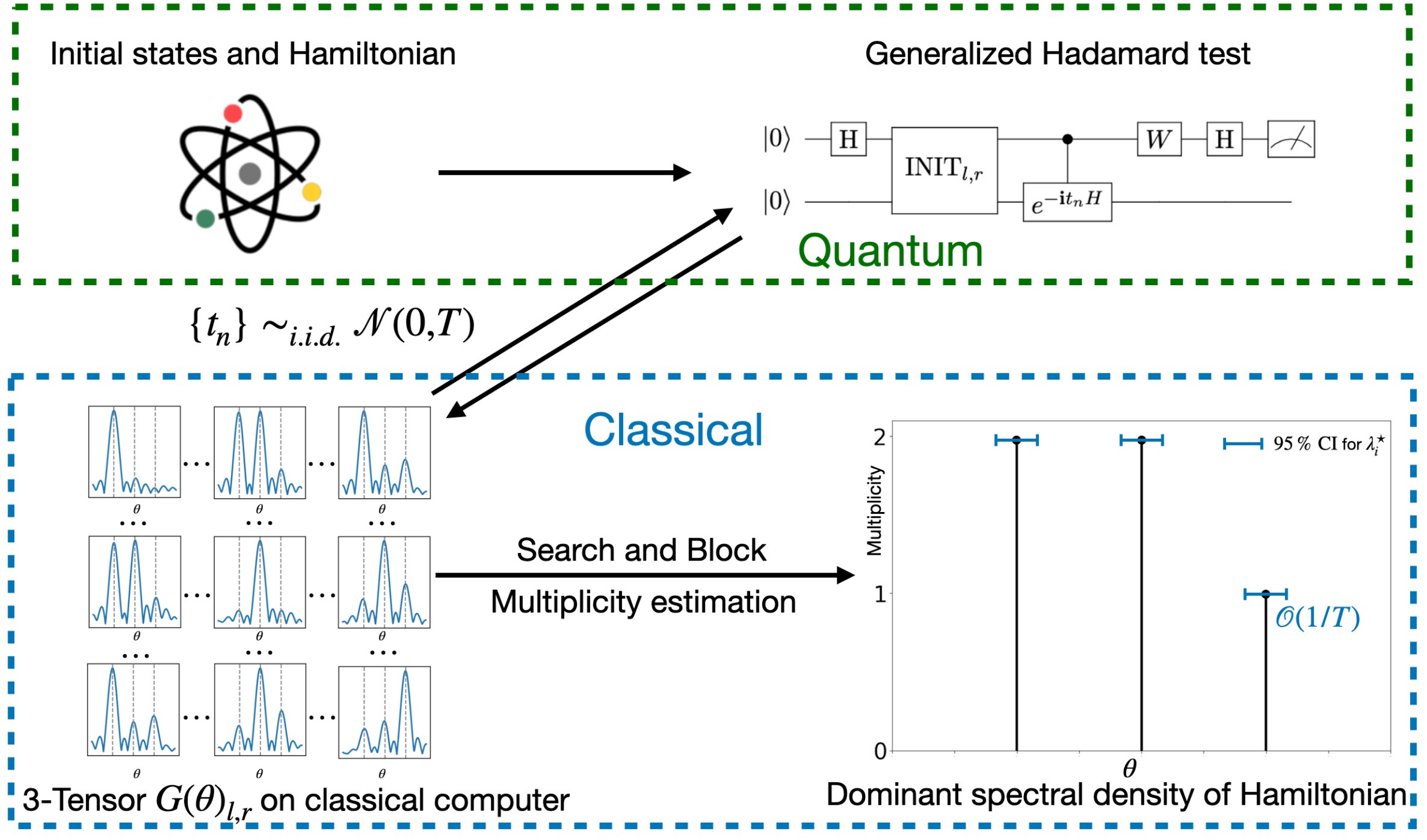}
    \caption{Illustration of the QFAMES algorithm. Given the Hamiltonian and two sets of initial states prepared by circuits $\{U_l\}_{l\in [L]}$ and $\{V_r\}_{r\in [R]}$, we measure quantities of the form $\mc{Z}_{l,r}(t_n) = \braket{0 | U^{\dagger}_l e^{-\i H t_n} V_{r} | 0}$ using the generalized Hadamard test circuit. The collected quantum data are then processed classically to (1) estimate  the locations of energy eigenstate clusters through searching and blocking, and (2) compute the multiplicities of the probed clusters. }
    \label{fig:flow_chart}
\end{figure*}

In this work, we tackle this problem by proposing a new quantum algorithm, dubbed  Quantum Filtering and Analysis of Multiplicities in Eigenvalue Spectra (QFAMES). Our approach prepares a collection of initial states and leverages their correlated information to estimate the \emph{density of dominant eigenstates} (\textsc{dods}) defined as
\begin{align}\label{eqn:def_dods}
    \mu_D(E)\propto \sum_{i\in \mc{D}} \delta(E - \lambda_i)\,,
\end{align}
where $\{\lambda_i\}_{i\in \mc{D}}$ denotes the multiset of dominant energy eigenvalues (counting multiplicity), whose corresponding eigenvectors have sufficiently large overlaps with the subspace spanned by the initial states.

To our knowledge, QFAMES is the first quantum algorithm that can provably estimate the \textsc{dods} with rigorous theoretical guarantees. Specifically, QFAMES identifies clusters of closely spaced dominant eigenvalues and determines exactly the number of (possibly degenerate) eigenstates within each cluster {with a proper choice of initial states}. The algorithm requires only a single ancilla qubit and employs short-depth circuits, which makes it suitable for the early fault-tolerant regime (see~\cref{thm:main}). Furthermore, recent advances~\cite{Yang2024,Clinton2024,Wang2025,Schiffer2025,Yi2025} suggest that even this single ancilla qubit may be eliminated under additional assumptions (see \cref{sec:ancilla_free}).

We then extend the algorithm to efficiently estimate expectation values of observables for the dominant eigenstates within a specified cluster. This task may be viewed as a natural generalization of the \emph{eigenstate property estimation} problem~\cite{zeng2021universal,zwp22,sun2024high,sun2025randomised} which focuses solely on a single, isolated eigenstate.
Our approach enables probing physical properties within targeted clusters, thereby providing a useful tool for characterizing phenomena such as quantum phase transitions. {The proposed framework is applicable to both pure and mixed initial states.}

{The effectiveness of QFAMES relies on the choice of initial states. Specifically, we require that the overlap between the set of initial states and \emph{any} normalized vector in a dominant eigenspace be uniformly bounded from below. This \emph{uniform overlap condition} generalizes the standard overlap assumption in quantum phase estimation. Moreover, we show that an information-theoretic barrier prevents degeneracy resolution when this condition is not met (see \cref{prop:no-go}).}

{That said, there may not be an operationally viable way to directly check the uniform overlap condition in practice, as it requires access to the target eigenspace. In practice, users should choose from a set of physically motivated initial states that are expected to have good coverage of the target eigenspaces. The QFAMES algorithm does not require the initial states to be linearly independent. This allows users to choose a larger, potentially redundant set of initial states (pure or mixed) to satisfy the uniform overlap condition, while avoiding numerical instabilities associated with near-linear dependence. We outline three scenarios where such initial states may be prepared in practice.}

{First, for molecular quantum chemistry problems, there already exist various practical proposals that prepare low-energy trial state sets, including sums of Slater determinants~\cite{Babbush2015, tubman2018, Fomichev2024}, spin flip~\cite{casanova2020spin,maskara2025programmable}, coupled cluster~\cite{Lee2019,anand2022quantum}, and more general variational ansatze~\cite{Cerezo2021,  Grimsley2019, Baek2023, zhang2025}. These protocols can be applied in conjunction with subspace-based quantum eigenvalue estimation methods~\cite{Parrish2019QuantumFD,Huggins2020,Stair2020,Seki2021,Epperly2022,Baek2023,Kirby2024,Yoshioka2025} to determine ground and excited state energies. Starting from these physically motivated initial states, QFAMES (a) provides a theoretical foundation and necessary conditions for subspace-based methods, and (b) improves performance compared to previous subspace diagonalization approaches. In particular, it is not plagued by the problem of inverting ill-conditioned overlap matrices. The analysis presented here also sheds light on the conditions required for initial states in existing subspace diagonalization methods.}

{Second, tensor network methods, such as matrix product states (or projected entangled-pair states in two or higher spatial dimensions) with low bond dimensions, can efficiently prepare initial states that are useful in quantum chemistry and condensed matter systems~\cite{Wei2023,Wei2024,Smith2024,scheer2025,jaderberg2025,mansuroglu2025,Berry2025Rapid,Fomichev2024,Leimkuhler2025,Wu2025}. While the bond dimension required to represent the exact ground state of a quantum system may be large, QFAMES only requires that the set of initial states collectively has non-negligible overlap with the ground-state manifold. This relaxes the bond dimension requirements. In this paper, we study the transverse field Ising model, the toric code model, and the XXZ model, which demonstrate the efficacy of these matrix product initial states when combined with QFAMES.}

{Third, generalizing QFAMES to mixed initial states (discussed in \cref{sec:QFAMES_mixed}) opens up new possibilities. For instance, recently developed dissipative algorithms for preparing low-temperature thermal or ground states~\cite{VerstraeteWolfIgnacioCirac2009,RoyChalkerGornyiEtAl2020,Cubitt2023,RallWangWocjan2023,ChenKastoryanoBrandaoEtAl2025,DingLiLin2025,DingChenLin2024,LiZhanLin2025,Hahn2025Efficient,GilyenChenDoriguelloKastoryano2024,JiangIrani2024,Lloyd2025Quasiparticle,ding2025endtoendefficientquantumthermal,li2025dissipative,ZhanDingHuhnEtAl2026} typically yield mixed states, especially when the ground state manifold is degenerate. Also, the controlled state-preparation oracle is inaccessible due to the irreversibility of the dissipative process. Our construction addresses this by utilizing an additional system register, thereby removing the need for controlled state-preparation circuits. We demonstrate, via a proof-of-principle example, that solving the generalized eigenvalue problem with such mixed states extracts significantly more information about observables within the ground state manifold than individual measurements alone.}

The paper is structured as follows. In~\cref{sec:related_work}, we review the related works. In \cref{sec:setup}, we introduce the problem setup and summarize the main results. \cref{sec:algo} presents the QFAMES algorithm in detail, while \cref{sec:obs} discusses how to extend the method to extract information of physical observables. ~\cref{sec:ancilla_free} discusses how to implement the QFAMES algorithm in a control-free, ancilla-free setting. {The extension of QFAMES to mixed state case is discussed in~\cref{sec:QFAMES_mixed}.} In \cref{sec:numerical_tests}, we present numerical simulation results with an illustrative example, the transverse field Ising model, the toric code model{, and the XXZ model}. Finally, \cref{sec:discussion} concludes with a discussion of possible future directions.

\section{Related work}\label{sec:related_work}
The estimation of Hamiltonian eigenvalues is a fundamental problem in the literature of quantum computation, which can be addressed by a variety of methods collectively known as quantum phase estimation (QPE) algorithms~\cite{kitaev1995quantum,KitaevShenVyalyi2002,Dobsicek2007,Wiebe2016,Somma2019,OBrien2019,DingLin2023,Ding2023simultaneous,ding2024quantum,ding2023robust,yi2024quantum,castaldo2025heisenberg,ni2023lowdepth,wfz22,ni2023lowdepth_2,yi2024quantum,castaldo2025heisenberg}. These methods primarily aim at determining the locations of eigenvalues but are inherently incapable of identifying their multiplicities, as they typically assume access to only a single initial state. A more detailed review of these algorithms is provided in~\cref{sec:single_state_limitation}.

In the literature, several recent works~\cite{scali2025spectral,Guo_2025} have also explored problems related to estimating~\cref{eqn:def_dods}, albeit under different oracle models and assumptions. Some of these studies establish a connection between DOS estimation and quantum topological data analysis (TDA)~\cite{Lloyd2016QuantumAF,Berry2024,Scali2024}, where QPE can be employed to sample energies and estimate degeneracies within small energy intervals~\cite{scali2025spectral}. In contrast,~\cite{Guo_2025} reformulates the estimation of ground state degeneracy as a ground state preparation task for a specially constructed auxiliary Hamiltonian, which in turn requires efficient ground state preparation of the auxiliary Hamiltonian.

As will be shown later, our observable eigenstate property estimation problem reduces to a generalized eigenvalue problem, which shares structural similarities with subspace-based quantum eigenvalue estimation methods~\cite{Parrish2019QuantumFD,Huggins2020,Stair2020,Seki2021,Epperly2022,Baek2023,Kirby2024,Yoshioka2025,Klymko2022,Oliveira2025}.
These subspace methods can be generalized to estimate eigenvalue multiplicities when multiple initial states are used.~{
QFAMES, however, distinguishes itself with the following features:
\begin{itemize}
    \item \emph{Heisenberg-limited scaling.} Krylov subspace-based methods often require uniform time sampling, which prevents the attainment of Heisenberg-limited scaling in energy estimation, as established in~\cite[Theorem 1.7]{ding2024esprit}. QFAMES circumvents this limitation by utilizing Gaussian sampling of the evolution time.
    \item \emph{Separation of dominant eigenvalues.} Subspace-based methods usually require high-precision measurements of data matrices due to coherence between different dominant eigenvalues. Although noise can be suppressed via singular value truncation~\cite{Klymko2022,Oliveira2025} as in QFAMES, the resulting complexity remains suboptimal according to state-of-the-art analyses of subspace methods~\cite{Epperly2022,Baek2023,Kirby2024}.
    QFAMES addresses this by applying a Gaussian energy filter to distinguish distinct dominant eigenvalues and separate them into smaller, well-conditioned SVD sub-problems for multiplicity determination.
    \item \emph{Small data matrix.} The dimension of the data matrix in subspace-based methods often scales with the number of uniformly chosen time points, which is usually proportional to the inverse target precision in energy. In contrast, thanks to the filtering technique, the dimension of the QFAMES data matrix is only determined by the number of initial states, which is independent of the accuracy requirement.
\end{itemize}
These features allow us to derive stronger theoretical guarantees than those available for standard subspace-based algorithms.
To our knowledge, this is also the first work capable of provably estimating eigenvalue multiplicities.
}

\section{Setup and main results}\label{sec:setup}
Let us first formally state the \textsc{dods} estimation problem:
\begin{problem}[Density of dominant eigenstates (\textsc{dods}) estimation]\label{prob:dods_estimation}
Let $H$ be a Hamiltonian with eigen-decomposition $\{(\lambda_i, \ket{E_i})\}$, where, after appropriate normalization, the spectrum satisfies $\lambda_i \in [-\pi, \pi]$ for all $i$.
Assume black-box access to controlled time evolution $e^{-\mathrm{i}Ht}$ for any $t\in\mathbb{R}$, as well as two families of (controlled) state-preparation unitaries
$\{U_\ell\}_{\ell\in[L]}$ and $\{V_r\}_{r\in[R]}$ satisfying $U_\ell|0\rangle=|\phi_\ell\rangle$ and $V_r|0\rangle=|\psi_r\rangle$.

We define two overlap matrices $\Phi \in \mathbb{C}^{L \times M}$ and $\Psi \in \mathbb{C}^{R \times M}$:
\begin{equation}
    \begin{aligned}
        \Phi_{l, m} &:= \langle \phi_l | E_m \rangle \quad \forall\, l \in [L]\,, m \in [M]\,, \\
        \Psi_{r, m} &:= \langle \psi_r | E_m \rangle \quad \forall\, r \in [R]\,, m \in [M]\,.
    \end{aligned}
    \label{eq:def_Phi_Psi}
\end{equation}
For each eigenstate $\ket{E_m}$, define its overlap with the initial states as:
\begin{align}\label{eqn:def_pm}
    p_m:=\|\Phi_{:,m}\|_2\cdot \|\Psi_{:, m}\|_2\,.
\end{align}
The multiset of \emph{dominant eigenvalues}, indexed by $\mathcal{D}$, consists of those eigenvalues that satisfy the {\emph{dominance condition}}:
\begin{align}
        {\left(p_{\min} := \min_{m\in\mc{D}} p_m \right)\gtrsim\left( p_{\mathrm{tail}} := \sum_{m \notin \mc{D}} p_{m} \right)},
        \label{eq:def_domination}
\end{align}
The distinct dominant eigenvalues are denoted by $\{\lambda_i^{\star}\}_{i \in [I]}$, each separated by at least $\Delta$, with corresponding eigenvectors indexed by $\mc{D}_i$. The degeneracies $|\mc{D}_i|$ satisfy $|\mc{D}| = \sum_{i \in [I]} |\mc{D}_i|$.

\textbf{Goal:} With high probability, for each $i\in [I]$, estimate the dominant eigenvalue $\lambda_i^\star$ within $\epsilon$-accuracy and exactly determine its dominant degeneracy $m_i$. These outputs together suffice to approximate the density of dominant eigenstates (\textsc{dods}) $\mu_D$, as defined in~\cref{eqn:def_dods}.
\end{problem}

The left and right initial states can be either the same or different, depending on the problem setup. When they are different, the left initial states can be viewed as bases for projective measurements. {In this work, to ensure that our algorithm can theoretically solve the \textsc{dods} estimation problem efficiently, we require the dominant eigenvalues to satisfy a \emph{sufficient dominance condition}, namely, that $p_{\min}/p_{\rm tail}$ be greater than a constant determined by the structure of $\Phi$ and $\Psi$. The precise condition, together with a detailed discussion of the requirement on $p_{\min}/p_{\rm tail}$, is deferred to~\cref{sec:rigorous_version},~\cref{thm:non_orthogonal_location}, and~\cref{rem:thm_condition}.}

If $\lambda_i$ is a non-degenerate dominant eigenvalue, and there is only one pair of initial states $\ket{\phi}$ and $\ket{\psi}$, then both $|\langle \phi | E_m \rangle|$ and $|\langle \psi | E_m \rangle|$ need to be large enough. Indeed, when $\ket{\phi}=\ket{\psi}$, $p_m=|\langle \phi | E_m \rangle|^2$ is simply the squared overlap between the initial state and the eigenstate, which is the standard assumption in quantum phase estimation and its variants (see \cref{sec:single_state_limitation}).  More generally, if $\lambda_i$ is a non-degenerate dominant eigenvalue but there are multiple pairs of initial states, then both $\sum_{l\in [L]} |\langle \phi_l | E_m \rangle|^2=\|\Phi_{:,m}\|_2^2$ and $\sum_{r\in [R]} |\langle \psi_r | E_m \rangle|^2=\|\Psi_{:,m}\|_2^2$ need to be sufficiently large, which states that the eigenstate $\ket{E_m}$ has non-vanishing overlap with at least one initial state in each set.
This can be concisely captured by the condition for $p_m$ in \cref{eqn:def_pm} and \cref{eq:def_domination}.

In order to compute the multiplicity of a degenerate dominant eigenvalue $\lambda_i^{\star}$, we need a stronger condition to ensure that \emph{all} normalized eigenvectors corresponding to the degenerate eigenvalue $\lambda_i^{\star}$ in the eigenspace $\mathcal{E}_i:=\operatorname{span} \{\ket{E_m}\}_{m\in\mc{D}_i}$ have non-vanishing overlap with the set of initial states. In other words, both of the following quantities
\begin{equation}\label{eqn:sufficient_coverage_informal}
    \begin{aligned}
        &\min_{\ket{E}\in \mathcal{E}_i, \braket{E|E} = 1 }  \sum_{l\in [L]} \left| \braket{\phi_l | E} \right|^2\\
        &\min_{\ket{E}\in \mathcal{E}_i, \braket{E|E} = 1 }  \sum_{r\in [R]} \left| \braket{\psi_r | E} \right|^2
    \end{aligned}
\end{equation}
need to be sufficiently large.  This will be formulated more precisely as a \emph{uniform overlap condition} in \cref{asp:linear_dep} of \cref{sec:assump}. Indeed, if one of the quantities in \cref{eqn:sufficient_coverage_informal} vanishes, we prove that there is an \emph{information-theoretic barrier} for resolving the degeneracy (\cref{prop:no-go}). As a special case, this proves that existing quantum phase estimation algorithms, which typically use only a single initial state, cannot resolve eigenvalue degeneracies.

The uniform overlap condition states that the overlap of all normalized vectors in the eigen-subspace with the initial states should be uniformly bounded from below. However, this does not require the initial states to be well-conditioned. In fact, we do not even require the vectors $\{\ket{\phi_l}\}$ (or $\{\ket{\psi_r}\}$) to be \emph{linearly independent}.  This provides significantly enhanced flexibility in choosing the initial states compared to existing subspace methods, which typically require the initial states to be a well-conditioned set of vectors. That said, the initial states should not be chosen randomly.
In practical calculations, the initial states should be chosen carefully and in a physically motivated manner, such as low-energy states prepared by variational quantum algorithms~\cite{peruzzo2014variational,farhi2014quantum,Kandala2017}, quantum imaginary time evolution~\cite{Motta2020,Lin2021,Yuan2019}, and the non-orthogonal ansatz states used in the Non-Orthogonal Quantum Eigensolver (NOQE) \cite{Huggins2020,Baek2023} approach.

The main result of this paper is an efficient quantum algorithm to solve the \textsc{dods} estimation problem with \textbf{rigorous theoretical guarantees}.

\begin{theorem}[Main result, Informal version of \cref{thm:main}]\label{thm:informal}
There exists a quantum algorithm that repeatedly runs the quantum circuit in~\cref{fig:flow_chart} to collect data with Hamiltonian evolution times $t_1,\dots,t_N\in \R$ such that
\begin{itemize}
    \item maximal evolution time:
    $T_{\max}:=\max \{|t_n|\}=\wt{\mc{O}}\left(
    p_{\rm tail} \epsilon^{-1}/{p_{\min}} \right)$;
    \item total evolution time:
    $T_{\rm total}:=2\sum^N_{n=1}\sum_{l\in[L]}\sum_{r\in[R]} |t_n|=\wt{\mc{O}}( (p_{\rm tail}{p_{\min}})^{-1}\epsilon^{-1})$.
\end{itemize}
\noindent By classically post-processing the data, it guarantees that for each distinct dominant eigenvalue,
\begin{itemize}
    \item its location can be estimated to within $\epsilon$ accuracy; and
    \item its multiplicity can be determined exactly.
\end{itemize}
\end{theorem}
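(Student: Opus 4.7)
The starting point is to express the Hadamard-test quantity as an exponential sum in $t$,
\[
\mc{Z}_{l,r}(t)\;=\;\braket{\phi_l|e^{-\i H t}|\psi_r}\;=\;\sum_{m\in\mc{D}}\Phi_{l,m}\,\ov{\Psi_{r,m}}\,e^{-\i\lambda_m t}\;+\;\sum_{m\notin\mc{D}}\Phi_{l,m}\,\ov{\Psi_{r,m}}\,e^{-\i\lambda_m t},
\]
viewed as an $L\times R$ matrix-valued function of $t$. The ``tail'' term is a sum of rank-one matrices whose spectral norm is bounded uniformly in $t$ by $\sum_{m\notin\mc{D}}\|\Phi_{:,m}\|_2\|\Psi_{:,m}\|_2=p_{\rm tail}$. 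Each $\mc{Z}_{l,r}(t_n)$ is estimated unbiasedly by a single run of the generalized Hadamard test with unit-modulus noise, so the total number of shots (essentially $T_{\rm total}$ up to rescaling) controls the fluctuation of the empirical signal. The plan is to classically recover the parameters $\{(\lambda_i^\star,|\mc{D}_i|)\}_{i\in[I]}$ from this noisy exponential sum in two post-processing stages: first locations, then multiplicities.

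For the location stage, I would convolve the collected data against a compactly-supported even window $K(t)$ of width $T_{\max}$ whose Fourier transform is a sharp $\epsilon$-localized bump---a Jackson, Fej\'er, or truncated-Gaussian kernel suffices---and form
\[
S(E)\;:=\;\sum_{l,r}\Bigl|\sum_{n}K(t_n)\,e^{\i E t_n}\,\mc{Z}_{l,r}(t_n)\Bigr|^{2}.
\]
The Fourier tail of $K$ together with the dominance condition $p_{\min}\ge C_p\,p_{\rm tail}$ ensures that $S(E)$ has $I$ well-separated peaks, each within $\epsilon$ of some $\lambda_i^\star$, and takes values of order $\mc{O}(p_{\rm tail}^2)$ outside the $\Delta$-neighborhoods of these peaks; the kernel bandwidth required for this separation is the source of the bound $T_{\max}=\wt{\mc{O}}(p_{\rm tail}\,\epsilon^{-1})$. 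A greedy ``search-and-block'' loop then repeatedly finds the current maximum of $S(E)$, refines its location to $\epsilon$-accuracy by a local maximization, masks out a window of radius $\Delta/2$ around it, and iterates until the maximum drops below a threshold placed in the gap between $p_{\min}$ and $p_{\rm tail}$. Termination is guaranteed after at most $|\mc{D}|$ rounds, and a Hoeffding bound shows that $T_{\rm total}=\wt{\mc{O}}(p_{\rm tail}^{-1}\epsilon^{-1})$ shots suffice to control the empirical $S(E)$ uniformly over an $\epsilon$-net of $[-\pi,\pi]$.

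For the multiplicity stage, at each detected $\lambda_i^\star$ I would assemble the $L\times R$ matrix
\[
[M_i]_{l,r}\;:=\;\sum_{n}K(t_n)\,e^{\i\lambda_i^\star t_n}\,\mc{Z}_{l,r}(t_n)\;\approx\;\Phi_{:,\mc{D}_i}\,\Psi_{:,\mc{D}_i}^{\dagger}.
\]
The uniform overlap condition (\cref{asp:linear_dep}) forces both $\Phi_{:,\mc{D}_i}$ and $\Psi_{:,\mc{D}_i}$ to have rank exactly $|\mc{D}_i|$ with smallest singular value bounded below, so the noiseless $M_i$ has exactly $|\mc{D}_i|$ singular values of order $\Omega(p_{\min})$ and the rest identically zero. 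The multiplicity is then read off as the number of singular values of the empirical $M_i$ that exceed a threshold placed in the gap, e.g.\ $\Theta(p_{\rm tail})$; the same kernel and sample budget used in the first stage simultaneously bound both the off-cluster filter leakage and the Monte-Carlo perturbation of $M_i$ by $\mc{O}(p_{\rm tail}/C_p)$ in spectral norm, so by Weyl's inequality no singular value of $M_i$ crosses the threshold in the wrong direction.

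The main obstacle is the quantitative three-way comparison that makes multiplicity recovery \emph{exact} rather than merely approximate: one must simultaneously control (i) the filter's leakage between clusters separated by $\Delta$ in energy, (ii) the shot-noise perturbation of $M_i$ in spectral norm, and (iii) the smallest nonzero singular value of the noiseless $\Phi_{:,\mc{D}_i}\Psi_{:,\mc{D}_i}^{\dagger}$, and then show (i)+(ii) stays strictly below (iii) with high probability. This is where the dominance constant $C_p>1$ and the uniform overlap condition both enter essentially, and it is what fixes the polylogarithmic factors hidden in the $\wt{\mc{O}}$ bounds as well as the minimal value of $C_p$ needed for provably exact recovery. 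A union bound over the at-most-$|\mc{D}|$ clusters then completes the argument with the claimed evolution-time budgets.
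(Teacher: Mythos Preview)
Your plan matches the paper's approach almost exactly: Gaussian-type filter applied to the matrix-valued signal, Frobenius norm for peak detection, search-and-block for locations, then SVD thresholding at $\Theta(p_{\rm tail})$ for multiplicities with Weyl's inequality controlling the perturbation. The three-way comparison you flag (filter leakage vs.\ shot noise vs.\ smallest nonzero singular value of $\Phi_{:,\mc{D}_i}\Psi_{:,\mc{D}_i}^\dagger$) is precisely what drives the paper's conditions on $C_p$ and on the uniform-overlap parameter $\chi$.

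There is one genuine gap. Your sentence ``the kernel bandwidth required for this separation is the source of the bound $T_{\max}=\wt{\mc{O}}(p_{\rm tail}\,\epsilon^{-1})$'' misidentifies the mechanism. The filter width is fixed by the \emph{cluster gap} $\Delta$, not by $\epsilon$: one only needs $T=\wt{\Omega}(1/\Delta)$ to suppress inter-cluster leakage. If location accuracy were governed by kernel bandwidth, you would get $T_{\max}=\wt{\mc{O}}(\epsilon^{-1})$ with no $p_{\rm tail}$ factor, which is strictly worse than the claimed bound. The favorable $p_{\rm tail}$ factor arises from a separate \emph{perturbative} argument near the peak that your ``local maximization'' step does not supply: the unperturbed landscape $\|A(\theta)\|_F^2$ (contribution of the target cluster alone) has curvature of order $p_{\min}^2T^2$ at its maximum, while the combined perturbation from tail, off-cluster leakage, and shot noise is not just bounded by $\mc{O}(p_{\rm tail})$ in magnitude but is $\mc{O}((1+\sigma)p_{\rm tail}T)$-\emph{Lipschitz} in $\theta$. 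Comparing $\mc{W}^2$ at the empirical maximizer $\theta_1^\star$ and at the grid point nearest $\lambda_{\pi_1}^\star$ then yields a quadratic inequality in $\delta_\theta:=T|\theta_1^\star-\lambda_{\pi_1}^\star|$ of the form $-C\delta_\theta^2+\mc{O}(p_{\rm tail}/p_{\min})\,\delta_\theta+\mc{O}(p_{\rm tail}^2/p_{\min}^2)\ge 0$, whence $\delta_\theta=\mc{O}(p_{\rm tail}/p_{\min})$ and $|\theta_1^\star-\lambda_{\pi_1}^\star|=\mc{O}(p_{\rm tail}/(p_{\min}T))$. The paper runs this as a two-stage argument---first a coarse $\mc{O}(1/T)$ bound from comparing peak heights, then the refined bound above using the Lipschitz control of the cross-term $\mathrm{Re}\,\mathrm{Tr}(A^\dagger(B+C+\mc{E}))$---and this second stage is what you are missing. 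Without it, your plan would only establish $T_{\max}=\wt{\mc{O}}(\epsilon^{-1})$.
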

{According to~\eqref{eq:def_domination} and~\cref{thm:informal}, we have $p_{\rm tail}/p_{\min}\lesssim 1$, so it suffices to take $T_{\max}=\mathcal{O}(1/\epsilon)$. Moreover, the formal result in~\cref{thm:main} allows greater flexibility in the choice of $T_{\max}$ and $T_{\rm total}$ than~\cref{thm:informal}. In particular, it suffices to choose $T_{\max}=\wt{\mc{O}}\left(
\delta \epsilon^{-1}\right)$ and $T_{\rm total}=\wt{\mc{O}}( (\delta p^2_{\min})^{-1}\epsilon^{-1})$ for any $\delta>p_{\rm tail}/p_{\min}$. When $p_{\rm tail}\ll p_{\min}$, the constant prefactor of $1/\epsilon$ in $T_{\max}$ can be much smaller than $1$, which corresponds to the short-depth regime observed in the literature~\cite{DingLin2023,Ding2023simultaneous,ding2024quantum,ni2023lowdepth,ni2023lowdepth_2}.
}

Note that our algorithm can also resolve dominant eigenvalues that are \textbf{approximately degenerate}, i.e., when there is a cluster of dominant eigenvalues lying in a narrow energy interval around $\lambda^\star_i$ with width $\delta \ll \Delta$. See more discussion in~\cref{sec:complex}.

As an extension of the algorithm, we also propose an efficient method (\cref{alg:QFAMES_obs}) for estimating degenerate eigenstate properties in~\cref{sec:obs}. The problem is defined as follows.
\begin{problem}[Degenerate eigenstate property estimation]
Let $O$ be an observable. For each dominant eigenvalue $\lambda_i^{\star}$, estimate the eigenvalues of the projected observable $O_{\mc{D}_i}$ in the corresponding subspace:
\begin{align}
    O_{\mc{D}_i} := (\braket{E_{k}|O|E_{k'}})_{k,k'\in \mc{D}_i}, \quad \mc{D}_i = \left\{ k\in \mc{D}: \lambda_k = \lambda_i^{\star} \right\}\,.
\end{align}
\end{problem}

Let us denote the eigenvalues by $\lambda^{O}_k$. We remark that
any value within the range $[\min_k \lambda^{O}_k , \max_k \lambda^{O}_k]$ can be realized by some linear combination of the degenerate eigenstates. It is thus less useful to estimate $\langle E_m|O|E_m\rangle$ for each individual dominant eigenstate.
For example, in the ferromagnetic phase of a spin chain model, there can be two degenerate ground states  $\ket{E_0} = \ket{\uparrow\uparrow\cdots \uparrow}$ and $\ket{E_1} = \ket{\downarrow\downarrow\cdots \downarrow}$. The measured ground states can be however $(\ket{E_0} +\ket{E_1}) / \sqrt{2}$ and $(\ket{E_0} - \ket{E_1}) / \sqrt{2}$, both of which yield zero average magnetization.

For simplicity, we provide the resource estimation in the case of no tail eigenvalues (i.e., $p_{\rm tail}=0$)
and $p_{\min}$ being bounded below by a constant {(i.e., $p_{\min}=\Omega(1)$)}. To achieve $\epsilon_O$-accuracy in estimating each eigenvalue of $O_{\mathcal{D}_i}$, it suffices to choose
\begin{align}
    T_{\max}=\wt{\mc{O}}(\Delta^{-1}),
    \qquad
    T_{\rm total}= \wt{\mc{O}}(\Delta^{-1}\epsilon^{-2}_O)\,.
\end{align}
{Here, for simplicity, we suppress the dependence on the constant $p_{\min}$, since $p_{\min}=\Omega(1)$.}
The proposition showing this result is~\cref{prop:observable} in~\cref{sec:obs}.

\section{QFAMES algorithm and complexity analysis}~\label{sec:algo}

In this section, we describe the QFAMES algorithm in \cref{sec:QFAMES}, present its pseudo-code, and provide an illustrative example. A detailed complexity analysis follows in \cref{sec:complex}.

\subsection{Algorithmic description}\label{sec:QFAMES}
Our goal is to estimate the \textsc{dods},
including the dominant eigenvalues $\lambda_i^{\star}$ and corresponding multiplicities $m_i$.
If we assume the eigenvalues are separated by at least $\Delta$,
then we can define energy filters of the form $\exp\left[-(\theta-H)^2 T^2 \right]$ to filter out the energy eigenstates away from the desired dominant eigenvalue, where the filter width is chosen to be
$ T \gtrsim 1 / \Delta $. The filter can be implemented with time evolution operators through sampling from its Fourier transform
\begin{align}\label{eq:filter_Gauss}
    e^{-(\theta-H)^2 T^2} = \int e^{\i (\theta-H) t} a_T(t) \d t,
\end{align}
where
\begin{equation}\label{eqn:a_T_untrunc}
\begin{aligned}
 a_T(t)= \frac{1}{2T\sqrt{\pi}}\exp\left(-\frac{t^2}{4T^2}\right)
\end{aligned}
\end{equation}
is the probability density function for evolution time $t$ (more rigorously, we will sample $t$ from a truncated Gaussian distribution as defined in \cref{eqn:a_T} to avoid extremely large $t_n$).

For the given sets of left and right initial states $\ket{\phi_l}$ and $\ket{\psi_r}$, we want to compute the matrix $\mathcal{G}(\theta)$ with matrix elements
\begin{align}
    \mathcal{G}_{l,r}(\theta) =\braket{\phi_l | e^{-(\theta-H)^2 T^2} |\psi_r} \,.
    \label{eq:matrx_g}
\end{align}
At $\theta=\lambda_i^{\star}$, we have
\begin{equation}
    \begin{aligned}
    \mathcal{G}(\lambda_i^{\star}) & =\Phi \cdot \diag\left(\left\{e^{-(\lambda_i^{\star}-\lambda_m)^2 T^2} \right\}_{m\in [M]}\right)\cdot \Psi^\dagger \\
    & \approx \Phi_{:,\mc{D}_i}\cdot \left(\Psi_{:,\mc{D}_i}\right)^\dagger\,,
\end{aligned}
\end{equation}
where the matrices $\Phi$ and $\Psi$ are defined in \cref{eq:def_Phi_Psi}, and $\Phi_{:,\mc{D}_i}$ and $\Psi_{:,\mc{D}_i}$ are the corresponding submatrices that contain the columns of
dominant eigenstates with eigenvalue $\lambda_i^{\star}$.
$\mc{D}_i$ denotes the index set of these eigenvalues. Due to the concentration of $\exp(-x^2 T^2)$ around zero, this expression implies that only those eigenvalues that are close to $\lambda^{\star}_i$ contribute significantly to the matrix $\mathcal{G}(\lambda^{\star}_i)$, and the remaining eigenvalues are filtered out.
Under the uniform overlap assumption (\cref{asp:linear_dep}), we can show that the rank of the matrix $\mathcal{G}(\lambda_i^\star)$ equals the degeneracy of dominant eigenstates at $\lambda_i^{\star}$.
This is the key insight that enables QFAMES to solve the \textsc{dods} estimation problem. Specifically:
\begin{itemize}
    \item \textit{Location estimation:} We evaluate the \textbf{Frobenius norm} of $\mathcal{G}(\theta)$  on a uniform grid of $\theta\in [-\pi, \pi]$, and use a ``searching and blocking'' strategy to find the peaks $\theta_i^{\star}$ of $\|\mathcal{G}(\theta)\|_F$. Each peak corresponds to a distinct dominant eigenvalue $\lambda_i^{\star}$.

    \item \textit{Multiplicity estimation:} For each candidate dominant eigenvalue $\theta_i^\star$, we compute the \textbf{numerical rank} of $\mathcal{G}(\theta_i^\star)$ (the number of singular values above some threshold) as the estimated multiplicity.
\end{itemize}
The detailed description of the algorithm is as follows. The first stage of the algorithm, consisting of Steps 1{–}3, focuses on estimating the locations of the dominant eigenvalues. The second stage, Step 4, focuses on determining the multiplicity of each dominant eigenvalue.

\paragraph*{Step 1: Generate data.}
To obtain \cref{eq:matrx_g}, we need to measure quantities of the form
\begin{align}
    \label{eq:zlrt}
    \mc{Z}_{l,r}(t_n):=\bra{0}U_l^\dagger e^{-\i H t_n} V_r \ket{0} = \langle \phi_l|e^{-\i H t_n} |\psi_r\rangle\,.
\end{align}
{Their real and imaginary parts} can be obtained with the generalized Hadamard test as drawn below:
\begin{align}
    \label{eq:generalized_hadamard}
    \Qcircuit @C=1em @R=1em {
    \lstick{|0\rangle} & \gate{\rm H} & \multigate{1}{{\rm INIT}_{l,r}} & \ctrl{1} &  \gate{W} & \gate{\rm H} & \meter \\
    \lstick{\ket{0}}   & \qw          & \ghost{{\rm INIT}_{l,r}}         & \gate{e^{-\mathbf{i}t_n H}} & \qw & \qw & \qw
}
\end{align}
\begin{align}
    \Qcircuit @C=1em @R=1em {
    & \multigate{1}{{\rm INIT}_{l,r}} & \qw \\
    & \ghost{{\rm INIT}_{l,r}}        & \qw
}
\quad \raisebox{-1.0em}{=} \quad
\Qcircuit @C=1em @R=1em {
    & \ctrlo{1} & \ctrl{1} & \qw \\
    & \gate{U_l} & \gate{V_r} & \qw
}
\end{align}
{where $W$ is either the identity or the phase gate $S^\dagger = \begin{bmatrix}
    1 & 0 \\ 0 & -\i
\end{bmatrix}$.}
Given $U_l,V_r$, and $t_n$, the measurement outcomes $Z_{l,r,n}$ of the circuit \cref{eq:generalized_hadamard}
provide an unbiased estimator for $\mathbb{E}[Z_{l,r,n}]=\mc{Z}_{l,r}(t_n)$.
Using the generalized Hadamard test, we can efficiently gather these measurement outcomes for every pair of $(U_l, V_r)$ at the Hamiltonian evolution times $t_1,\dots, t_N$ and formulate them into a third-order tensor $Z\in \C^{L\times R\times N}$:
\begin{align}
    \mathbb{E}[Z_{l,r,n}] = \sum_{m\in [M]} \Phi_{l,m}\cdot e^{-\i \lambda_m t_n} \cdot (\Psi_{r,m})^*\,.
\end{align}
where $\Phi$ and $\Psi$ are the overlap matrices defined in Eq.~\eqref{eq:def_Phi_Psi}.

The times $t_1,\dots,t_N$ are sampled independently from the Gaussian probabilistic density function $a_T(t)$. In practice, we may truncate the density function to ensure that the maximal runtime is always bounded by $\sigma T$:
\begin{equation}\label{eqn:a_T}
\begin{aligned}
    & a_T^{\rm trunc}(t)\\
    =&
    \left(1-\int^{\sigma T}_{-\sigma T}\frac{1}{2T\sqrt{\pi}}\exp\left(-\frac{s^2}{4T^2}\right)\mathrm{d} s\right)\delta_{0}(t)\\
    &+
    \frac{1}{2T\sqrt{\pi}}\exp\left(-\frac{t^2}{4T^2}\right)\textbf{1}_{[-\sigma T,\sigma T]}(t)\,,
\end{aligned}
\end{equation}
where $\sigma$ is an adjustable parameter and $\delta_0(t)$ is the delta function at point $0$. Its Fourier transform approximates a Gaussian function and is concentrated around zero.

\paragraph*{Step 2: Compute the filtered {matrix estimator}.}
We can now \textbf{contract} the time dimension of the data tensor with the following complex exponential vector:
\begin{align}
    \frac{1}{N}\begin{bmatrix}
        e^{\i \theta t_1} & e^{\i \theta t_2} & \cdots & e^{\i \theta t_N}
    \end{bmatrix}\,,
\end{align}
for a location parameter $\theta \in \R$ and chosen time samples $t_1,\dots,t_N$. This yields an $L$-by-$R$ matrix $G(\theta)$, where
\begin{align}\label{eqn:G}
    G(\theta)_{l,r} = \frac{1}{N}\sum_{n=1}^N Z_{l,r,n} e^{\i \theta t_n}, \quad l\in [L], r\in [R]\,.
\end{align}
Since $t_1,\dots,t_N$ are sampled independently from the truncated Gaussian probabilistic density function $a_T^{\rm trunc}(t)$,
it holds that
\begin{equation}
    \begin{aligned}
        &\E[G(\theta)] = \mathcal{G}(\theta) \\
        = &\Phi \cdot \diag\left(\left\{\int e^{\i (\theta-\lambda_m) t} a_T^{\rm trunc}(t) \d t\right\}_{m\in [M]}\right)\cdot \Psi^\dagger\\
        \approx &\Phi \cdot \diag\left(\left\{\exp(-(\theta-\lambda_m)^2 T^2)\right\}_{m\in [M]}\right)\cdot \Psi^\dagger\,.
    \end{aligned}
\end{equation}
{Thus, $G(\theta)$ serves as an approximation of $\mathcal{G}(\theta)$.
}

{In our algorithm, we assume the dominant eigenvalues belong to $[-\pi,\pi]$ for simplicity of presentation. It can be achieved by rescaling and shifting the Hamiltonian. We choose the grid points $\theta$ uniformly from the interval $[-\pi,\pi]$ (See~\cref{alg:QFAMES} Lines 14,15).
}

\paragraph*{Step 3: Search and block.} After computing $G(\theta)$, the dominant eigenvalues are identified as the peaks of $\|G(\theta)\|_F$ and can be located using the search-and-block procedure proposed in~\cite{ding2024quantum}. This strategy consists of two iterative components—the search step and the block step—which are described as follows:
\begin{itemize}
\item \emph{Search:} In the first step, we find the maximum point $\theta^\star_1$ of $\|G(\theta)\|_F$, which approximates to one of the dominant eigenvalues.

\item \emph{Block:} To approximate the next dominant eigenvalue and avoid repetition, we define a block interval
\begin{align}
    \mathcal{I}_{B,1}=[\theta^\star_1-\alpha / T,\theta^\star_1+\alpha / T]
\end{align}
for some given constant $\alpha$. The conditions for $\alpha$ are stated in \cref{thm:main}.
In the following steps, we will not find any points from this interval.

\item \emph{Second iteration:} In the second iteration, we find the {second-largest} point $\theta^\star_2$ of ${\left\|G(\theta)\right\|_F}$ outside the block interval $\mathcal{I}_{B,1}=[\theta^\star_1-\alpha / T,\theta^\star_1+\alpha/T]$ around $\theta^\star_1$:
\begin{align}
    \theta^\star_2 =\argmax_{\theta\in\mathcal{I}^c_{B,1}}\|G(\theta)\|_F\,.
\end{align}

After obtaining $\theta^\star_2$, we enlarge the block interval by setting $\mathcal{I}_{B,2} = [\theta^\star_2 - \alpha / T, \theta^\star_2 + \alpha / T] \cup \mathcal{I}_{B,1}$, and then identify the third maximum of $\|G(\theta)\|_F$ outside $\mathcal{I}_{B,2}$.
This searching and updating process is iteratively repeated until a set of $\tilde{I}$ ``maximal'' points is discovered. Ultimately, we obtain an approximate candidate set $\left\{\theta^\star_{i}\right\}^{\tilde{I}}_{i=1}$ corresponding to the set of dominant eigenvalues $\left\{\lambda^\star_{i}\right\}_{i\in[I]}$.

\end{itemize}

\paragraph*{Step 4: Estimate multiplicity.} For each $\{\theta^\star_{i}\}_{i=1}^{\tilde{I}}$, we perform the singular value decomposition
\begin{align}
    {G}(\theta^\star_{i}) = U_i \Sigma_i V_i^\dagger,
\end{align}
and define
\begin{align}
    m_i = \#\left\{ j \;\middle|\; (\Sigma_i)_{j,j} > \tau \right\}
\end{align}
for some given threshold $\tau$, which gives the multiplicity associated with each candidate $\theta^\star_{i}$. If $m_i = 0$, then $\theta^\star_{i}$ corresponds to a spurious dominant eigenvalue and should be discarded.

\paragraph*{Pseudocode.}
\cref{fig:algo_tensor} summarizes the QFAMES workflow as a tensor network. The left/right blocks $U_l$ and $V_r$ prepare two families of initial states. Time evolution $e^{-\i Ht}$ induces cross-correlations between these families, which (via the generalized Hadamard test) yield the sampled signals $\mc{Z}_{l,r}(t_n)$. A time-domain filter $a_T(t)$ and a discrete Fourier transform produce frequency-domain matrices ${G}(\theta)$, which are then processed classically: a search-and-blocking step estimates the distinct dominant eigenvalues $\{\lambda_i^{\star}\}$,
and rank/conditioning tests on submatrices determine multiplicities $\{m_i\}$. An optional branch forms projected observable matrices and solves a generalized eigenvalue problem to obtain the spectrum of $O_{\mc{D}_i}$ for each $\lambda_i^{\star}$. The detailed algorithm for eigenvalue and multiplicity estimation is presented in \cref{alg:QFAMES}, while the observable estimation will be discussed in detail in \cref{sec:obs}.

\begin{figure*}
    \centering
    \includegraphics[width=0.9\linewidth]{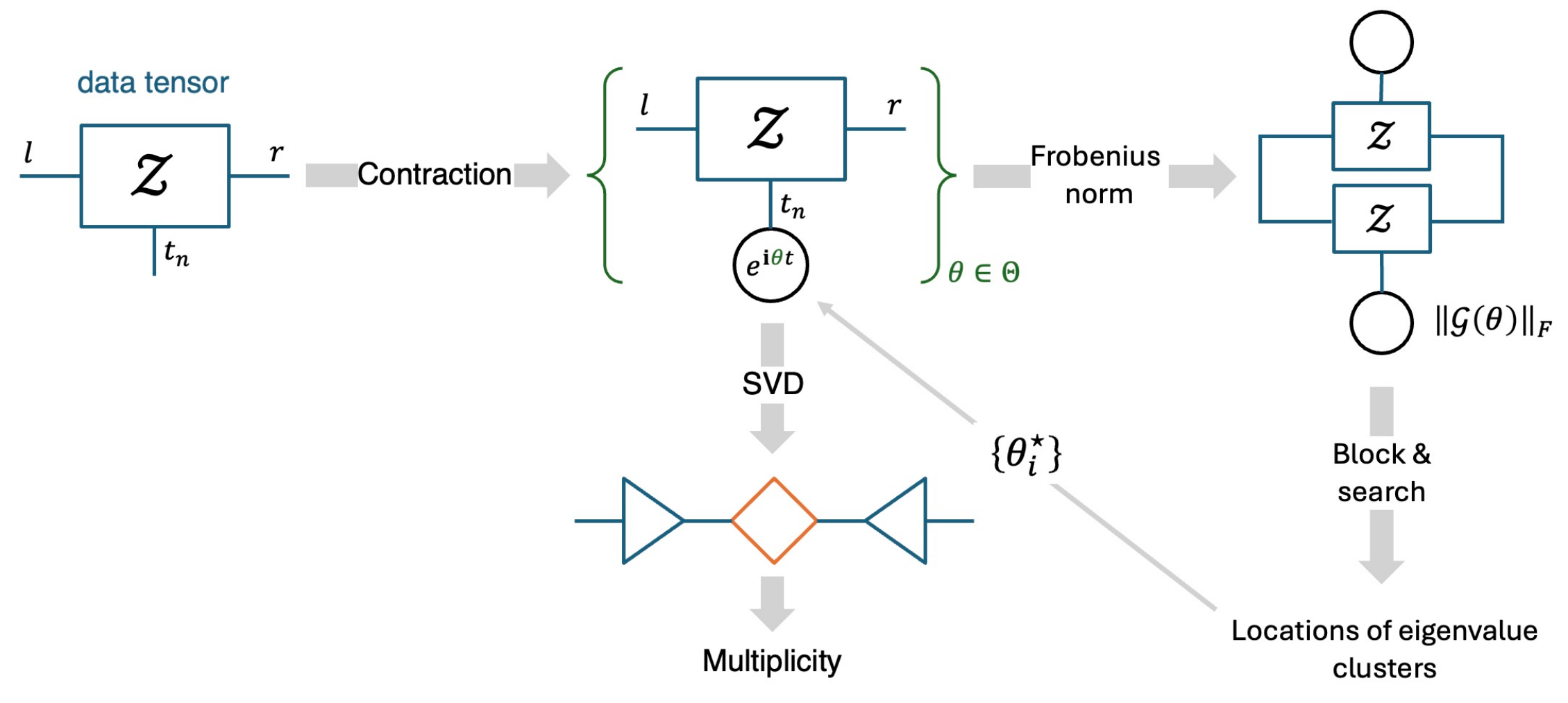}
    \caption{{Diagram of the QFAMES algorithm for post-processing the 3-tensor generated from quantum data of the generalized Hadamard test circuit. Left/right tensors $U_l$ and $V_r$ prepare initial-state families; the middle leg represents time evolution $e^{-\mathrm{i}Ht}$ and data acquisition of cross-correlators $\mc{Z}_{l,r}(t_n)$. A filter $a_T(t)$ and discrete Fourier transform yield $\mathcal{G}(\theta)$, which is post-processed by search-and-blocking to locate
    {dominant eigenvalues $\{\lambda_i^{\star}\}$}
    and by rank tests to determine multiplicities $\{m_i\}$. }}
    \label{fig:algo_tensor}
\end{figure*}

\begin{breakablealgorithm}
      \caption{Quantum Filtering and Analysis of Multiplicities in Eigenvalue Spectra (QFAMES)}
  \label{alg:QFAMES}
  \begin{algorithmic}[1]
  \State \textbf{Preparation:}

  Number of data pairs: $N$;

  Two initial state sets: $\{\phi_i\}_{i=1}^L$ and $\{\psi_j\}_{j=1}^R$;

  Filter parameter: $T$;

  Truncation parameter in $a_T^{\rm trunc}$: $\sigma$;

  Number of {dominant eigenvalues (guess)}: $\tilde{I}$;

  Singular value decomposition threshold: $\tau$;

  \multiline{
  Searching parameter: $q$ (we search with discrete energy step $q / T$);
  }

  \multiline{
  Block parameter: $\alpha$ (distinct dominant  eigenvalues are at least separated by $\alpha / T$);
  }

  \State \textbf{Running:}
  \State \textbf{Stage I: Estimate the location of dominant eigenvalues}
  \State \Comment{\textcolor{blue}{Step 1: Generate data}}
  \State Sample $\{t_n\}$ i.i.d. from the truncated Gaussian distribution $a_T^{\rm trunc}(t)$ defined in \cref{eqn:a_T}
  \For{$l=1$ to $L$}
    \For{$r$=1 to $R$}
        \For{$n=1$ to $N$}
            \State
            \multiline{
             Generate data $Z_{l,r,n}\in \{\pm 1\pm \i\}$ using the generalized Hadamard test with $(U_l, V_r, t_n)$
            }
        \EndFor
    \EndFor
  \EndFor
  \State \Comment{\textcolor{blue}{Step 2: Compute the filtered density function}}
  \State $J\gets \left\lfloor \frac{2\pi T}{q}\right\rfloor$.
    \State Generate discrete candidates: {$\theta_j\gets-\pi+\frac{jq}{T}$ for $j=0,1,\dots,J$.}
  \State Calculate
  \[
    \left\|G(\theta_j) \right\|_F\gets \left\|\frac{1}{N}\sum^N_{n=1}{Z_{:,:,n}}\exp(\i\theta_j t_n)\right\|_F\,,\quad 0\leq j\leq J.
  \]
  \State \Comment{\textcolor{blue}{Step 3: Search-and-block}}
  \State Block set: $\mathcal{B}_{1}\gets \emptyset$.
  \For{$i=1$ to $\tilde{I}$}
  \State $j_i=\mathrm{argmax}_{\theta_j\notin \mathcal{B}_{i}} \left\|G(\theta_j) \right\|_F$.
  \State $\theta^\star_i\gets\theta_{j_i}$.
  \State $\mathcal{B}_{i+1}\gets \mathcal{B}_{i}\cup \left(\theta_i^{\star}-\frac{\alpha}{T},\theta_i^{\star}+\frac{\alpha}{T}\right)$.\Comment{Block interval to avoid finding the same peak}
  \EndFor

  \State \textbf{Output:}
  \emph{Distinct dominant eigenvalues:}
  $\{\theta^\star_i\}^{\tilde{I}}_{i=1}$
  \State
  \State \textbf{Stage II: Estimate the multiplicity of each dominant eigenvalue}
  \State \Comment{\textcolor{blue}{Step 4: Estimate multiplicity}}
  \For{$i=1$ to $\tilde{I}$}
  \State Calculate SVD of the data matrix at each $\theta^\star_i$:
  \[
    U_i \Sigma_i V_i^\dagger=\text{SVD}\left(\frac{1}{N}\sum^N_{n=1}{Z_{:,:,n}}\exp(\i\theta^\star_i t_n)\right)
  \]
  \State $m_i = \#\left\{ j \;\middle|\; (\Sigma_i)_{j,j} > \tau \right\}$.
  \EndFor
  \State \textbf{Output:} \emph{Multiplicities:}  $\{m_i\}_{i=1}^{\tilde{I}}$.
    \end{algorithmic}
\end{breakablealgorithm}

In the above algorithm, $\sigma T$ denotes the maximal Hamiltonian simulation time used in the generalized Hadamard test, and $N$ is the number of repetitions for \textbf{each} entry estimation. Therefore, the total Hamiltonian simulation time is upper bounded by $LR\sigma TN$.
The importance of off-diagonal entries in the data tensor $Z$ for resolving spectral multiplicities is demonstrated through a concrete illustrative example in \cref{sec:illustrative_example}.

Besides Gaussian functions, {one} may also use the Kaiser windows and the discrete prolate spheroidal sequence (DPSS/Slepian) windows \cite{slepian1978prolate,kaiser2003use,pw93} to further suppress spectral leakage and sharpen cluster localization. We leave a detailed study of this to future work.

\subsection{Complexity analysis}\label{sec:complex}
In this section, we analyze the theoretical complexity of the QFAMES algorithm introduced in Section~\ref{sec:QFAMES}.
Recall the overlaps of the dominant eigenvalues and tails defined as \cref{eq:def_domination}:
\begin{equation}
    \begin{aligned}
        p_{\min}= &\min_{m \in \mc{D}}\sqrt{\left(\sum^L_{l=1}\left|\Phi_{l,m}\right|^2\right)\left({\sum^R_{r=1}}\left|\Psi_{{r},m}\right|^2\right)},\\
        p_{\rm tail}= &\sum_{m\notin\mc{D}} \sqrt{\left(\sum^L_{l=1}\left|\Phi_{l,m}\right|^2\right)\left({\sum^R_{r=1}}\left|\Psi_{{r},m}\right|^2\right)}\,,
    \end{aligned}
\end{equation}
The main theorem on the performance guarantees of the QFAMES algorithm is stated as follows:

\begin{theorem}\label{thm:main} Assume the uniform overlap condition (\cref{asp:linear_dep}) {holds} true,
 $p_{\rm tail}/p_{\min}$ is
sufficiently small, and the parameters in \cref{alg:QFAMES} satisfy the following conditions:
\begin{equation}
    \begin{aligned}
        & N=\widetilde{\Omega}(LRp^{-2}_{\rm tail}),
        \quad T=\widetilde{\Omega}(1/\Delta),\\
        & \sigma=\widetilde{\Omega}(1),
        \quad \wt{I}\geq I,
        \quad \tau=\Theta(p_{\rm tail}),\\
        & q=\mc{O}\left( \frac{p_{\rm tail}}{(1+\sigma){\sqrt{|\mc{D}|LR}}}\right),\\
        & \alpha=\Omega\left( \log\left( \frac{|\mc{D}|LR}{p_{\rm tail}}\right)\right), \quad \alpha=\mc{O}(\Delta T).
    \end{aligned}
\end{equation}

Given failure probability $\eta\in (0,1)$, {there exists a permutation $\pi:[I]\to[I]$ such that}, with probability at least $1-\eta$, we have
\begin{align}
    {\lambda^{\star}_{\pi(i)}} \in \left[\theta^\star_i - \epsilon, \, \theta^\star_i + \epsilon \right], \quad \forall\, 1 \leq i \leq I,
\end{align}
where the confidence interval length $\epsilon$  is given by
\begin{align}
    \label{eq:location_err}
    \epsilon = \widetilde{\mathcal{O}}\!\left( \frac{p_{\rm tail}}{p_{\min}} \cdot \frac{1}{T} \right),
\end{align}
and
\begin{align}
    m_i = {|\mathcal{D}_{\pi(i)}|}, \quad 1 \leq i \leq I,
\quad m_i = 0, \quad I < i \leq \widetilde{I}.
\end{align}

\end{theorem}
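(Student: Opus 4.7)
The plan is to decompose the argument into (i) a uniform concentration bound controlling $\|G(\theta)-\mathcal{G}(\theta)\|_F$ on the discrete grid, (ii) a structural analysis of $\mathcal{G}(\theta)$ separating the dominant contributions from the tail, (iii) a peak-shift analysis for the search-and-block step, and (iv) a singular-value perturbation bound for the multiplicities. Throughout I would treat the truncated filter $a_T^{\rm trunc}$ as the ideal Gaussian filter up to an $e^{-\Omega(\sigma^2)}$ error, absorbed in the $\widetilde{O}$ notation under the assumption $\sigma=\widetilde{\Omega}(1)$.

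First I would prove uniform concentration of $G(\theta)$. Since each sample $Z_{l,r,n}\in\{\pm 1\pm\mathbf{i}\}$ is bounded and unbiased, a Hoeffding bound applied entrywise, together with a union bound over the $L\times R$ entries and the $J=\lfloor 2\pi T/q\rfloor$ grid points, yields
\[
\max_j \|G(\theta_j)-\mathcal{G}(\theta_j)\|_F \;=\; \widetilde{O}\!\left(\sqrt{LR/N}\right) \;=\; \widetilde{O}(p_{\rm tail})
\]
with probability at least $1-\eta$, using $N=\widetilde{\Omega}(LR/p_{\rm tail}^2)$. Next I analyze the deterministic matrix $\mathcal{G}(\theta)=\Phi D(\theta)\Psi^{\dagger}$, where $D(\theta)$ is diagonal with entries that approximate $e^{-(\theta-\lambda_m)^2 T^2}$. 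At $\theta=\lambda_i^{\star}$, contributions from other dominant clusters are suppressed by $e^{-\Delta^2 T^2}=\widetilde{O}(p_{\rm tail})$ under $T=\widetilde{\Omega}(1/\Delta)$, and the tail eigenvalues contribute at most $p_{\rm tail}$ in Frobenius norm, so
\[
\mathcal{G}(\lambda_i^{\star}) \;=\; \Phi_{:,\mathcal{D}_i}\Psi_{:,\mathcal{D}_i}^{\dagger} + E_i, \qquad \|E_i\|_F=\widetilde{O}(p_{\rm tail}).
\]
The uniform overlap condition (\cref{asp:linear_dep}) lower-bounds $\sigma_{\min}(\Phi_{:,\mathcal{D}_i})$ and $\sigma_{\min}(\Psi_{:,\mathcal{D}_i})$ in terms of $p_{\min}$, implying that $\Phi_{:,\mathcal{D}_i}\Psi_{:,\mathcal{D}_i}^{\dagger}$ has exactly $|\mathcal{D}_i|$ nonzero singular values, each $\Omega(p_{\min})$.

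For location estimation I would combine these ingredients. At any $\theta$ lying at distance $\geq \sqrt{\alpha}/T$ from every $\lambda_i^{\star}$ (which is guaranteed outside the block set by $\alpha=\Omega(\log(|\mathcal{D}|LR/p_{\rm tail}))$), the Gaussian filter makes $\|\mathcal{G}(\theta)\|_F=\widetilde{O}(p_{\rm tail})$; together with $\alpha=\mathcal{O}(\Delta T)$, each block interval of width $2\alpha/T$ swallows a peak's entire Gaussian shoulder but still fits strictly between two neighboring dominant eigenvalues. Greedy peak selection therefore returns one candidate $\theta_i^{\star}$ near each true $\lambda_i^{\star}$, and the remaining $\widetilde{I}-I$ candidates fall in the low-signal region. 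For the quantitative bound I would apply a first-order optimality analysis to $\|G(\theta)\|_F^2$: near $\lambda_i^{\star}$ the clean signal has curvature $\Theta(p_{\min}^2|\mathcal{D}_i|T^2)$, while differentiating $G-\mathcal{G}$ in $\theta$ produces an extra factor of $t_n=O(T)$ from $e^{\mathbf{i}\theta t_n}$, giving the derivative-perturbation bound $\widetilde{O}(p_{\rm tail}\cdot T\cdot p_{\min}\sqrt{|\mathcal{D}_i|})$. Dividing perturbation by curvature yields an argmax shift of $\widetilde{O}(p_{\rm tail}/(p_{\min}T))$, matching \cref{eq:location_err} after absorbing the grid step $q/T$.

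For the multiplicities, Weyl's inequality gives $|\sigma_k(G(\theta_i^{\star}))-\sigma_k(\Phi_{:,\mathcal{D}_i}\Psi_{:,\mathcal{D}_i}^{\dagger})|\leq \|G(\theta_i^{\star})-\mathcal{G}(\theta_i^{\star})\|_F+\|E_i\|_F=\widetilde{O}(p_{\rm tail})$. Since the nonzero singular values of the clean matrix are $\Omega(p_{\min})$ and $p_{\rm tail}/p_{\min}$ is sufficiently small, the threshold $\tau=\Theta(p_{\rm tail})$ separates them cleanly: exactly $|\mathcal{D}_i|$ singular values of $G(\theta_i^{\star})$ exceed $\tau$ for $i\leq I$, while for spurious candidates ($I<i\leq\widetilde{I}$) the whole matrix has Frobenius norm $\widetilde{O}(p_{\rm tail})$, giving $m_i=0$. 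The hardest part will be the search-and-block correctness together with the quantitative peak-shift bound: one must coordinate $\alpha$, $q$, and $T$ so that each block contains exactly one true peak, no two peaks are swallowed by the same block, and the noisy argmax still sits within the claimed $\widetilde{O}(p_{\rm tail}/(p_{\min}T))$ window. The $1/T$ (rather than constant) gain requires carefully extracting the $\Theta(T^2)$ curvature of the filtered signal while uniformly bounding the noise derivative via a concentration argument on the Gaussian-weighted linear functionals of the $Z_{l,r,n}$.
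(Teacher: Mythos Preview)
Your proposal is correct and follows essentially the same four-part architecture as the paper's proof (concentration of $G-\mathcal{G}$, structural decomposition of $\mathcal{G}$, peak-shift analysis, Weyl for multiplicities). Two points of difference are worth flagging. First, the paper runs the location estimate in \emph{two stages}: a coarse bound $|\theta_j^\star-\lambda_{\pi_j}^\star|\le C/T$ with $C=O(1)$ obtained by directly comparing $\mathcal{W}(\theta_j^\star)$ against $\mathcal{W}$ at the nearest grid point to $\lambda_{\pi_j}^\star$, and only then the refined $\widetilde O(p_{\rm tail}/(p_{\min}T))$ bound. Your rough localization ``distance $\ge \sqrt{\alpha}/T$ forces $\|\mathcal{G}\|_F=\widetilde O(p_{\rm tail})$'' only pins $\theta_j^\star$ to within $\sqrt{\alpha}/T$, and since $\alpha$ is polylogarithmic this does not yet place you in the concave regime of the Gaussian where a curvature argument is valid; you would need one more comparison step (exactly the paper's Step~1) before your second-order analysis applies. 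Second, for the refined step the paper does not differentiate: it expands $\mathcal{W}^2(\theta_1^\star)-\mathcal{W}^2(\widetilde\theta_1^\star)\ge 0$ via $\|A+B+C+\mathcal{E}\|_F^2$ and controls the cross-term difference using Lipschitz bounds $\|A(\theta)-A(\theta')\|_F\lesssim T|\theta-\theta'|$ and, crucially, a separate concentration estimate $\|\mathcal{E}(\theta)-\mathcal{E}(\theta')\|_F\lesssim p_{\rm tail}\sigma T|\theta-\theta'|$ (their Lemma~D.1, \cref{eqn:random_error_2}). This yields a quadratic inequality in $\delta_\theta=T|\theta_1^\star-\lambda_{\pi_1}^\star|$ whose positive root gives the claimed bound. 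Your curvature-over-derivative heuristic is the continuous analogue of exactly this computation and relies on the same Lipschitz control of $\mathcal{E}$; either route works, but the discrete comparison is slightly cleaner given that the optimization is over a grid.
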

We defer a more general and rigorous version of the above theorem, along with its proof, to~\cref{sec:rigorous_version}. The above theorem shows that \cref{alg:QFAMES} can accurately estimate the locations of the dominant eigenvalues and identify their multiplicities. Note that the conditions on $T, N, \sigma, q$, and $\alpha$ are similar to those in~\cite[Theorem 3.2]{ding2024quantum} (see the detailed discussion in~\cref{rem:thm_condition}), and the choice of $\tau$ plays an important role in filtering out non-dominant eigenvalues.~{In practice, exact values for $p_{\min}$ and $p_{\rm tail}$ are often unknown. The parameters $T, N, \sigma, q$, and $\alpha$ are flexible, typically subject only to upper or lower bounds, whereas selecting $\tau$ requires care to avoid overestimating or underestimating multiplicity. When $p_{\rm tail}$ is unknown, two heuristic methods for selecting the SVD threshold $\tau$ are as follows: (1) Noise-based selection: set $\tau \gtrsim \sqrt{LR/N}$. Since each entry of $G(\theta)$ is estimated from $N$ samples, the entry-wise statistical fluctuation is of order $N^{-1/2}$, so the Frobenius norm of the corresponding noise matrix is of order $\sqrt{LR/N}$; (2) Singular value drop: in low-noise regimes, the effective rank deficiency of $G(\theta)$ typically yields a visible separation between the leading singular values and the remainder; in this case, $\tau$ can be chosen within this gap.}

It can be directly obtained from \cref{thm:main} that the maximal evolution time {$T_{\max} = \sigma T = \tilde{\mc{O}}(p_{\rm tail}p_{\min}^{-1}\epsilon^{-1})$}, and the total evolution time {$T_{\rm total} = \mc{O} (NLR T_{\max}) = \tilde{\mc{O}} (L^2 R^2 (p_{\rm tail}p_{\min})^{-1} \epsilon^{-1})$}. When $p_{\rm tail}$ is close to zero and~{$p_{\min}=\Omega(1)$}, however, the maximal evolution time will instead reach the bound from below as $T_{\max} = \tilde{\Omega}(\Delta^{-1})$, {where $\Delta$ is the spectral gap}. In this case, we can use a looser constant $c = \tilde{\Theta} (\Delta^{-1}\epsilon)$ to replace $p_{\rm tail}$ in the scalings and the total evolution time will consequently scale as $T_{\rm total}=\wt{\mc{O}}(L^2 R^2\Delta\epsilon^{-2})$.
Note that we aim to resolve the gap $\Delta$ between distinct dominant eigenvalues, which implicitly indicates that $\Delta > \epsilon$.

Furthermore, our algorithm also applies to the case of approximately degenerate dominant eigenvalues, where a cluster of such eigenvalues lies within a narrow interval of width $\delta \ll \Delta$ around $\lambda^\star_i$. In this setting, the algorithm can efficiently identify the locations of the clusters and determine the exact number of dominant eigenvalues in each. The corresponding assumptions and results are presented in detail in~\cref{sec:rigorous_version}.

\section{Extension to observable measurements}~\label{sec:obs}

Within each located {dominant eigenvalues}, other than multiplicity, more meaningful information can be encoded in expectation values of physical observables. In this section, we will show that it can also be extractable under the QFAMES framework.

The main idea is to slightly modify our circuit to include the observables. Given an observable $O$, it can always be rewritten as the weighted sum of unitary Hermitian operators such as Pauli matrices. We can thus assume $O$ to be unitary without loss of generality.
The measurement outcomes $Z_{l,r,n}^{O}$ of the modified circuit
\begin{widetext}
\begin{align}
    \label{eq:generalized_hadamard_obs}
    \Qcircuit @C=1em @R=1em {
    \lstick{|0\rangle} & \gate{\rm H} & \ctrlo{1} & \ctrl{1} & \ctrlo{1} & \ctrl{1} & \ctrl{1} &  \gate{W} & \gate{\rm H} & \meter \\
    \lstick{\ket{0}}   & \qw          & \gate{U_l} & \gate{V_r}       & \gate{e^{-\mathbf{i}t'_{n} H}} & \gate{e^{-\mathbf{i}t_n H}}  & \gate{O} &\qw & \qw & \qw
    }
\end{align}
\end{widetext}
provides an unbiased estimator of
\begin{align}
    \mathbb{E}[Z_{l,r,n}^{O}]= \mc{Z}_{l,r}^{O}(t_n, t'_{n}) :=  \langle \phi_l|e^{\mathbf{i} H t'_{n}} O e^{-\mathbf{i}Ht_n} |\psi_r\rangle\,.
    \label{eq:obs_measure}
\end{align}
Analogous to the QFAMES algorithm, the next step is to {contract the time dimension} of the 3-tensor $Z_{l,r,n}^{O}$ with the complex exponential vector
\begin{equation}
    \begin{aligned}
     &\frac{1}{N}\begin{bmatrix}
        e^{\i \theta (t_1 - t_1')} & e^{\i \theta (t_2 - t_2')} & \cdots & e^{\i \theta (t_N - t_N')}
    \end{bmatrix}\,,
\end{aligned}
\end{equation}
where both $\left\{t_n\right\}$ and ${\left\{t'_{n}\right\}}$ are sampled independently from the truncated Gaussian density distribution $a_{T/\sqrt{2}}^{\rm trunc}(t)$ defined in \cref{eqn:a_T}. This results in an $L$-by-$R$ matrix
\begin{align}\label{eqn:G_O}
    G^{O}(\theta)_{l,r} = \frac{1}{N}\sum_{n=1}^N Z^{O}_{l,r,n} e^{\i \theta (t_n - t_n')}~~~\forall l\in [L], r\in [R]\,,
\end{align}
whose expectation values are
\begin{equation}
    \begin{aligned}
    \label{eq:exp_go}
    \mathbb{E}[G^{O}(\theta)_{l,r}]=&\mathbb{E}\left[\langle \phi_l|e^{\mathbf{i} (H-\theta) t'_{n}} O e^{-\mathbf{i}(H-\theta)t_n} |\psi_r\rangle\right]\\
    =&\langle \phi_l|\mathbb{E}\left[e^{\mathbf{i} (H-\theta) t'_{n}}\right] O \mathbb{E}\left[e^{-\mathbf{i}(H-\theta)t_n} \right]|\psi_r\rangle\\
    {\approx} &\braket{\phi_l | e^{-(\theta - H)^2T^2/2} O e^{-(\theta - H)^2 T^2/2} |\psi_r  } \\
    = & \Phi \cdot \diag\left(\left\{e^{-(\theta-\lambda_k)^2 T^2/2}\right\}_{k\in [M]}\right) \cdot O_{[M]} \\
    &\cdot \diag\left(\left\{e^{-(\theta-\lambda_{k'})^2 T^2/2}\right\}_{k'\in [M]}\right) \cdot \Psi^\dagger\,,
    \end{aligned}
\end{equation}
where the matrix $O_{[M]}:=
\left(\braket{E_k | O | E_{k'}}\right)_{k,k'\in[M]}$ is written in the Hamiltonian eigenstate basis. In the second equality, we use the fact that $t_n,t'_n$ are independent random variables. If we use the Gaussian density distribution {without truncation}, in \cref{eq:exp_go} the {approximately equal sign should be instead an equal sign}.

Let us focus on degenerate dominant eigenstates located at $\theta_i$ given by \cref{alg:QFAMES} with multiplicity $m_i$. Recall that we have obtained the SVD $G(\theta_i) = U_i \Sigma_i V_i^{\dagger}$, and can thus restrict the matrices $G(\theta_i)$ and $G^{O}(\theta_i)$ to the large singular value subspace with cutoff threshold $\tau$:
\begin{align}\label{eqn:truncated_matrix}
    \tilde{G}(\theta_i) & =\left((U_i)_{:,[m_i]}\right)^{\dagger} \cdot  G(\theta_i) \cdot  \left((V_i)_{:,[m_i]}\right) =(\Sigma_i)_{[m_i]}\\
    \tilde{G}^{O}(\theta_i) & =\left((U_i)_{:,[m_i]}\right)^{\dagger} \cdot  G^{O}(\theta_i) \cdot  \left((V_i)_{:,[m_i]}\right)\,,
\end{align}
where $(\Sigma_i)_{[m_i]}$ is the diagonal matrix of $m_i$ singular values of $G(\theta_i)$ that are larger than the SVD threshold $\tau$, and $(U_i)_{:,[m_i]}$ and $(V_i)_{:,[m_i]}$ are the submatrices of $U_i$ and $V_i$ that contain the corresponding columns. The rest is to solve the generalized eigenvalue problem
\begin{align}
    \tilde{G}^O(\theta_i) w = \lambda^O \tilde{G}(\theta_i)w
    \label{eq:generalized_eig_prob}
\end{align}
with regard to a low dimension, full rank matrix $G(\theta_i)$, and the resulting eigenvalues $\lambda^O_1,\cdots,\lambda_{m_i}^O$ will be approximately those of the matrix
\begin{align}\label{eqn:O_D_i}
    O_{\mc{D}_i}:= \left(\braket{E_k | O | E_{k'}}\right)_{k,k'\in \mc{D}_i},\quad \mc{D}_i = \left\{ k\in \mc{D}: \lambda_k \in \mc{I}_i \right\},
\end{align}
which is the projection of $O_{[M]}$ onto the subspace spanned by the dominant eigenstates indexed by $\mc{D}_i$ with eigenvalue $\lambda_i^{\star}$.
{We note that when the observable $O$ and the Hamiltonian do not commute, there is not a one-to-one correspondence between the generalized eigenvalues $\lambda_k^O$ and the Hamiltonian eigenstate expectations $\braket{E_k | O |E_k}$. Instead, it provides the possible range of expectation values of observable $O$ within the corresponding dominant eigenvector subspace $\mc{E}_i$, which is $[\min_{k} \lambda^O_k, \max_k \lambda^O_k]$. }

The correctness of this result can be justified as follows. When $T$ is large enough and the random noise is negligible, we have
\begin{align}
    G(\theta_i)\approx {\Phi_{:,\mc{D}_i}} \cdot \diag\left(\left\{e^{-(\theta_i-\lambda_k)^2 T^2}\right\}_{k\in {\mc{D}_i}}\right)  \cdot ({\Psi_{:,\mc{D}_i}})^\dagger\,.
\end{align}
Thus, there must exists two invertible matries $P_U\in\mathbb{C}^{|m_i|\times |m_i|}$ and $P_V\in\mathbb{C}^{|m_i|\times |m_i|}$ such that
\begin{align}
    {\Phi_{:,\mc{D}_i}}=(U_i)_{:,[m_i]}P_U,\quad {\Psi_{:,\mc{D}_i}}=(V_i)_{:,[m_i]}P_V\,.
\end{align}
Plugging this into the formula of $\tilde{G}^{O}(\theta_i)$ and $\tilde{G}(\theta_i)$, we obtain
\begin{widetext}
    \begin{align}
    \tilde{G}^{O}(\theta_i)=P_U \cdot \diag\left(\left\{e^{-(\theta_i-\lambda_k)^2 T^2/2}\right\}_{k\in {\mc{D}_i}}\right) \cdot O_{\mc{D}_i} \cdot \diag\left(\left\{e^{-(\theta_i-\lambda_{k'})^2 T^2/2}\right\}_{k'\in {\mc{D}_i}}\right) \cdot P^\dagger_V,
\end{align}
and
\begin{align}
    \tilde{G}(\theta_i)=\underbrace{P_U \cdot \diag\left(\left\{e^{-(\theta_i-\lambda_k)^2 T^2/2}\right\}_{k\in {\mc{D}_i}}\right)}_{\text{full rank}} \cdot \underbrace{\diag\left(\left\{e^{-(\theta_i-\lambda_{k'})^2 T^2/2}\right\}_{k'\in {\mc{D}_i}}\right) \cdot P^\dagger_V}_{\text{full rank}}\,.
\end{align}
\end{widetext}
Compare the above two formulas, it is straightforward to see that the generalized eigenvalue problem in Eq.~(\ref{eq:generalized_eig_prob}) will give eigenvalues of $O_{\mc{D}_i}$.

\begin{breakablealgorithm}
\caption{Extracting observable expectation values under the QFAMES framework}
  \label{alg:QFAMES_obs}
  \begin{algorithmic}[1]
  \State \textbf{Preparation:}

  Number of data pairs: $N$;

  Two initial state sets: $\{\phi_i\}_{i=1}^L$ and $\{\psi_j\}_{j=1}^R$;

  Filter parameter: $T$;

  Truncation parameter in $a_T^{\rm trunc}$: $\sigma$;

  Unitary observable: $O$;

  \hspace*{-2em}\textbf{Output from \cref{alg:QFAMES}:}

  Location of degenerate dominant eigenvalue:
  $\theta_i$;

  Degeneracy: $m_i$;

  SVD result of $G(\theta_i)$: $G(\theta_i) = U_i \Sigma_i V_i^{\dagger}$.

  \State \textbf{Running:}
  \State \Comment{\textcolor{blue}{Step 1: Generate data}}
    \State Sample $\{t_n\}$ and $\{t'_{n}\}$ i.i.d. from the truncated Gaussian distribution {$a_{T/\sqrt{2}}^{\rm trunc}(t)$} defined as Eq.~\eqref{eqn:a_T}
  \For{$l=1$ to $L$}
    \For{$r$=1 to $R$}
        \For{$n=1$ to $N$}
            \State
            \multiline{Generate $Z^O_{l,r,n}\in \{\pm 1\pm \i\}$ using generalized Hadamard test {in \cref{eq:generalized_hadamard_obs}} with $(U_l, V_r, t_n, t'_{n}, O)$}
        \EndFor
    \EndFor
  \EndFor
  \State \Comment{\textcolor{blue}{Step 2: Solve the generalized eigenvalue problem}}
  {
  \State Calculate
  \[
    G^O(\theta_i) \gets \frac{1}{N}\sum^N_{n=1}{Z^O_{:,:,n}}\exp(\i\theta_i (t_n-t'_n)).
  \]
}
  \State Calculate matrices
    \begin{align*}
        \tilde{G}^{O}(\theta_i) & =\left((U_i)_{:,[m_i]}\right)^{\dagger} \cdot  G^{O}(\theta_i) \cdot  \left((V_i)_{:,[m_i]}\right)\\
        \tilde{G}(\theta_i) & =\left((U_i)_{:,[m_i]}\right)^{\dagger} \cdot  G(\theta_i) \cdot  \left((V_i)_{:,[m_i]}\right) =(\Sigma_i)_{[m_i]}\,,
    \end{align*}
  \State Solve the generalized eigenvalue problem
    \begin{align*}
        \tilde{G}^O(\theta_i) w = \lambda^O \tilde{G}(\theta_i)w.
    \end{align*}
  \State \textbf{Output:} \emph{Eigenvalues:}  $\{\lambda^O_k\}_{k=1}^{m_i}$ that provide estimations of eigenvalues of $O_{\mc{D}_i}$.

    \end{algorithmic}
\end{breakablealgorithm}

Below, we will provide an informal resource analysis for the observable estimation algorithm. For simplicity, we focus on the second regime we discussed at the end of \cref{sec:complex}.  The proof is given in Appendix~\ref{sec:ana_ob}.
\begin{proposition}\label{prop:observable}
    In additional to the assumption in \cref{thm:main}, assume that ${|\mc{D}_i|},L,R=\mathcal{O}(1)$, access to a Pauli observable $O$, and
    \begin{enumerate}
        \item $p_{\min} = \Omega(1)$, which is reflected by the SVD threshold $\tau$ in the algorithm;
        \item there are no tail eigenvalues, i.e., $p_{\rm tail} = 0$.
    \end{enumerate}
    To achieve $\epsilon_O$-accuracy in estimating eigenvalues of $O_{\mc{D}_i}$, it suffices to choose
    \begin{align}
        T_{\max}=\wt{\mc{O}}(\Delta^{-1}), \quad
        N=\tilde{\Omega}(\epsilon_O^{-2} ),
    \end{align}
    and the total evolution time
    \begin{align}
        T_{\rm total} = \wt{\mc{O}}(NLRT_{\max}) =
        \wt{\mc{O}}(\Delta^{-1}\epsilon_O^{-2}).
    \end{align}
\end{proposition}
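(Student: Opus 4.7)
The plan is to decompose the error in the generalized eigenvalue problem \cref{eq:generalized_eig_prob} into three sources and bound each by $\mathcal{O}(\epsilon_O)$: (i) a \emph{filter bias} from finite $T$ and truncation parameter $\sigma$; (ii) a \emph{statistical variance} from estimating each entry of $G^O(\theta_i)$ and $G(\theta_i)$ with $N$ Monte Carlo samples; and (iii) a \emph{projection error} from the fact that the truncated SVD basis $(U_i)_{:,[m_i]},(V_i)_{:,[m_i]}$ is built from the noisy $G(\theta_i)$ rather than its expectation. Under the simplifying hypotheses $p_{\rm tail}=0$ and $p_{\min},|\mc{D}_i|,L,R=\mathcal{O}(1)$, the ideal pencil $(\tilde G(\theta_i),\tilde G^O(\theta_i))$ is equivalent to $(I,O_{\mc{D}_i})$ via the invertible matrices $P_U,P_V$ introduced just before \cref{alg:QFAMES_obs}, and the uniform overlap assumption ensures that $P_U,P_V$ themselves have $\mathcal{O}(1)$ condition number. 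Hence any $\mathcal{O}(\epsilon_O)$ spectral-norm perturbation of the pencil propagates to an $\mathcal{O}(\epsilon_O)$ perturbation of its generalized eigenvalues.

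\textbf{Bias and variance bounds.} By \cref{eq:exp_go}, $\mathbb{E}[G^O(\theta_i)] = \Phi\,D\,O_{[M]}\,D\,\Psi^\dagger$ with diagonal filter $D_{kk}=\exp(-(\theta_i-\lambda_k)^2 T^2/2)$, up to an $\mathcal{O}(e^{-\sigma^2/4})$ truncation error from $a_{T/\sqrt{2}}^{\rm trunc}$. Since $p_{\rm tail}=0$ and distinct clusters are separated by $\Delta$, all filter factors outside $\mc{D}_i$ are suppressed by $\exp(-\Omega(\Delta^2 T^2))$; choosing $T=\tilde\Omega(\Delta^{-1})$ and $\sigma=\tilde\Omega(1)$ drives the bias of both $G^O(\theta_i)$ and $G(\theta_i)$ to $\mathcal{O}(\epsilon_O)$ in spectral norm. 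For the variance, each sample $Z^O_{l,r,n}\in\{\pm 1\pm \i\}$ is bounded, so Hoeffding's inequality yields $|G^O(\theta_i)_{l,r}-\mathbb{E}[G^O(\theta_i)_{l,r}]|=\tilde{\mathcal{O}}(N^{-1/2})$ with high probability, and a union bound over the $\mathcal{O}(1)$ entries (and over both matrices) converts this into a spectral-norm error of the same order. Setting $N=\tilde\Omega(\epsilon_O^{-2})$ makes the total perturbation of $G^O(\theta_i)$ and $G(\theta_i)$ from their ideal forms at most $\mathcal{O}(\epsilon_O)$.

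\textbf{Projection, resource count, and main obstacle.} The $m_i$ nonzero singular values of $\mathbb{E}[G(\theta_i)]$ are $\Omega(p_{\min})=\Omega(1)$ while the rest vanish (as $p_{\rm tail}=0$), so a threshold $\tau$ chosen in $\tilde\Omega(\epsilon_O)<\tau<\mathcal{O}(p_{\min})$ correctly recovers $m_i$ with high probability, and Wedin's $\sin\Theta$ theorem yields an $\mathcal{O}(\epsilon_O)$ perturbation of the selected singular subspaces. Propagating this through the projections in \cref{eqn:truncated_matrix} and applying the Stewart--Sun perturbation bound for regular pencils gives $|\lambda^O_k-\lambda^{O,\star}_k|=\mathcal{O}(\epsilon_O)$ for each $k$, where $\lambda^{O,\star}_k$ denotes the $k$-th exact eigenvalue of $O_{\mc{D}_i}$. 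The resource count is then immediate: $T_{\max}=\sigma T=\tilde{\mathcal{O}}(\Delta^{-1})$ and $T_{\rm total}=NLR T_{\max}=\tilde{\mathcal{O}}(\Delta^{-1}\epsilon_O^{-2})$. The main obstacle is step (iii): since $(U_i)_{:,[m_i]},(V_i)_{:,[m_i]}$ are data-dependent random quantities, one must verify that with high probability (a) the numerical rank test cleanly separates the $m_i$ signal singular values from noise despite the $\mathcal{O}(\epsilon_O)$ perturbation, and (b) the perturbed $\tilde G(\theta_i)$ retains smallest singular value $\Omega(1)$ so that the pencil remains well-conditioned. The remaining ingredients---entrywise Hoeffding concentration and a standard Gaussian tail estimate---are routine.
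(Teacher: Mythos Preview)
Your proposal is correct and follows the same skeleton as the paper's proof in \cref{sec:ana_ob}: bound filter leakage via $T=\tilde\Omega(\Delta^{-1})$, bound statistical error via Hoeffding-type concentration, control the SVD projection via singular-value/subspace perturbation, and finish with generalized-eigenvalue perturbation. The one structural difference is how the location error of $\theta_i$ is handled. The paper first re-derives (in the $p_{\rm tail}=0$ regime, where the bound of \cref{thm:main} degenerates) a coarse estimate $|\theta_i^\star-\lambda_i^\star|\le\Delta/4$ and then \emph{refines} it to $|\theta_i^\star-\lambda_i^\star|=\mathcal{O}\big(T^{-1}(N^{-1/2}+e^{-\Delta^2T^2/32})\big)$, because it compares $G(\theta_i^\star)$ directly to the \emph{unscaled} target $G_{\rm exact}=\Phi_{:,\mc{D}_i}\Psi_{:,\mc{D}_i}^\dagger$; this needs the within-cluster filter factor $c=\exp(-(\theta_i^\star-\lambda_i^\star)^2T^2)$ to be $1-\mathcal{O}(\epsilon_O)$, not merely $\Omega(1)$. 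Your route is slightly more economical: since the same scalar $c$ multiplies both members of the ideal pencil, it cancels in the generalized eigenvalue problem, so you only need $c=\Omega(1)$. However, you should make that step explicit rather than implicit---your claim that ``the $m_i$ nonzero singular values of $\mathbb{E}[G(\theta_i)]$ are $\Omega(p_{\min})$'' and that ``filter factors outside $\mc{D}_i$ are suppressed by $\exp(-\Omega(\Delta^2T^2))$'' both presuppose $|\theta_i-\lambda_i^\star|=\mathcal{O}(1/T)$, which must be justified by invoking Step~1 of the proof of \cref{thm:non_orthogonal_location} (the main theorem's $\epsilon$-bound is vacuous at $p_{\rm tail}=0$).
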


{We note that the constant in~\cref{prop:observable} depends on polynomially on $p^{-1}_{\min}$, as seen from the proof in Appendix~\ref{sec:ana_ob}. Here, we assume $p_{\min} = \Omega(1)$ to avoid this amplification factor and only focus on the scaling with respect to $\epsilon_O$ and $\Delta$.} It is instructive to contrast the generalized eigenvalue problem in \cref{eq:generalized_eig_prob} with those arising in quantum subspace methods for eigenvalue estimation~\cite{Parrish2019QuantumFD,Huggins2020,Stair2020,Seki2021,Epperly2022,Baek2023,Kirby2024,Yoshioka2025}. According to~\cite[Theorem 2.7]{Epperly2022}, if each entry of the projected Hamiltonian and overlap matrices is estimated using $N_s$ samples, then the statistical error per entry scales as $\mathcal{O}(N_s^{-1/2})$. The corresponding error in the eigenvalue of interest scales as $\mathcal{O}\left(N_s^{-1/(2(1+\gamma))}\right)$, where $0 \le \gamma \le \tfrac{1}{2}$ arises from a weighted geometric mean inequality for the projected matrices. Thus, the convergence can be slower than the shot-noise-limited scaling $\mathcal{O}(N_s^{-1/2})$ whenever $\gamma>0$. Moreover, the error bound in~\cite{Epperly2022} includes additional amplification factors that depend polynomially on the subspace dimension $D$ {(the dimension of the data matrix)}; in practice, $D$ is therefore typically kept small.
{In our setting, the generalized eigenvalue problem in \cref{eq:generalized_eig_prob} involves an $L\times R$ matrix, so the relevant dimension is $D=\max(L,R)$. In the regime of \cref{prop:observable}, $L$ and $R$ are constants, and improved accuracy is obtained by increasing the maximal evolution time $T_{\max}$ and the number of samples used to estimate each matrix entry, rather than by enlarging the data matrix dimension.}

\section{Ancilla-free alternative to the generalized Hadamard test}~\label{sec:ancilla_free}

In the QFAMES algorithm, the generalized Hadamard test circuits \cref{eq:generalized_hadamard} and \cref{eq:generalized_hadamard_obs} are used to generate the quantum data  $Z_{l,r,n}$, and for observables, {$Z^O_{l,r,n}$}, which requires controlled Hamiltonian evolution and initial state preparation circuits. Such controlled operations can be challenging to implement on near-term and even early fault-tolerant quantum devices. To address this limitation, several alternatives to the generalized Hadamard test have recently been proposed that are either ancilla-free~\cite{dong2022ground,Yang2024,Clinton2024,Wang2025,Yi2025}, or substantially reduce {the} requirements of controlled operations~\cite{Schiffer2025}. The method in \cite{Yang2024} is particularly suitable for QFAMES, as it
\begin{enumerate}
    \item exactly allows to measure ${\mc Z}_{l,r}(t)=\braket{\phi_l | e^{-\mathbf{i}Ht_n} | \psi_r}$ with different left and right initial states $\ket{\phi_l}$ and $\ket{\psi_r}$;
    \item can produce the data for $t$ in the time interval $[0,\sigma T]$ in a single execution of the algorithm;
    \item can be easily generalized to the cases involving multiple time evolutions, thereby allowing access to measuring {${\mc Z}^O_{l,r}(t,t')$} in the observable version of QFAMES.
\end{enumerate}

Below we will briefly introduce this algorithm.
A key observation is that the main difficulty of measuring ${\mc{Z}}_{l,r}(t) = r(t)e^{\mathbf{i}{\varphi}(t)}$ lies in the phase  $\varphi(t)$ rather than the absolute value $r(t)$, and these two quantities are related via complex analysis. Notice that for a bounded Hamiltonian, the analytic continuation ${\mc Z}_{l,r}(z) = \braket{\phi_l | e^{-\i H z } | {\psi_r}}$ is a holomorphic function on the complex plane $z = t - \mathbf{i} \beta$, and so is $\ln {\mc Z}_{l,r}(z) = \ln r(z) + \mathbf{i} \varphi(z)$ when ${\mc Z}_{l,r}(z)\neq 0$. Therefore, the Cauchy-Riemann equation connects the absolute value and phase as
\begin{align}
    \frac{\mathrm{d} \varphi(t)}{\mathrm{d} t} =
    \frac{\partial \ln r(z)}{\partial \beta}\Bigg\vert_{\beta=0} \nonumber
    \approx & \frac{1}{2h}\left[ \ln r(t-\mathbf{i}h) - \ln r(t+\mathbf{i}h) \right].
    \label{eq:phase}
\end{align}

The last step is the finite difference approximation and $h$ is a small constant. Assuming knowledge of $\varphi(0)$, one can obtain the phases $\varphi(t)$ for $t$ in the desired interval by integrating \cref{eq:phase}. Thus it is only required to measure the quantities $r(t) = \left|\braket{\phi_l|e^{-\mathbf{i}Ht} |\psi_r}\right|$ and $r(t\pm \mathbf{i}h) = \left|\braket{\phi_l|e^{-\mathbf{i}Ht} e^{\pm hH}|\psi_r}\right|$. The corresponding circuits are
\begin{align}
    \Qcircuit @C=1em @R=1em {
    \lstick{\ket{0}}   & \gate{V_r}   & \gate{e^{-\mathbf{i}t H}}   & \gate{U_l^{\dagger}} & \meterB{\ket{0}} & \qw
    } \\
    \Qcircuit @C=1em @R=1em {
    \lstick{\ket{0}}   & \gate{V_r}   &  \gate{ e^{\pm hH} / c_{\psi_r,\pm}} & \gate{e^{-\mathbf{i}t H}}   & \gate{U_l^{\dagger}} & \meterB{\ket{0}} & \qw
    }
\end{align}
which provide unbiased estimations of $r(t)^2$ and ${r(t\pm \mathbf{i}h)^2} / {c_{\psi_r,\pm}^2}$, respectively.
In addition to the necessary state preparations and real-time evolutions, these circuits require at most \emph{a single short step} of imaginary time evolution to one side of the initial states, and the last step is a projective measurement onto computational basis state $\ket{0}$.

Note that the imaginary time evolution is non-unitary and thus a pre-computed normalization factor $c_{\psi_r,\pm} = \sqrt{\braket{\psi_{r} | e^{\pm 2h H} |\psi_r}}$ is required in the quantum circuit. When the initial state $\ket{\psi_r} = V_r\ket{0}$ has short correlation length and the Hamiltonian $H$ is local, the single step of imaginary time evolution can be approximated by local unitary evolution operators times $c_{\psi_r,\pm}$, as shown in the quantum imaginary time evolution (QITE) algorithm~\cite{Motta2020}. If it is not the case, there also exist alternatives to QITE. For example, one can introduce a single auxiliary qubit to implement the single-step imaginary-time evolution via probabilistic imaginary-time evolution (PITE) algorithms~\cite{Lin2021, Kosugi2022}, while still avoiding the requirement for controlled long real-time evolutions.

To measure ${\mc{Z}}^O_{l,r}(t,t')$, we just need to view $t$ and $t'$ as two separate variables and perform the algorithm from both sides. The modified circuits are as follows:
\begin{widetext}
    \begin{align}
    \Qcircuit @C=1em @R=1em {
    \lstick{\ket{0}}   & \gate{V_r}   & \gate{e^{-\mathbf{i}t H}}  & \gate{O} & \gate{e^{\mathbf{i}t' H}} & \gate{U_l^{\dagger}} & \meterB{\ket{0}} & \qw
    } & \quad \left|  \mc{Z}^O_{l,r}(t,t') \right|^2\\
    \Qcircuit @C=1em @R=1em {
    \lstick{\ket{0}}   & \gate{V_r}   &  \gate{ {e^{\pm hH}} / {c_{\psi_r,\pm}} }   & \gate{e^{-\mathbf{i}t H}} & \gate{O} & \gate{e^{\mathbf{i}t' H}}   & \gate{U_l^{\dagger}} & \meterB{\ket{0}} & \qw
    }  & \quad  \left| \mc{Z}^O_{l,r}(t\pm \i h,t') \right|^2 / c_{\psi_r,\pm}^2\\
    \Qcircuit @C=1em @R=1em {
    \lstick{\ket{0}}   & \gate{V_r}   & \gate{e^{-\mathbf{i}t H}}  & \gate{O} & \gate{e^{\mathbf{i}t' H}} & \gate{ {e^{\pm hH}} / {c_{\phi_l,\pm}} }  & \gate{U_l^{\dagger}} & \meterB{\ket{0}} & \qw
    }  & \quad \left| \mc{Z}^O_{l,r}(t,t'\pm \i h) \right|^2 / c_{\phi_l,\pm}^2
\end{align}
\end{widetext}

{
\section{QFAMES with Mixed-State Inputs}\label{sec:QFAMES_mixed}
}
We have presented QFAMES assuming that the initial states are pure states and we have access to the controlled unitary operators that prepare these states. In this section, we generalize the observable version to mixed states and remove the requirement for controlled state-preparation gates. Note that the {multiplicity} estimation version of QFAMES can be naturally embedded in the observable version of QFAMES. For simplicity, we only present the observable version of QFAMES with mixed-state inputs here.

We first introduce the algorithm. Assume that we have access to a set of mixed states $\left\{ \rho_l \right\}_{l \in [L]}$ and $\left\{ \sigma_r \right\}_{r\in [R]}$.

\paragraph*{Step 1: Generate data.}

Analogous to the case of pure initial states, we aim to measure quantities of the form
\begin{equation}
\begin{aligned}
    \mc{Z}_{l,r}^{\rm mixed}(t_n, t_n') :&= \tr \left( \rho_l e^{-\mathbf{i}Ht_n}  \sigma_r e^{-\mathbf{i}Ht_n'}  \right),\\
    \mc{Z}_{l,r}^{O, {\rm mixed}}(t_n, t'_{n}, t_{n}'') :&= \tr \left( \rho_l e^{-\mathbf{i} H t_{n}} O e^{-\mathbf{i}Ht_n'}  \sigma_r e^{-\mathbf{i}Ht_n''}  \right).
\end{aligned}
\end{equation}
Their unbiased estimates can be obtained through the following circuits using a swap test:

\begin{widetext}
\begin{align}
    \label{eq:swap_mixed}
    Z_{l,r,n}^{\rm mixed}: \qquad
    \raisebox{2.5em}{
    \Qcircuit @C=1em @R=1em {
    \lstick{\ket{0}} & \gate{\rm H} &  \ctrl{1} &  \ctrl{2} &  \ctrl{2} & \gate{W} & \gate{\rm H} & \meter \\
    \lstick{\rho_l}   & \qw  & \gate{e^{-\mathbf{i}t_n' H}}  & \qw & \qswap & \qw & \qw & \qw\\
    \lstick{\sigma_r}   & \qw  &\qw    & \gate{e^{-\mathbf{i}t_n H}} & \qswap & \qw & \qw & \qw
    }
    }
\end{align}
\begin{align}
    \label{eq:swap_mixed_ob}
    Z^{O,{\rm mixed}}_{l,r,n}: \qquad
    \raisebox{2.5em}{
    \Qcircuit @C=1em @R=1em {
    \lstick{\ket{0}} & \gate{\rm H} &  \ctrl{1} &  \ctrl{2} &  \ctrl{2} &  \ctrl{2} &  \ctrl{2} & \gate{W} & \gate{\rm H} & \meter \\
    \lstick{\rho_l}   & \qw       & \gate{e^{-\mathbf{i}t_n'' H}} & \qw  & \qw   & \qw & \qswap & \qw & \qw & \qw\\
    \lstick{\sigma_r}   & \qw & \qw    & \gate{e^{-\mathbf{i}t_n' H}} & \gate{O} &  \gate{e^{-\mathbf{i}t_n H}}     & \qswap & \qw & \qw & \qw
    }
    }
\end{align}
\end{widetext}

Both circuits only require the ability to prepare the mixed states $\rho_l$ and $\sigma_r$, and do not require~controlled state-preparation gates, which might be of independent interest.

\paragraph*{Step 2: Construct filtered data matrices.} The second step is to contract the time dimensions of the 3-tensors $Z_{l,r,n}^{{\rm mixed}}$ and $Z_{l,r,n}^{O,{\rm mixed}}$ with element-wise products of complex exponential vectors of the form $ e^{\i \theta t_n}$,
\begin{equation}
    \begin{aligned}
     \begin{bmatrix}
         f_n(\theta, T, \left\{ t_n \right\})
     \end{bmatrix} := \begin{bmatrix}
         e^{\i \theta t_n}
     \end{bmatrix},
\end{aligned}
\end{equation}
where the set $\left\{ t_n \right\}$ are sampled independently from the truncated Gaussian density distribution $a_{T}^{\rm trunc}(t)$. Similar to the pure state case, we define
\begin{equation}\label{eqn:G_mixed_construction}
    \begin{aligned}
        G^{\rm mixed}_{l,r}(\theta) = & \frac{1}{N}\sum_{n=1}^{N} Z_{l,r,n}^{\rm mixed} f_n(\theta, T,  \left\{ t_n \right\})f_n(\theta, T, \left\{ t_n' \right\})\\
        G^{O, {\rm mixed}}_{l,r}(\theta) = & \frac{1}{N}\sum_{n=1}^{N} Z_{l,r,n}^{O, \rm mixed} f_n\left(\theta, \frac{T}{\sqrt{2}},  \left\{ t_n \right\}\right)\\
        & f_n\left(\theta, \frac{T}{\sqrt{2}}, \left\{ t_n' \right\}\right)
        f_n\left(\theta, T, \left\{ t_n'' \right\}\right).
    \end{aligned}
\end{equation}
When $N$ is sufficiently large, we have
\begin{equation}
    \begin{aligned}
        G^{\rm mixed}_{l,r}(\theta) &\approx \mathbb{E} \left[ G^{\rm mixed}_{l,r}(\theta) \right]  = \tr \left( \rho_l(\theta, T) \sigma_r(\theta, T) \right),\\
        G^{O, {\rm mixed}}_{l,r}(\theta) &\approx \mathbb{E} \left[ G^{O, \rm mixed}_{l,r}(\theta) \right] = \tr \left( \rho_l(\theta, T) O \sigma_r(\theta, T) \right),
    \end{aligned}
\end{equation}
where $\rho(\theta, T):=e^{-(\theta - H)^2T^2/2} \rho e^{-(\theta - H)^2T^2/2}$ denotes the filtered density matrix of a given mixed state $\rho$.
If we vectorize the density matrices as $\rho \to \kett{\rho}$, we obtain
\begin{equation}\label{eqn:G_mixed_approx}
    \begin{aligned}
        & G^{\rm mixed}_{l,r}(\theta) \approx \bbrakett{ \rho_l(\theta, T)  | \sigma_r(\theta, T) } \\
        = & \sum_{i,j} (\rho_{l})_{i,j}(\sigma_{r})_{j,i}\exp(-((\theta-\lambda_j)^2+(\theta-\lambda_i)^2)T^2)\,,\\
        & G^{O, \rm mixed}_{l,r}(\theta) \approx \bbra{\rho_l(\theta, T)} \mathbb{I} \otimes O \kett{\sigma_r(\theta, T)}\,.
    \end{aligned}
\end{equation}

\paragraph*{Step 3: Estimate dominant eigenvalues and multiplicities.} According to~\cref{eqn:G_mixed_approx}, $\left\|G^{\rm mixed}(\theta)\right\|_F$ peaks when $\theta$ is close to the dominant eigenvalues of $H$. Similar to the pure state case, in the third step, we find the dominant eigenvalues $\{\theta^\star_k\}$ by locating the peaks of the function $\left\|G^{\rm mixed}(\theta)\right\|_F$ over $\theta$ following the same \emph{search and block} strategy (\cref{sec:QFAMES}) as in the pure state case.

\paragraph*{Step 4: Observable estimation.} For each identified dominant eigenvalue $\theta^\star_k$, we solve the following generalized eigenvalue problem:
\begin{equation}\label{eq:general_eig_mix}
    {G^{O, \rm mixed}(\theta^\star_k)} w= \lambda^{O} {G^{\rm mixed}(\theta^\star_k)} w.
\end{equation}
to obtain estimates of the extremal eigenvalues (after taking the maximal/minimal generalized eigenvalues $\lambda^{O}$) of the observable $O$ within the (nearly) degenerate dominant eigenspace of $H$. As in~\cref{sec:obs}, one may first restrict $G^{\rm mixed}(\theta_k^\star)$ and $G^{O,\rm mixed}(\theta_k^\star)$ to the large-singular-value subspace of $G^{\rm mixed}(\theta_k^\star)$ (using the cutoff threshold $\tau$) to obtain a low-dimensional, full-rank generalized eigenvalue problem.
We note that, because $\theta^\star_k$ is close to the dominant eigenvalue $\lambda^\star_k$ of $H$, the filtered density matrices $\rho_l(\theta^\star_k, T)$ and $\sigma_r(\theta^\star_k, T)$ are approximately supported on the subspace spanned by the (nearly) degenerate dominant eigenvectors of $H$. Let $\Pi_k$ denote the spectral projector onto the (nearly) degenerate dominant subspace associated with the cluster centered at $\theta_k^\star$. The physically meaningful range of the expectation values of $O$ within this manifold is
\begin{align}\label{eqn:max_psi_O}
    \lambda_{\min}(\Pi_k O\Pi_k),\quad \lambda_{\max}(\Pi_k O\Pi_k)\,.
\end{align}
Under the condition that the projection of $G^{\rm mixed}_{l,r}(\theta^\star_k)$ onto the degenerate eigenspace maintains sufficient rank (see~\cref{eqn:uniform_mixed}), the generalized eigenvalue problem~\eqref{eq:general_eig_mix} yields consistent estimates of these extremal eigenvalues, which is the same as in the pure state case in~\cref{sec:obs}.
For example, in the case when $L=R$ and $\rho_l=\sigma_l$ for all $l$, and the vectors $\{\kett{\rho_l(\theta^\star_k,T)}\}_{l\in [L]}$ span the degenerate dominant eigenspace in the vectorized representation, the maximal eigenvalue problem~\cref{eqn:max_psi_O} can be formulated as
\begin{equation}
\begin{aligned}
    &\max_{\ket{\psi}\in \mathrm{Ran}(\Pi_k)}\frac{\left\langle \psi|O|\psi \right\rangle}{\left\langle \psi|\psi\right\rangle}
    =\max_{\ket{\psi}\in \mathrm{Ran}(\Pi_k)}\frac{\tr\left(\ket{\psi}\bra{\psi}O\right)}{\left\langle \psi|\psi\right\rangle}\\
    =&\max_{\ket{\psi}\in \mathrm{Ran}(\Pi_k)}\frac{\mathrm{Tr}\left(\ket{\psi}\bra{\psi}O\ket{\psi}\bra{\psi}\right)}{\mathrm{Tr}\left(\ket{\psi}\bra{\psi}\ket{\psi}\bra{\psi}\right)}\\
    =& \max_{\sum_i |c_i|^2=1}\frac{\sum_{j,k}c^\star_j c_k\tr\left(\rho_j O\rho_k\right)}{\sum_{j,k}c^\star_j c_k \tr\left(\rho_j\rho_k\right)}=\max_i \lambda^O_i\,.
\end{aligned}
\end{equation}
where
\begin{equation}
    {G^{O, \rm mixed}(\theta^\star_k)} w= \lambda^{O}_i {G^{\rm mixed}(\theta^\star_k)} w.
\end{equation}

Analogous to the pure state case, the observable structure is better characterized by the pairing of mixed states $\rho_l$ and $\sigma_r$, which yields information inaccessible through direct measurements of $O$ on the individual states. Here, we provide a simple toy example to illustrate this. Assume that there is a two-fold degenerate ground state $\left\{ \ket{0}, \ket{1} \right\}$. In this ground state subspace an observable $Z$ has the form $Z = \begin{bmatrix}
    1 & 0 \\ 0 & -1
\end{bmatrix}$. Assume access to four density matrices, which are after the filtering process projected onto the ground state subspace as
\begin{equation}
\begin{aligned}
    & \rho_1 = \begin{bmatrix}
        0.6 & 0 \\ 0 & 0.4
    \end{bmatrix}\,,\\
    & \rho_2 = \begin{bmatrix}
        0.4 & 0 \\ 0 & 0.6
    \end{bmatrix}\,,\\
    & \rho_3 = \ket{+}\bra{+} = \frac{1}{2}\begin{bmatrix}
        1 & 1 \\ 1 & 1
    \end{bmatrix}\,,\\
    & \rho_4 = \ket{+\i}\bra{+\i} = \frac{1}{2}\begin{bmatrix}
        1 & -\i \\ \i & 1
    \end{bmatrix}\,.
\end{aligned}
\end{equation}
Here $\ket{+} = \frac{1}{\sqrt{2}} \left( \ket{0} + \ket{1} \right)$ and $\ket{+\i} = \frac{1}{\sqrt{2}} \left( \ket{0} + \i\ket{1} \right)$. Direct measurements of the observable $Z$ with these four density matrices yield
\begin{equation}
    \tr \left( \rho_{1,2} Z \right) = \pm 0.2, \quad \tr \left( \rho_{3,4} Z \right) = 0,
\end{equation}
which can not fully capture the ``ferromagnetic'' feature of the ground state manifold with regard to $O=Z$. However, in the mixed state version of the QFAMES algorithm, we have
\begin{equation}
    \begin{aligned}
        G^{\rm mixed}_{l,r} & = \tr \left( \rho_l \rho_r \right) = \begin{bmatrix}
        0.52 & 0.48 & 0.5 & 0.5\\
        0.48 & 0.52 & 0.5 & 0.5\\
        0.5 & 0.5 & 1 & 0.5\\
        0.5 & 0.5 & 0.5 & 1
    \end{bmatrix}, \\
    G^{O, \rm mixed}_{l,r} & = \tr \left( \rho_l O \rho_r \right) = \begin{bmatrix}
        0.2 & 0 & 0.1 & 0.1\\
        0 & -0.2 & -0.1 & -0.1\\
        0.1 & -0.1 & 0 & -0.5\i\\
        0.1 & -0.1 & 0.5\i & 0
    \end{bmatrix}.
    \end{aligned}
\end{equation}
Solve the generalized eigenvalue problem $G^{O, \rm mixed}_{l,r} w = \lambda^{O} G^{\rm mixed}_{l,r} w$, and we can correctly obtain $\min\lambda^O = -1$ and $\max\lambda^O = 1$.

The detailed algorithm is
summarized in \cref{alg:QFAMES_mixed}.

\begin{breakablealgorithm}
\caption{Extracting observable expectation values under QFAMES with mixed-state inputs}
  \label{alg:QFAMES_mixed}
  \begin{algorithmic}[1]
\State \textbf{Preparation:}

  Number of time samples: $N$;

  \multiline{
    Two sets of initial mixed states: $\{\rho_l\}^{L}_{l=1}$ and $\{\sigma_r\}^{R}_{r=1}$;
  }

  Filter parameter: $T$;

  Truncation parameter in $a_T^{\rm trunc}$: $\sigma$;

  Number of {dominant eigenvalues (guess)}: $\tilde{I}$;

  Singular value decomposition threshold: $\tau$;

  \multiline{
  Searching parameter: $q$ (we search with discrete
  energy step $q / T$);
  }

  \multiline{
  Block parameter: $\alpha$ (distinct dominant  eigenvalues are at least separated by $\alpha / T$);
  }

  \State \textbf{Running:}
  \State \Comment{\textcolor{blue}{Stage I: Estimate the location of dominant eigenvalues}}
  \State Sample $\{t_n,t'_n\}$ i.i.d. from the truncated Gaussian distribution $a_T^{\rm trunc}(t)$ defined in \cref{eqn:a_T}.
  \For{$l=1$ to $L$}
    \For{$r$=1 to $R$}
        \For{$n=1$ to $N$}
            \State
             Generate data $Z^{\rm mixed}_{l,r,n}$ using~\eqref{eq:swap_mixed}.
        \EndFor
    \EndFor
  \EndFor
  \State $J\gets \left\lfloor \frac{2\pi T}{q}\right\rfloor$.
    \State Generate discrete candidates: {$\theta_j\gets-\pi+\frac{jq}{T}$ for $j=0,1,\dots,J$.}
  \State Calculate~\eqref{eqn:G_mixed_construction} for $0\leq j\leq J$:
  \[
        \left\|G^{\rm mixed}(\theta_j) \right\|_F\gets \left\|\frac{1}{N}\sum_{n=1}^{N} {Z_{:,:,n}^{\rm mixed}} e^{\i \theta_j t_n}e^{\i \theta_j t'_n}\right\|_F\,.
  \]
  \State Running the search-and-block procedure with $\alpha$ in~\cref{alg:QFAMES} to estimate the dominant eigenvalues.

  \State \textbf{Output:}
\emph{Distinct dominant eigenvalues:}
$\{\theta^\star_i\}^{\tilde{I}}_{i=1}$.

  \State \Comment{\textcolor{blue}{Stage II: Estimate observable expectation values}}
    \State Sample $\{t_n,t'_n\}$ i.i.d. from the truncated Gaussian distribution {$a_{T/\sqrt{2}}^{\rm trunc}(t)$} defined as Eq.~\eqref{eqn:a_T} and $\{t''_n\}$ i.i.d. from $a_T^{\rm trunc}(t)$.
  \For{$l=1$ to $L$}
    \For{$r$=1 to $R$}
        \For{$n=1$ to $N$}
            \State
            \makecell[l]{Generate $Z_{l,r,n}^{O, \rm mixed}$ using~\eqref{eq:swap_mixed_ob}.}
        \EndFor
    \EndFor
  \EndFor

  \State Calculate
  \[
  \begin{aligned}
      G^{O,\rm mixed}(\theta^\star_i) \gets & \frac{1}{N}\sum^N_{n=1}{Z^{O,\rm mixed}_{:,:,n}}\exp(\i\theta^\star_i t_n)\\
      & \exp(\i\theta^\star_i t'_n)\exp(\i\theta^\star_i t''_n),\quad 0\leq i\leq \tilde{I}\,.
  \end{aligned}
  \]

  \State Solve the generalized eigenvalue problem
    \begin{align*}
        G^{O,\rm mixed}(\theta^\star_i) w = \lambda^O G^{\rm mixed}(\theta^\star_i)w.
    \end{align*}
    \State \textbf{Output:} \emph{Eigenvalues:}  $\{\lambda^O_k\}_{k=1}^{m_i}$ that provide estimates of eigenvalues of $O_{\mc{D}_i}$.

    \end{algorithmic}
\end{breakablealgorithm}

As previously noted, the success of mixed-state QFAMES requires that the initial mixed states have sufficient overlap with the (nearly) degenerate dominant eigenvectors of $H$. Under the assumption $\{\rho_l\}=\{\sigma_r\}$, this requirement is characterized by the following condition:
\begin{equation}\label{eqn:uniform_mixed}
\min_{ {H}\ket{\psi} = \lambda^\star_k \ket{\psi}} \,\,\max_{\substack{p_i\geq 0\\ \sum_i p_i=1}}\left\langle \psi\middle|\sum_i p_i\rho_i\middle|\psi\right\rangle=\omega(1)\,.
\end{equation}
When the goal is to estimate the extremal eigenvalues of $O$ within the low-energy subspace of $H$, there is extensive literature on preparing such mixed states using recently developed dissipative algorithms for low-temperature thermal state or ground state preparation~\cite{VerstraeteWolfIgnacioCirac2009,RoyChalkerGornyiEtAl2020,zhou2021symmetry,Cubitt2023,WangSnizhkoRomitoEtAl2023,LuLessaKimEtAl2022,Temme_2011,Mozgunov2020,shtanko2021preparing,RallWangWocjan2023,ChenKastoryanoBrandaoEtAl2023,ChenKastoryanoGilyen2023,DingLiLin2025,GilyenChenDoriguelloKastoryano2024,LiZhanLin2025,hahn2025provably,langbehn2025universal,lloyd2025quantumthermal,scandi2025thermal,ZhanDingHuhnEtAl2026,ding2025endtoendefficientquantumthermal,DingChenLin2024,wang2025lindbladdynamicsrigorousguarantees}. For example, certain physically interesting systems exhibit metastable states, often arising from energy or entropy bottlenecks in the dissipative dynamics~\cite{bergamaschi2025,Chen2025Local,gamarnik2024,rakovszky2024}. Characterizing the physical observables of these states is of significant interest in condensed matter physics and quantum materials science~\cite{Yin2025,Bovier2006,BinderYoung1986,DebenedettiStillinger2001,DiehlZoller2008,Turner2018a}. While reaching the true global thermal equilibrium in such systems typically requires exponential time, relaxing to a specific low-temperature metastable state is often a polynomial-time process~\cite{RezaGheissari2024,Gheissari2024Random,gheissari2025,bergamaschi2025_2,10.1145/3717823.3718236,koehler2024}. Consequently, starting from a random initial state allows one to efficiently populate the underlying metastable subspace, thereby making the condition in~\eqref{eqn:uniform_mixed} achievable.

\section{Numerical results}\label{sec:numerical_tests}

In this section, we numerically demonstrate the efficiency of QFAMES by applying it to {four} models: (1) A simple illustrative example in~\cref{sec:illustrative_example};
(2) A {Transverse-Field} Ising model (TFIM) in~\cref{sec:TFIM}; 
(3) 2D Toric {Code} example in~\cref{sec:toric_code}; {(4) Anisotropic Heisenberg (XXZ) model in~\cref{sec:xxz_model}.} For simplicity, we set the left and right initial states to be the same. 
In the first case, we provide three simple initial states that are not orthogonal to each other to estimate the multiplicity of the ground states. For the TFIM, we take five low-energy matrix product states with bond dimension 2 as the initial states. We not only compute the ground state degeneracies, but also probe ground state properties of different quantum phases by measuring observable expectation values.
In the third example, we employ an increasing number of random initial states to estimate the ground-state degeneracy of a model with topological order, thereby demonstrating that QFAMES is applicable in a broad range of settings. All codes and data needed to produce the results in this section are available online~\footnote{\href{https://github.com/lin-lin/qfames}{https://github.com/lin-lin/qfames}}.

\subsection{Illustrative example}\label{sec:illustrative_example}

\begin{figure}
    \centering
    \subfloat[\label{fig:illustrative_QFAMES_a}]{%
        \includegraphics[width=0.99\linewidth]{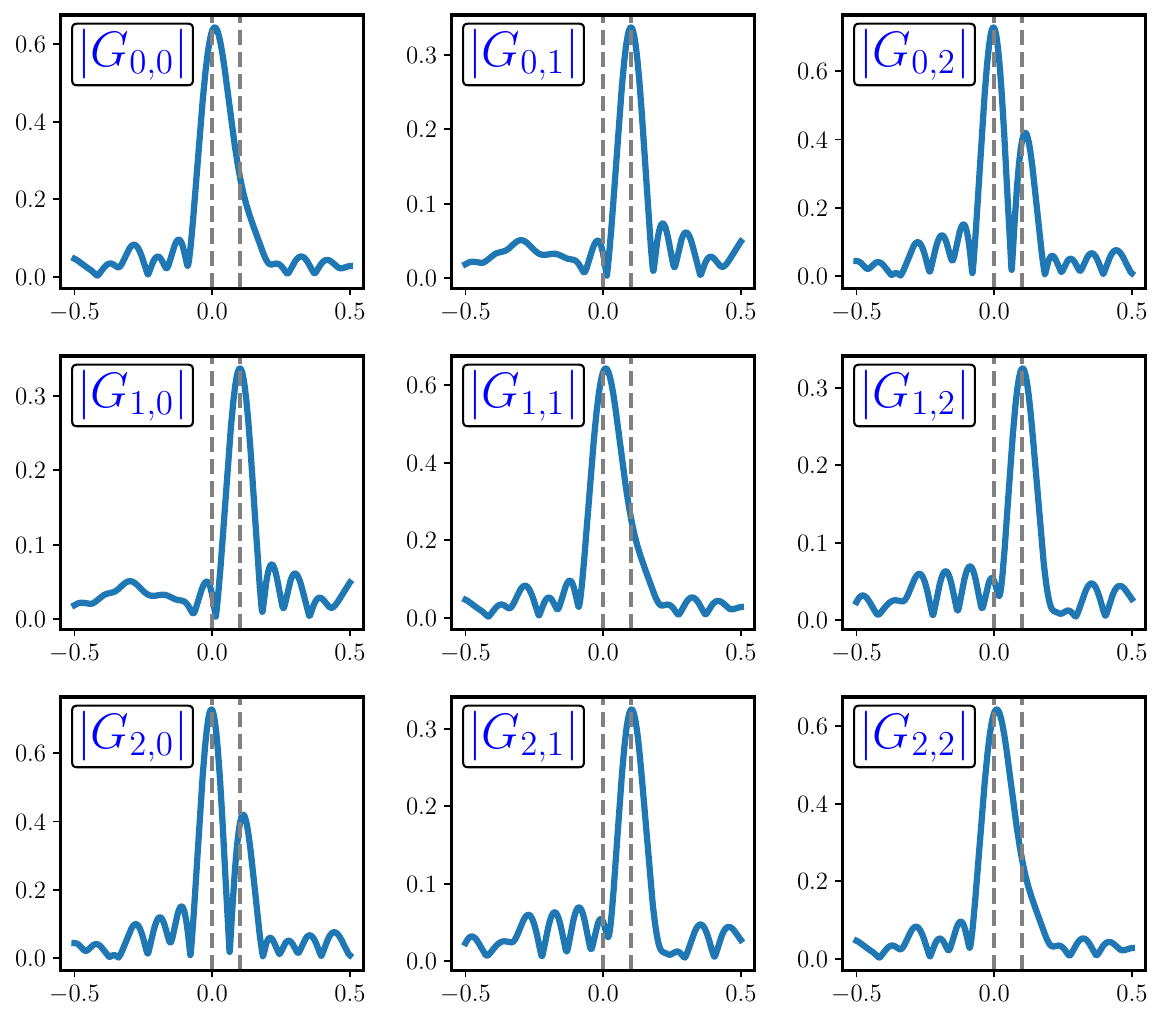}
    }

    \subfloat[\label{fig:illustrative_QFAMES_b}]{%
      \includegraphics[width=0.8\linewidth]{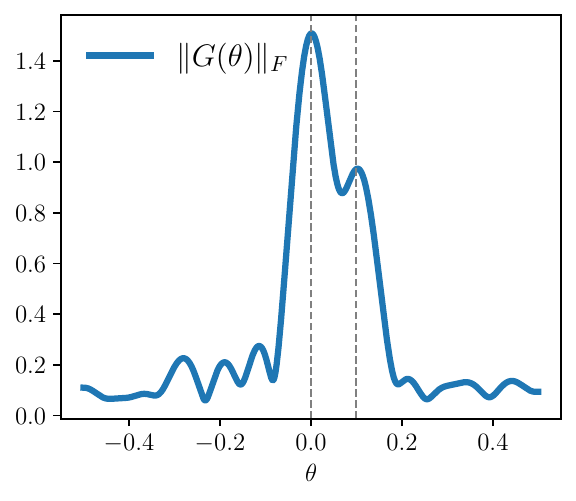}
    }

    \subfloat[\label{fig:illustrative_QFAMES_c}]{%
      \includegraphics[width=0.48\linewidth]{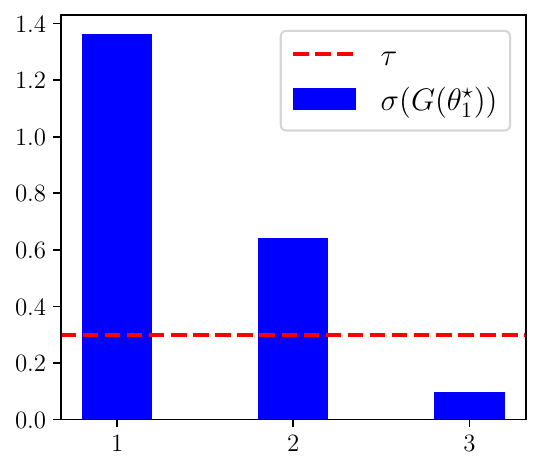}
    }\hfill
    \subfloat[\label{fig:illustrative_QFAMES_d}]{%
      \includegraphics[width=0.48\linewidth]{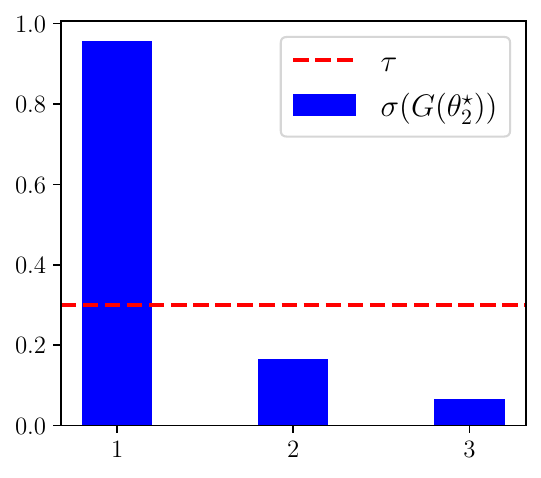}
    }
    \caption{Illustration of the QFAMES algorithm on the illustrative example. 
    (a): The magnitude of $G_{i,j}(\theta)=\frac{1}{N}\sum_{n=1}^NZ_{i,j,n}e^{\i \theta t_n}$ for each $0\leq i,j\leq 2$. {The $x$-axis stands for $\theta$.} The gray dashed lines indicate the positions of the eigenvalues. (b): The Frobenius norm $\|G(\theta)\|_F$, which clearly identifies the two distinct dominant eigenvalues. (c) and (d): The singular values of $G(\theta)$ at each $\lambda_i^{\star}$, {correctly} indicating the corresponding multiplicities when the threshold $\tau=0.1\sqrt{LR}=0.3$.}
    \label{fig:illustrative_QFAMES}
\end{figure}
\begin{figure}
    \centering
    \includegraphics[width=0.9\linewidth]{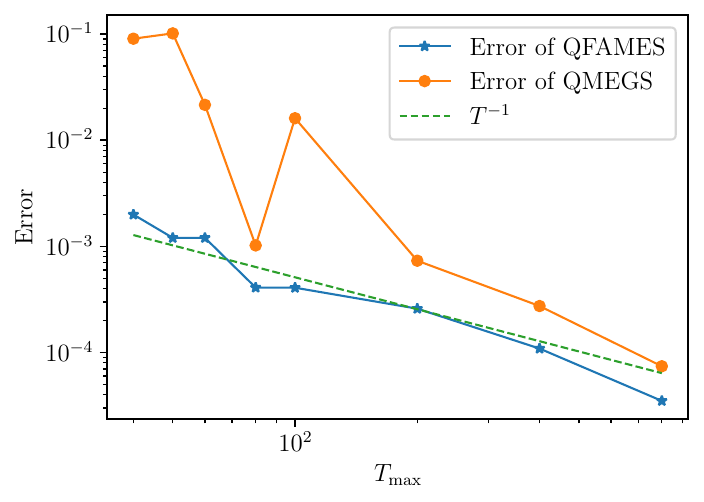}
 
     \includegraphics[width=0.9\linewidth]{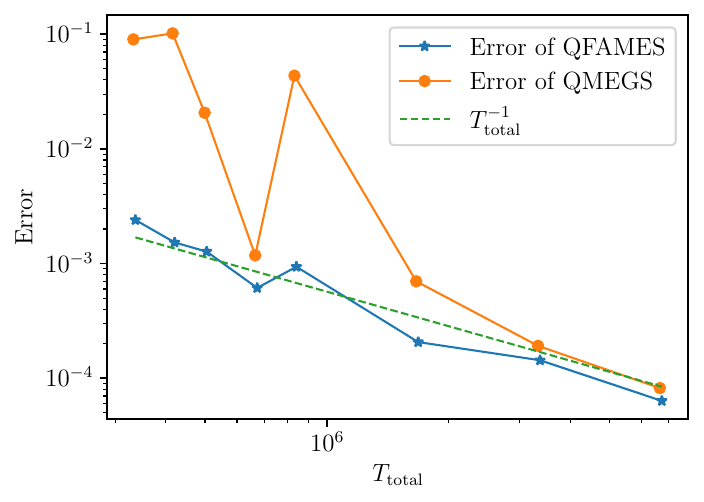}
     
     \caption{
     \label{fig:illustrative_model} QFAMES (\cref{alg:QFAMES}) vs QMEGS~\cite{ding2024quantum} in the illustrative example as a function of \textbf{(upper)}: max evolution time $T_{\max} = \sigma T$, and \textbf{(lower)}: total evolution time $T_{\rm total}$. The dashed line stands for the fitted error proportional to $1 / T$, as predicted by \cref{eq:location_err}.
     }
\end{figure}

To illustrate the effectiveness of QFAMES, we consider a simple Hamiltonian with three eigenvalues: $\lambda_0 = \lambda_1 = 0$ and $\lambda_2=0.1$. There are only two distinct eigenvalues: $\lambda_0^{\star} = 0$ and $\lambda_1^{\star} = 0.1$.
The Hamiltonian is given by
\begin{align}
    H=\begin{bmatrix}
    0 & 0 & 0\\
    0 & 0 & 0\\
    0 & 0 & 0.1
\end{bmatrix}\,,
\end{align}
and we assume access to three initial states $\ket{\phi_0}, \ket{\phi_1}, \ket{\phi_2}$ with overlap matrix:
\begin{equation}\label{eq:illustrative_example_overlap}
    \Phi=\Psi=\frac{1}{\sqrt{3}}\begin{bmatrix}
        1 & 1 & 1\\
        1 & -1 & 1\\
        1 & 1 & -1
    \end{bmatrix}.
\end{equation}

In this case, the tail overlap $p_{\rm tail}=0$ and all eigenvalues are dominant. A single initial state approach would only yield an estimate of the location of $\lambda_0^{\star}$ and $\lambda_1^{\star}$ but fail to resolve the degeneracy of $\lambda_0^{\star}$. Moreover, because the overlaps with $\lambda_2$ {($\lambda^\star_1$)} are not sufficiently dominant compared to those with $\lambda_0, \lambda_1$ {($\lambda^\star_0$)}, single-state approaches might not reliably estimate the location of $\lambda_2$ {($\lambda^\star_1$)}.

To resolve the degeneracy at $\lambda_0^{\star}$, it is essential to utilize the off-diagonal entries of the data tensor, which encode important information about the spectral structure. 
We can check that the $\mc{Z}$ matrix has the form
\begin{align}
    \mc{Z}(t)= \frac{1}{3}
    \begin{bmatrix}
        2 + e^{-\i 0.1 t} & e^{-\i 0.1 t} & 2 - e^{-\i 0.1 t} \\
        e^{-\i 0.1 t} & 2 + e^{-\i 0.1 t}  & - e^{-\i 0.1 t}\\
        2 - e^{-\i 0.1 t} & - e^{-\i 0.1 t}& 2 + e^{-\i 0.1 t} 
    \end{bmatrix}\,.
\end{align}
The off-diagonal entry $\mc{Z}_{0,1}(t) = {\frac{1}{3}}e^{-\i 0.1 t}$, for example, reveals two key pieces of information:
\begin{enumerate}
  \item ${\lambda_2} = 0.1$ must be one of the dominant eigenvalues, as it is the only effective frequency present in the signal;
  \item The degeneracy of dominant eigenvalue $\lambda_0^{\star}$ must be at least two, since the signal corresponding to the frequency $\lambda \approx 0$ vanishes in this off-diagonal entry.
\end{enumerate}
The first point allows for a more accurate estimation of the location of $\lambda_2$, while the second provides insight into the multiplicity of {$\lambda_0^{\star}$}.

Now we perform numerical simulations to demonstrate the main ideas of QFAMES. 
We independently sample ${\{t_n\}^N_{n=1}}$ with $N = 2000$ from the truncated Gaussian distribution ${a_T^{\rm trunc}}(t)$ defined in~\cref{eqn:a_T} with truncation parameter $\sigma = 1$, and generate each $Z_n$ using a single-shot generalized Hadamard test. In~\cref{fig:illustrative_QFAMES_a}, we plot the magnitude of each entry of $G(\theta)$ for $T=40$. The diagonal entries exhibit a prominent peak corresponding to $\lambda_0^{\star}$, while the second peak near $\lambda_2$ is not clearly identifiable. As a result, the standard {QPE-based} approach described in~\cref{sec:single_state_limitation} cannot locate $\lambda_2$ and fails to reveal the multiplicity of $\lambda_0^{\star}$. Although the diagonal landscapes are the same in~\cref{fig:illustrative_QFAMES_a}, if we focus on the {off-diagonal} landscapes,
it is evident that there exists one dominant eigenvalue near 
$\lambda_2=0.1$.
This can be more systematically observed in the plot of the Frobenius norm {$\|G(\theta)\|_F$} in~\cref{fig:illustrative_QFAMES_b}, which clearly reveals two distinct dominant eigenvalues, including an ``amplified'' peak near $\lambda_0 = \lambda_1 = 0$. {Note that the artificial peaks in \cref{fig:illustrative_QFAMES_a} and \cref{fig:illustrative_QFAMES_b} are due to noises that can be suppressed as we increase the number of samples $N$.} Figures~\cref{fig:illustrative_QFAMES_c} and~\cref{fig:illustrative_QFAMES_d} show the singular values of $G(\theta)$ at $\theta=0$ and $\theta=0.1$, respectively. They exactly recover the multiplicities: {two states associated with the dominant eigenvalue near zero and one state associated with the dominant eigenvalue near $0.1$.}

In \cref{fig:illustrative_model}, we show the accuracy of the QFAMES algorithm in locating dominant eigenvalues as a function of {$T_{\max}$, $T_{\rm total}$}. The parameters are chosen as follows: SVD threshold $\tau = 0.1\sqrt{LR}$, block parameter $\alpha = 5$, search parameter $q = 0.005$, and $T$ takes values $40, 50, 60, 80, 100, 200, 400,$ and $800$.
The error is defined as:
\begin{equation}\label{eqn:error_calculation}
\text{Error}=\max_i\left|\theta^\star_i-\lambda^\star_i\right|\,.
\end{equation}
In~\cref{fig:illustrative_model}, we also include a comparison with the QMEGS algorithm~\cite{ding2024quantum}, using the same parameters as QFAMES except for $N = 6000$, which ensures the total number of Hamiltonian simulations is the same. From the graph, we observe that QFAMES achieves significantly lower error than QMEGS when $T_{\max} < 100$, while the errors of both methods become comparable for larger $T_{\max}$. This demonstrates the advantage of QFAMES in identifying dominant eigenvalues when $T_{\max}$ is not sufficiently large, aided by the use of off-diagonal entries in $G(\theta)$. Furthermore, we note that QMEGS cannot identify the multiplicity of $\lambda_0^{\star}$, as it relies solely on the diagonal entries of $G(\theta)$, whereas QFAMES accurately estimates the multiplicities of both dominant eigenvalues.

\subsection{Transverse Field Ising Model (TFIM)}\label{sec:TFIM}

Here we consider the transverse field Ising model on a spin chain of length {$N_{\rm spin}$}: \begin{equation}\label{eq:TFIM-8}
H = -\sum_{i=1}^{N_{\rm spin}-1}Z_iZ_{i+1} - g\sum_{i=1}^{N_{\rm spin}}X_i\,,
\end{equation}
where $Z_i$ and $X_i$ are the Pauli operators acting on the $i$-th qubit and $g$ is the coupling coefficient. The ground state of the TFIM lies in the ferromagnetic phase for $g<1$, and undergoes a quantum phase transition at $g=1$ to the paramagnetic phase for $g>1$. Note that {in the thermodynamic limit,} the GSD is 2 in the ferromagnetic phase, 1 in the paramagnetic phase, and the system is gapless at the phase transition point.

\begin{figure}
    \centering
    \includegraphics[width=0.51\linewidth]{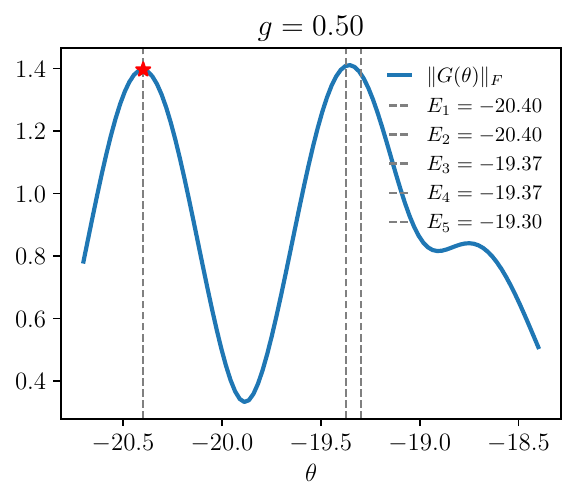}
    \includegraphics[width=0.475\linewidth]{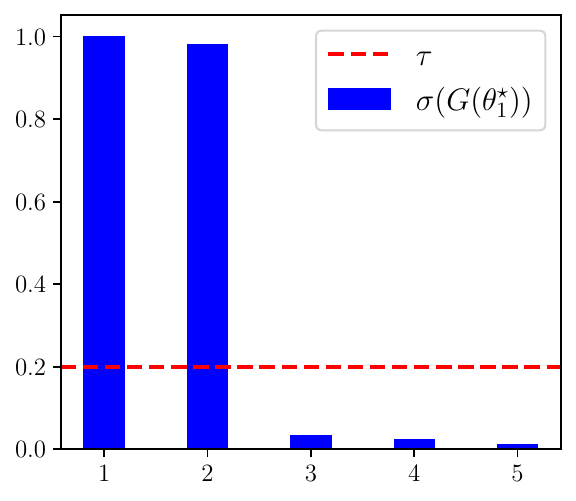}
    
    \includegraphics[width=0.51\linewidth]{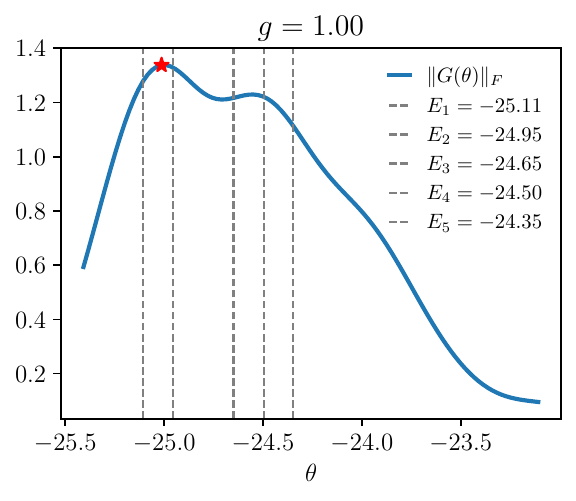}
    \includegraphics[width=0.475\linewidth]{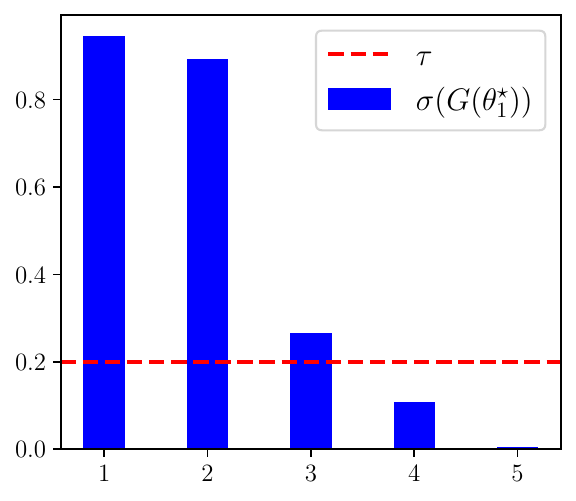}
    
    \includegraphics[width=0.51\linewidth]{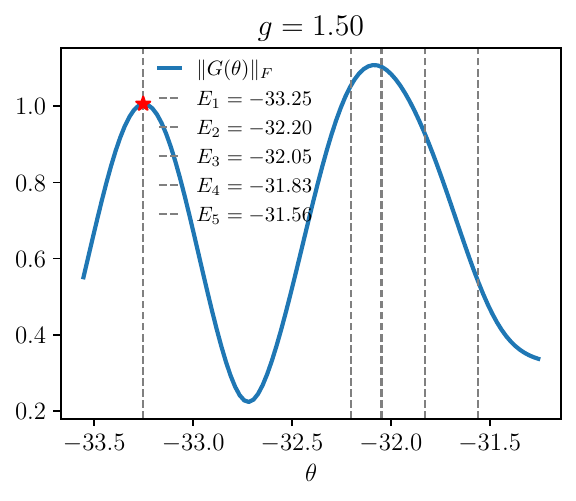}
    \includegraphics[width=0.475\linewidth]{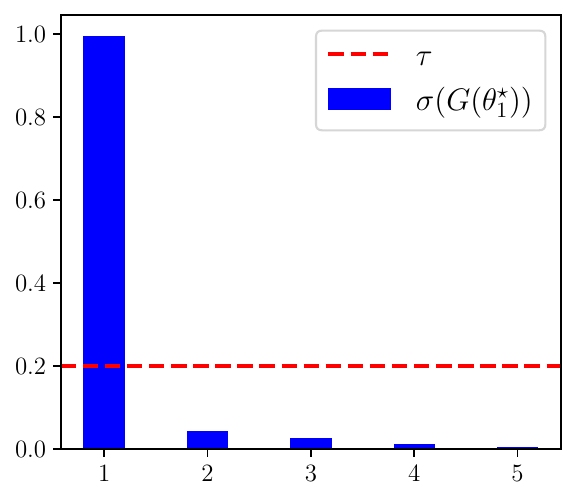}
    \caption{Illustration of the QFAMES algorithm on the TFIM model. The left column shows the Frobenius norm of $G(\theta)$ around the ground state energy for $g=0.5$, $1.0$ and $1.5$, from up to down. The {gray vertical} dashed lines stand for the 5 lowest lying energy eigenvalues. The right column shows the singular values of the $G(\theta_1^{\star})$, where $\theta_1^{\star}$ is the leftmost peak found in corresponding Frobenius norm (marked with red stars). The SVD cutoff threshold is set to be 0.2. 
    }
    \label{fig:TFIM}
\end{figure}
\begin{figure}
    \centering
    \includegraphics[width=0.7\linewidth]{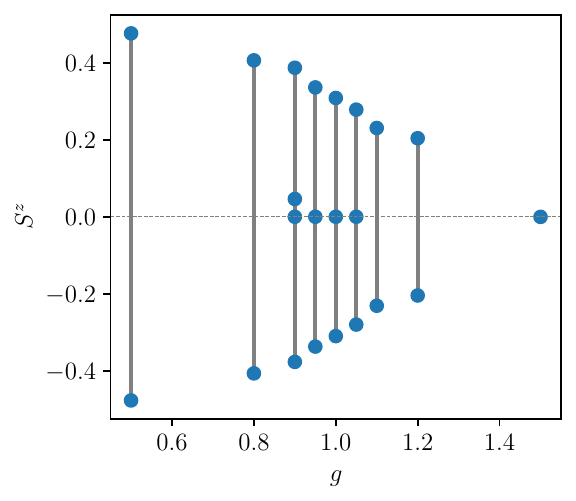}
    \caption{Illustration of \cref{alg:QFAMES_obs} on the TFIM model. The observable is $S^z = \frac{1}{2N_{\rm spin}} \sum_{i = 1}^{N_{\rm spin}} Z_i$. In the manifold of near-ground-state cluster, the possible ranges of observable expectation values are shown by the gray lines. A transition from ferromagnetic to paramagnetic phase can be observed when $g$ increases from 0.5 to 1.5.}
    \label{fig:TFIM_obs}
\end{figure}

In our test, we set system size $N_{\rm spin}=20$ and $g = 0.5, 0.8, 0.9, 0.95, 1, 1.05, 1.1, 1.2, 1.5$. For each $g$, the set of initial states is chosen to be 5 matrix product states (MPS) of bond dimension 2 that are orthogonal to each other and have approximately the lowest energies. {We first run DMRG with bond dimension 2 to find the MPS $\ket{\psi_0}$. The states $\ket{\psi_i}$ ($i > 0$) are then sequentially obtained by running DMRG with the projected Hamiltonian
\[
    H_i = \left( 1 - \sum_{j=0}^{i-1} \ket{\psi_j}\bra{\psi_j} \right) H \left( 1 - \sum_{j=0}^{i-1} \ket{\psi_j}\bra{\psi_j} \right).
\]
}
We simulate the time evolutions with the time-evolving block decimation (TEBD) algorithm~\cite{Vidal2004} with maximum MPS bond dimension $\chi=100$.  {The Gaussian filter parameter $T / \sqrt{2} = 2.5$. For simplicity, for the time parameters $t_n$ and $t'_n$ in \cref{eq:obs_measure}, we choose $(t_n, t'_n) \in \mc{T}_{\rm sample} \times \mc{T}_{\rm sample}$, where $\mc{T}_{\rm sample}$ consist of $N=100$ samples drawn from Gaussian density distribution $a_{T/\sqrt{2}}(t)$. This choice will effectively generate $N^2 = 10^4$ samples to compress measurement error.
}
In \cref{fig:TFIM} we demonstrate how QFAMES can be applied to resolve the {near-ground-state multiplicity} of TFIM. When $g = 0.5$, we can clearly see {a near-degenerate pair}. When $g=1$, we are in the gapless phase, and more states around the ground state are detected with a moderate choice of $T$. While when $g=1.5$, the ground state is found to be unique. 

We can further obtain the ground state properties of each phase by applying \cref{alg:QFAMES_obs}. The observable we choose is
\begin{align}
    S^z = \frac{1}{2N_{\rm spin}} \sum_{i = 1}^{N_{\rm spin}} Z_i
\end{align}
and the results are illustrated in \cref{fig:TFIM_obs}.
When $g=0.5$, the solutions to the generalized eigenvalue problem, which serve as an estimation of eigenvalues of $\left(\braket{E_k | S^z | E_{k'}}\right)_{k,k'\in[m_i]}$, are close to $\pm 0.5$, showing the {near-ground-state cluster} lies in the ferromagnetic phase. As $g$ increases, the possible range of expectation values of $S^z$ in the near-ground-state cluster manifold reduces and finally reaches 0 at $g=1.5$, corresponding to the paramagnetic phase. {Note that the blue dots, which represent the eigenvalues of $\left(\braket{E_k | S^z | E_{k'}}\right)_{k,k'\in[m_i]}$, do not have to be symmetric about 0 (for example at $g = 0.9$), since they are extracted from the near-ground-state cluster identified at finite $T$ rather than from an exact symmetry-resolved eigenspace.   }

\subsection{2D Toric Code}~\label{sec:toric_code}

In many-body systems, GSD is a hallmark of topological order and depends only on the global topology of the system. For stabilizer Hamiltonians, the GSD is exactly $2^k$, where $k$ is the number of logical qubits encoded by the code. 
In this section, we estimate the ground-state degeneracy of the 2D Toric code Hamiltonians~\cite{Kitaev2003} as an example. The Toric code ground state is known to exhibit $\mathbb{Z}_2$ topological order, which has been realized on a quantum computer based on the prior knowledge of its explicit form~\cite{Satzinger2021}. In contrast, our numerical experiment starts with a collection of random initial states. 
Specifically, the Toric code Hamiltonian on a 2D lattice is given by
\begin{equation}
    H = - \sum_{v} A_v - \sum_{p} B_p,
\end{equation}
where $v$ denotes the vertices and $p$ the plaquettes of the lattice. Qubits live on the edges, with $i\in v$ labeling the edges touching a vertex $v$, and $i\in p$ labeling the edges surrounding a plaquette $p$. The corresponding stabilizer operators are  $A_v = \prod_{i\in v} X_i$ and $B_p = \prod_{i \in p} Z_i$.
We study the Toric code model on a 2-by-4 lattice with torus and cylinder boundary conditions, as demonstrated in \cref{fig:toric_code}. Their GSDs are \emph{four} and \emph{two}, respectively.
The numerical experiment proceeds as follows:
\begin{enumerate}
    \item We randomly sample vectors from the Haar distribution and apply imaginary-time evolution $\exp(-\beta H)$ to enhance their overlap with the ground-state subspace.
    \item We then apply QFAMES to these boosted vectors to estimate the multiplicity of the cluster corresponding to the smallest eigenvalue.
\end{enumerate}

\begin{figure}
    \centering
    \subfloat[\label{geometry:a} torus boundary condition]{%
      \includegraphics[width=.7\linewidth]{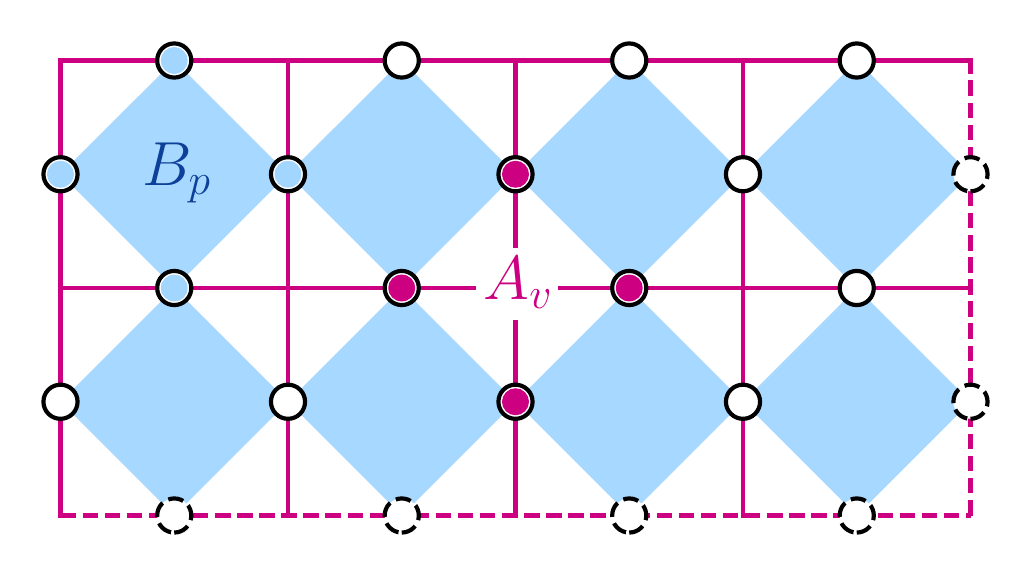}
    }
    
    \subfloat[\label{geometry:b} cylinder boundary condition]{%
      \includegraphics[width=.7\linewidth]{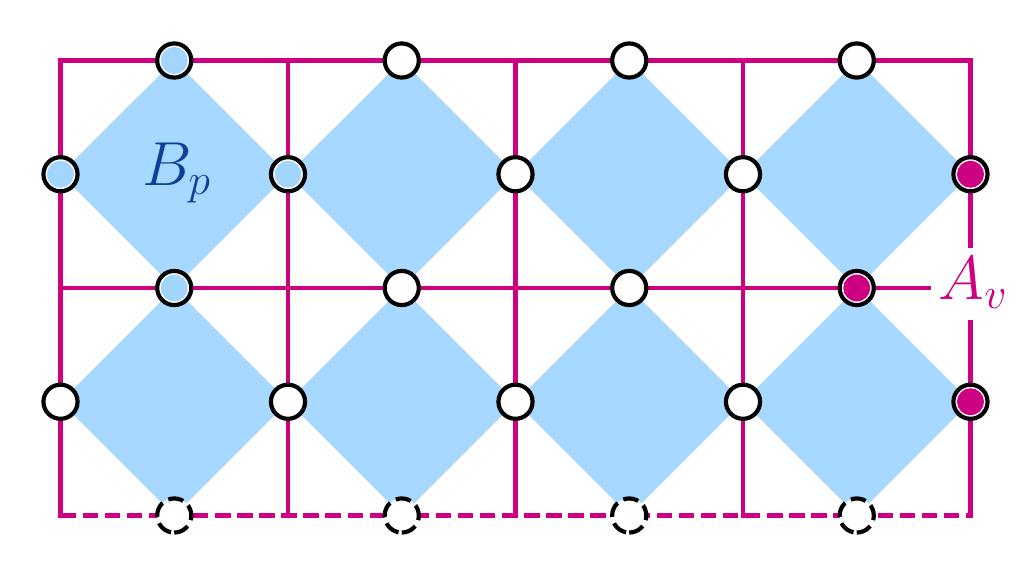}
    }
    \caption{
    2D Toric codes on a 2-by-4 lattice with \textbf{(a)} torus, and \textbf{(b)} cylinder boundary conditions. Dashed lines stand for periodic boundary conditions in the corresponding directions. The models consist of 16 and 18 qubits, respectively, represented by the small circles.}
    \label{fig:toric_code}
\end{figure}

We find that in the first step of the algorithm, it is necessary to generate a sufficient number of initial states such that the ground-state subspace is contained within their linear span. This condition is essential for correctly estimating the ground-state multiplicity. We also remark that imaginary-time evolution is generally not efficient for large-scale, general Hamiltonians on a quantum computer, and is used here primarily for illustrative purposes. In practical applications, one would need to employ more quantum-friendly methods for generating random low-energy states.
\begin{figure}
    \centering
    \subfloat[\label{fig:toric_2_4_1_a} torus boundary condition, $\beta=10, L = 15$]{%
      \includegraphics[width=0.48\linewidth]{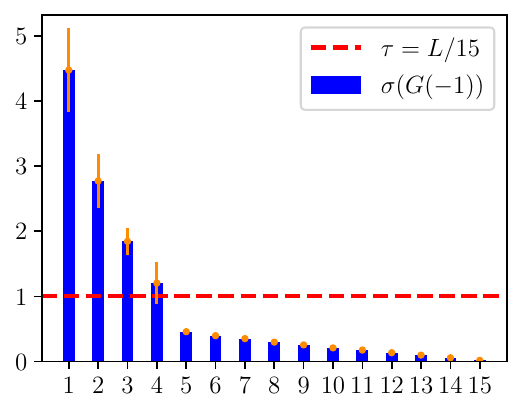}
    }\hfill
    \subfloat[\label{fig:toric_2_4_0_a} cylinder boundary condition, $\beta=10, L = 15$]{%
      \includegraphics[width=0.48\linewidth]{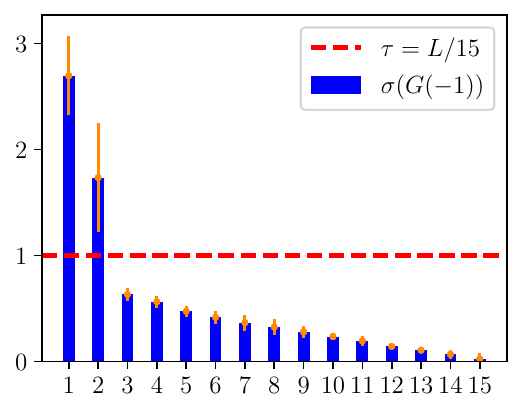}
    }

    \subfloat[\label{fig:toric_2_4_1_b} torus boundary condition, $\beta=10, L=25$]{%
      \includegraphics[width=0.48\linewidth]{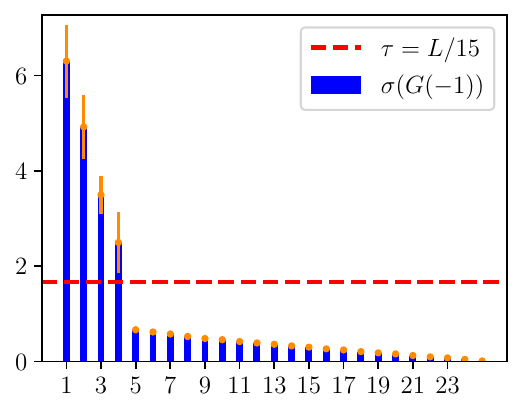}
    }\hfill
    \subfloat[\label{fig:toric_2_4_0_b} cylinder boundary condition, $\beta=10, L=25$]{%
      \includegraphics[width=0.48\linewidth]{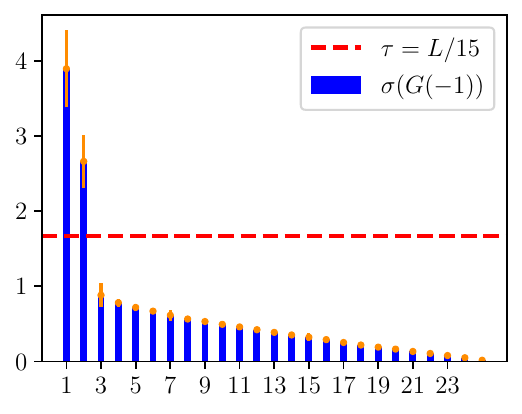}
    }

    \subfloat[\label{fig:toric_2_4_1_c} torus boundary condition, $\beta=15, L=15$]{%
      \includegraphics[width=0.48\linewidth]{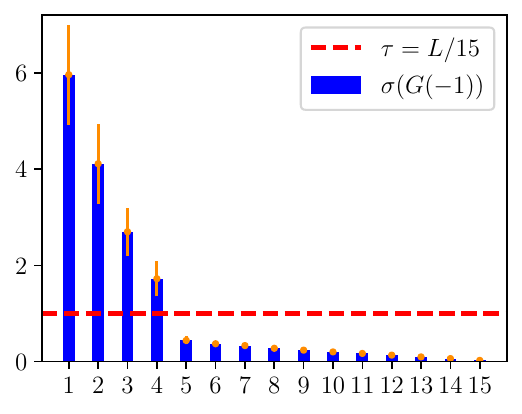}
    }\hfill
    \subfloat[\label{fig:toric_2_4_0_c} cylinder boundary condition, $\beta=15, L = 15$]{%
      \includegraphics[width=0.48\linewidth]{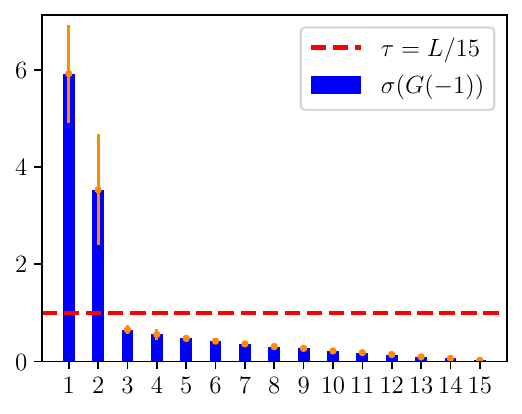} 
    }
    \caption{Illustration of GSD estimation for 2D Toric code models with \textbf{(left)} torus and \textbf{(right)} cylinder boundary conditions. The spectra are rescaled to $[-1,1]$ and we plot the averaged singular values of $G(-1)$. The orange lines are the error bars representing the standard deviation over 10 trials for each parameter configuration. Four and two large singular values can be clearly identified, respectively. In the second row, we increase the number of initial states $L$ compared with the first row. In the third row, we increase the imaginary evolution time $\beta$.}
    \label{fig:toric_2_4_result}
\end{figure}

We use $N=300$ samples, and set circuit depth parameter $T = 10$, the Gaussian truncation parameter $\sigma=1$, and SVD threshold $\tau = \sqrt{LR} / 15 = L / 15$. For inverse temperature $\beta$ and the number of states $L$, we consider different configurations: ($\beta=10$, $L=15$) and ($\beta=10$, $L=25$), and ($\beta=15$, $L=15$). Each setup is repeated 10 times, and the averaged singular values are output. Results are shown in \cref{fig:toric_2_4_result}. Four and two large singular values can respectively be identified for the two boundary conditions. The performance of the algorithm is also shown to be enhanced by 
(1) increasing the number of initial states $L$ (the second row in \cref{fig:toric_2_4_result} compared with the first one), which strengthens the uniform overlap condition,
(2) or increasing the imaginary evolution time $\beta$ (the third row compared with the first one), which reinforces the sufficient dominance condition in \cref{eq:def_domination}.

{
\subsection{Anisotropic Heisenberg (XXZ) model}\label{sec:xxz_model}
}

\begin{figure*}
    \centering
    \includegraphics[width=0.19\linewidth]{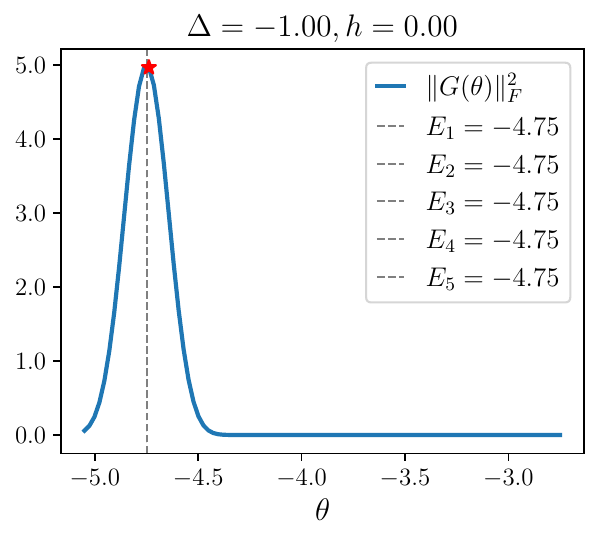}
    \includegraphics[width=0.19\linewidth]{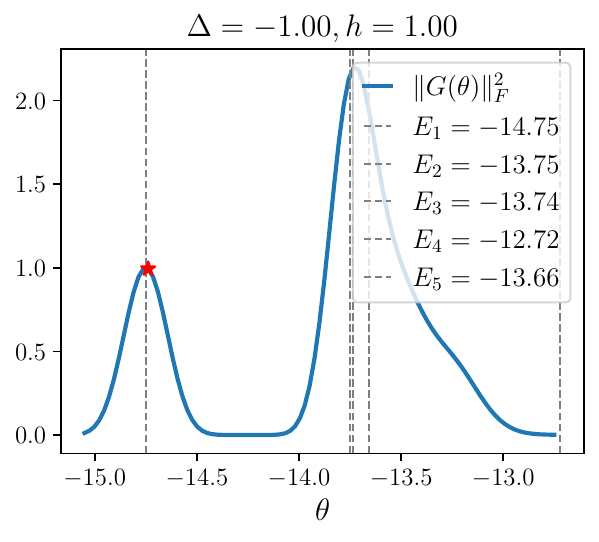}
    \includegraphics[width=0.19\linewidth]{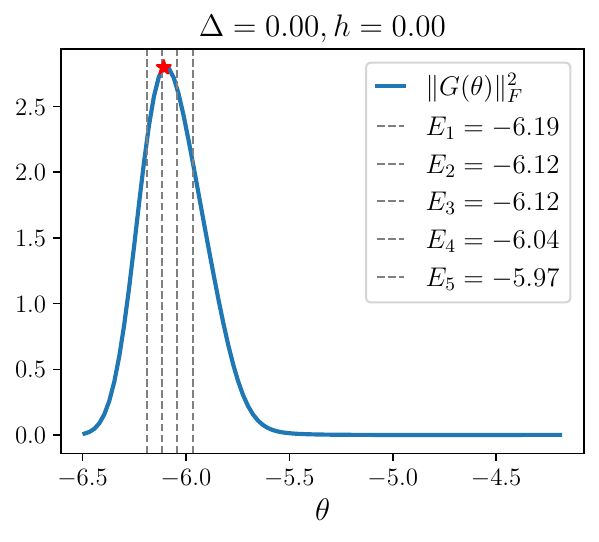}
    \includegraphics[width=0.19\linewidth]{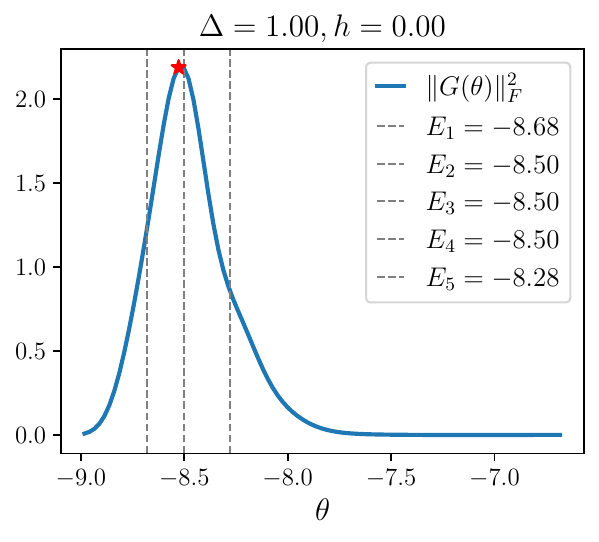}
    \includegraphics[width=0.19\linewidth]{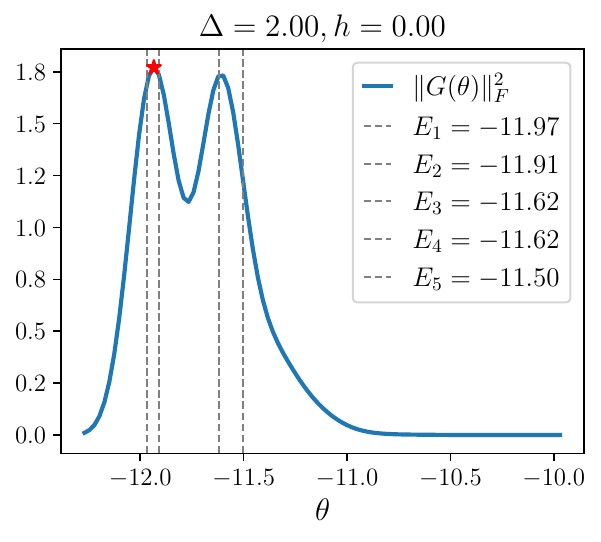}

    \includegraphics[width=0.19\linewidth]{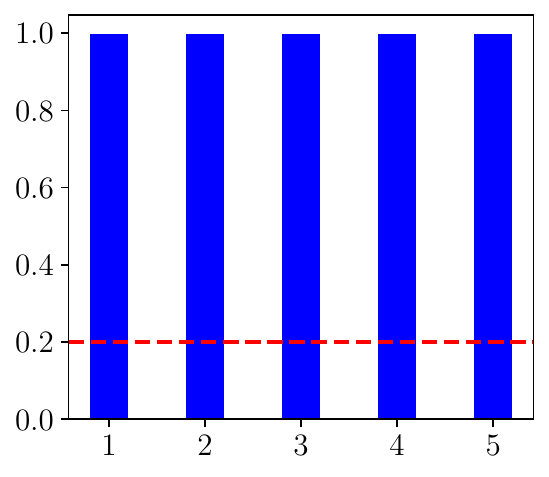}
    \includegraphics[width=0.19\linewidth]{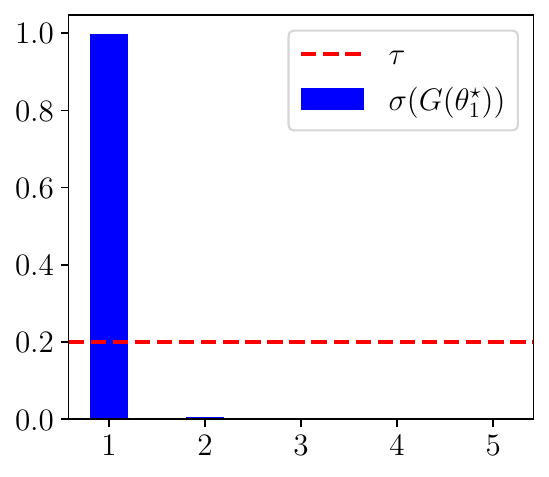}
    \includegraphics[width=0.19\linewidth]{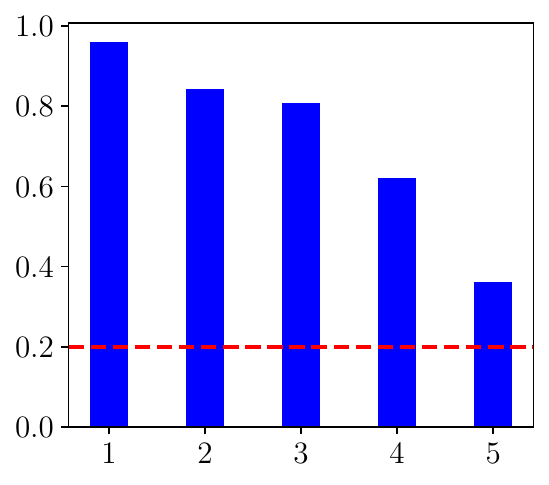}
    \includegraphics[width=0.19\linewidth]{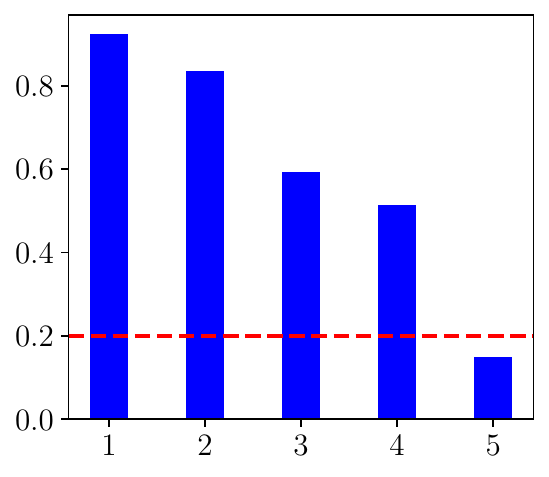}
    \includegraphics[width=0.19\linewidth]{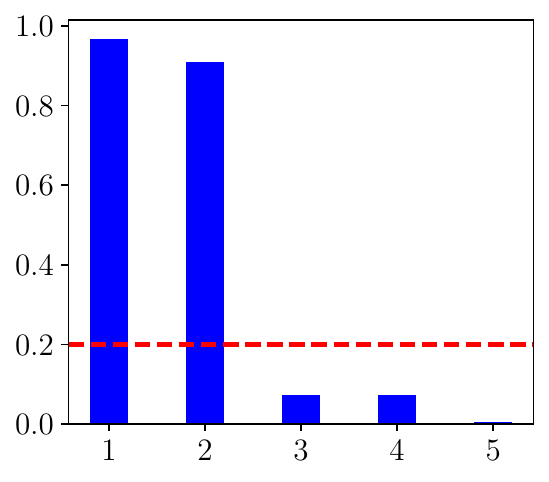}

    \subfloat[\label{XXZ:a} FM, isotropic]{
        \includegraphics[width=0.184\linewidth]{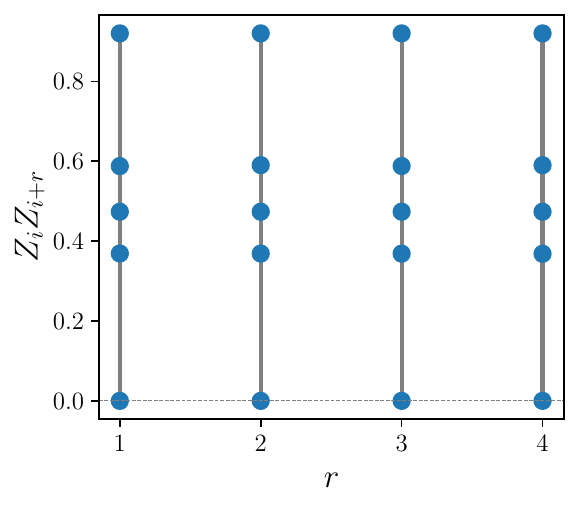}
    }
    \subfloat[\label{XXZ:b} FM, polarized]{
        \includegraphics[width=0.184\linewidth]{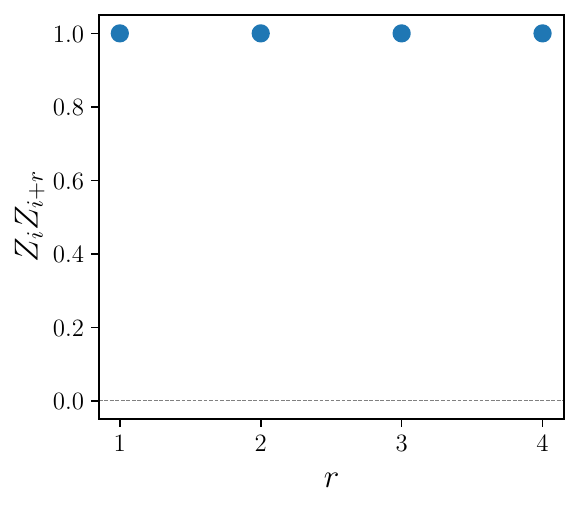}
    }
    \subfloat[\label{XXZ:c} Gapless, XY]{
        \includegraphics[width=0.184\linewidth]{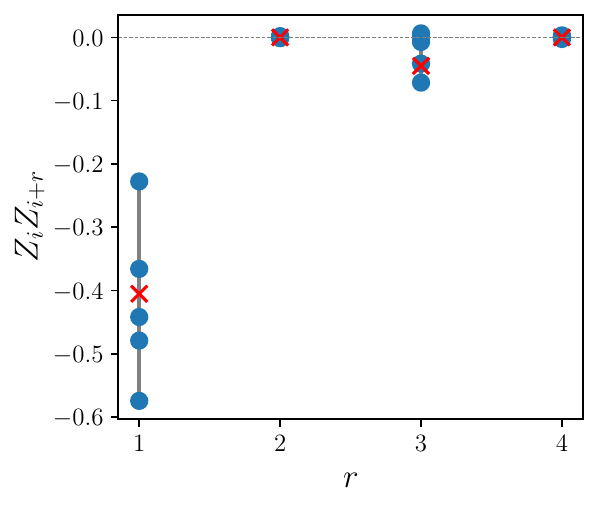}
    }
    \subfloat[\label{XXZ:d} Gapless, Heisenberg]{
        \includegraphics[width=0.184\linewidth]{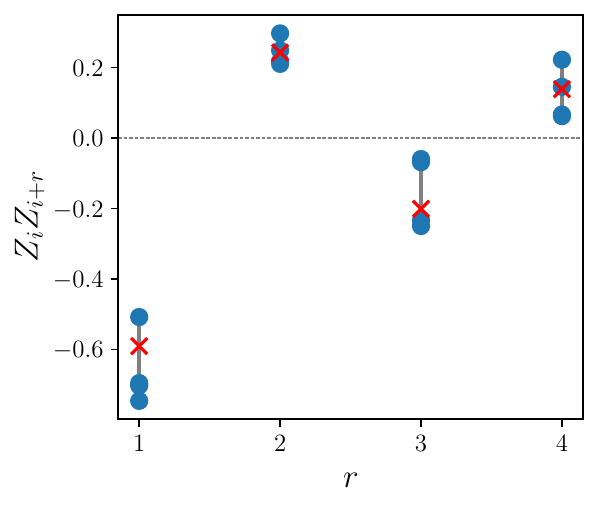}
    }
    \subfloat[\label{XXZ:e} AFM]{
        \includegraphics[width=0.184\linewidth]{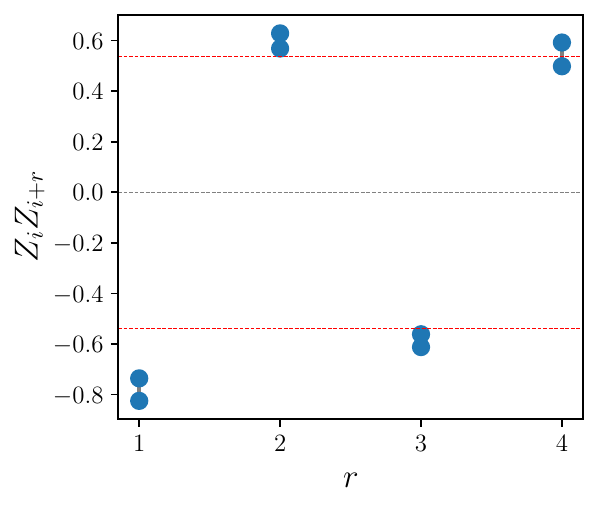}
    }
    
    \caption{Illustration of the QFAMES algorithm on the XXZ model. The top row shows the Frobenius norm of $G(\theta)$ around the ground state energy for $(\Delta,h)=(-1,0),(-1,1),(0,0),(1,0)$ and $(2,0)$, from left to right. In the top row, the gray vertical dashed lines indicate the five lowest-lying energy eigenvalues. The middle row shows the singular values of $G(\theta_1^{\star})$, where $\theta_1^{\star}$ is the leftmost peak found in the corresponding Frobenius norm plot (marked with red stars). The SVD cutoff threshold is set to be 0.2. The bottom row shows the estimated ranges of the correlation function $\braket{{Z_i} {Z_{i+r}}}$ in the center of the chain obtained with \cref{alg:QFAMES_obs}, where $1\le r \le 4$. The red crosses in (c) and (d) represent exact values for ground state correlation functions, and the red dashed lines in (e) indicate the {asymptotic staggered correlation} $\pm\lim_{r\to\infty} \lim_{N_{\rm spin}\to\infty}\left|\braket{{Z_i} {Z_{i+r}}}\right|$.
    }
    \label{fig:XXZ}
\end{figure*}

We also consider the spin-$1/2$ anisotropic Heisenberg (XXZ) chain in the presence of an external magnetic field, whose Hamiltonian is 
\begin{equation}
  H = \sum_{i=1}^{N_{\rm spin}-1} \left( X_{i}X_{i+1} + Y_{i}Y_{i+1} + \Delta Z_{i}Z_{i+1} \right) + h \sum_{i=1}^{N_{\rm spin}}Z_i.
\end{equation}
The phase diagram of this model has been extensively studied (see, for example, \cite{Langari1998, Kitanine2002, Rakov2016}). Its distinct phases can be characterized by the correlation functions, $\braket{{Z_i} {Z_{i+r}}}$, evaluated at the center of the chain.

Similar to TFIM, we set the system size $N_{\rm spin}=20$ and choose 5 initial states to be orthogonal, low-energy MPS with bond dimension $D = 3$, which is sufficient to obtain large overlaps with the ground state and first several excited states, and $T / \sqrt{2} = 5$. The rest of the parameters are the same as the TFIM experiment. We perform QFAMES with the following sets of parameters $\left( \Delta, h \right)$:
\begin{itemize}
    \item $\left( \Delta, h \right) = (-1, 0)$ (\cref{XXZ:a}). It is unitarily equivalent to the isotropic ferromagnetic (FM) Heisenberg model, which has $N_{\rm spin}+1$ degenerate ferromagnetic ground states due to $SU(2)$ symmetry. 
    Five of them are detected here, limited by the number of initial states provided. The values of correlators in $z$ direction vary depending on the specific ground state orientation, whereas they all exhibit ferromagnetic behaviors.
    \item $\left( \Delta, h \right) = (-1, 1)$ (\cref{XXZ:b}). The ground state degeneracy is lifted by the magnetic field, such that the system has a unique ground state {$\ket{\downarrow \cdots \downarrow}$} and $\braket{{Z_i} {Z_{i+r}}} = 1$ for all $r$.
    \item $\left( \Delta, h \right) = (0, 0)$ (\cref{XXZ:c}) and $(1, 0)$ (\cref{XXZ:d}). They correspond to the XY and Heisenberg models, respectively. Both of them are in a gapless, antiferromagnetic phase, where the correlation functions are quasi-long range that decay algebraically in $r$.  
    With our choice of $T$, we detect a couple of energy eigenstates around the first peak of $\left\| G(\theta) \right\|_F$, as expected for a gapless spectrum. Red crosses indicate exact values for ground state correlation functions $\braket{{Z_i} {Z_{i+r}}}$ of both the XY model~\cite{Lieb1961} and isotropic Heisenberg model~\cite{Sakai2003,Boos2005}, which lie within the ranges obtained with \cref{alg:QFAMES_obs} (solid gray lines).
    \item $\left( \Delta, h \right) = (2, 0)$ (\cref{XXZ:e}). The system is in an antiferromagnetic (AFM) gapped phase. When the system size $N_{\rm spin}$ is even, the two lowest-energy states form a quasi-degenerate pair~\cite{Grijalva2019}, with a splitting that decreases rapidly as $N_{\rm spin}$ increases. This quasi-degeneracy can be observed with QFAMES. The ground states exhibit long-range antiferromagnetic order~\cite{Vaxter1973}, and the corresponding {asymptotic staggered correlation} $\pm\lim_{r\to\infty}  \lim_{N_{\rm spin}\to\infty}\left|\braket{{Z_i} {Z_{i+r}}}\right|$ is indicated by red dashed lines.
\end{itemize}

\section{Discussion}\label{sec:discussion}

We presented QFAMES, a quantum spectral filtering framework that estimates dominant eigenvalues and their multiplicities by jointly processing cross-correlations from multiple initial states. By exploiting off-diagonal information through generalized Hadamard tests, QFAMES recovers spectral structures that are inaccessible to {phase estimation from a single initial state} and extends naturally to observable estimation within identified clusters. The algorithm admits rigorous efficiency guarantees.

The quantum data produced by QFAMES naturally form an order-$3$ tensor. One could attempt to solve the \textsc{dods} problem using general tensor-decomposition techniques \cite{kolda2009tensor,harshman1970foundations,carroll1970analysis,Leu_93,bro1997parafac,sharan2017orthogonalized} or approaches tailored to high-order time series \cite{roemer2014analytical,hangleiter2024robustly}. However, to the best of our knowledge, none of these algorithms provide provable efficiency guarantees in the high-noise or overcomplete regime, and it remains unclear how to extend the existing theoretical results to this setting. In contrast, our analysis gives a simple estimator that is tailored to the QFAMES tensor structure, yields provable efficiency and robustness to measurement noise, and provides exact multiplicity recovery under the stated assumptions. We expect these ideas to be useful beyond the present setting.

The QFAMES algorithm has broad potential applications. For instance, it can be employed to probe the ground-state manifolds of models with topological orders that are difficult to analyze with classical algorithms, such as those exhibiting the fractional quantum Hall effect~\cite{Stormer1999,Regnault2011}. In addition, {although the density of states (DOS) at finite temperatures typically scales exponentially with system size, QFAMES remains of physical interest for specialized applications, such as the study of quantum many-body scars~\cite{Turner2018a,Lin2019Scar,Chandran2023}. These scars represent a set of rare energy eigenstates in non-integrable systems that avoid thermalization. Unlike typical eigenstates at finite energy densities, which exhibit volume-law entanglement, scar states usually possess sub-volume-law entanglement. Consequently, they can be well captured by tractable ansatze such as the low-bond-dimension MPS demonstrated in this work. This enables QFAMES to be applied to characterize their properties.}

{A particularly promising direction, enabled by the mixed-state extension in \cref{sec:QFAMES_mixed}, is to interface QFAMES with state-preparation procedures that are intrinsically dissipative and therefore naturally output mixed states and/or do not admit a controlled, reversible state-preparation oracle. This setting arises in several recent proposals for dissipative Gibbs-state and ground-state preparation. In this regime, our method based on swap-test-based estimators of mixed-state cross-correlators (and their observable-decorated counterparts) provides a practical route to estimating dominant eigenvalues and their multiplicities, as well as extremal observable values within a (nearly) degenerate manifold. Importantly, pairing families of mixed states through cross-correlations can reveal information that is not accessible by measuring the observable on each individual mixed state alone (cf. the toy example in~\cref{sec:QFAMES_mixed}).}

{This perspective is especially relevant for metastability in open-system dynamics. In many physically interesting models, dissipative dynamics can exhibit long-lived metastable manifolds induced by energetic or entropic bottlenecks, with polynomial-time relaxation into a metastable subspace but potentially exponentially slow convergence to the true global equilibrium~\cite{bergamaschi2025,Chen2025Local,gamarnik2024,rakovszky2024,Yin2025,Bovier2006,BinderYoung1986,DebenedettiStillinger2001,DiehlZoller2008,Turner2018a,bergamaschi2025_2}. Mixed-state QFAMES offers a diagnostic tool in this context: by filtering around a target low-energy cluster and analyzing the resulting effective rank and generalized-eigenvalue extrema, one can characterize the dimension of the metastable subspace and bound physically meaningful observables within it, providing quantitative probes of metastable phases and long-lived prethermal or glassy regimes that are of broad interest in condensed matter physics and quantum materials science~\cite{Yin2025,Bovier2006,BinderYoung1986,DebenedettiStillinger2001,DiehlZoller2008,Turner2018a}.}

{
These examples, together with the availability of control-free, ancilla-free implementations (\cref{sec:ancilla_free}), highlight the versatility of QFAMES as a general and efficient tool for uncovering spectral properties across diverse quantum systems, in particular on early fault-tolerant quantum devices.
}

\begin{acknowledgments}
This work was supported in part by the U.S. Department of Energy, Office of Science, National Quantum Information Science Research Centers, Quantum Systems Accelerator (Z.D., L.L.), and by the U.S. Department of Energy, Office of Science, Accelerated Research in Quantum Computing Centers, Quantum Utility through Advanced Computational Quantum Algorithms, grant no. DE-SC0025572 (Y.Y, L.L.).
L.L. is a Simons Investigator in Mathematics.  We thank Wendy Billings, Dominik Hangleiter, Martin Head-Gordon and Birgitta Whaley for helpful discussions. 
\end{acknowledgments}

\appendix
\begin{widetext}
    \setlength\nomlabelwidth{14em}
\nomenclature[01]{$[n]$}{The index set $\{ 0, 1, \cdots, n-1\}$}
\nomenclature[B,01]{$H=\sum_{m\in [M]}\lambda_m \ket{E_m}\bra{E_m}$}{Hamiltonian of interest and its spectral decomposition}
\nomenclature[B,02]{$\mc{D}\subset [M]$}{Index set of dominant eigenvalues}
\nomenclature[B,03]{$I\in \mathbb{N}_+$}{The number of distinct dominant eigenvalues}
\nomenclature[B,04]{$\{\lambda_i^{\star}\}\subset \mathbb{R}$}{The distinct dominant eigenvalues}
\nomenclature[B,05]{$\mc{D}_i\subset \mc{D}$}{The index set of dominant eigenvectors with eigenvalue $\lambda_i^{\star}$}
\nomenclature[B,07]{$\Delta\in \mathbb{R}_+$}{The minimum separation between distinct dominant eigenvalues}
\nomenclature[C,01]{$L, R\in \mathbb{N}_+$}{Number of initial states}
\nomenclature[C,02]{$\{U_l\}_{l\in [L]},\ \{V_r\}_{r\in [R]}$}{Initial state preparation unitaries}
\nomenclature[C,03]{$\{\ket{\phi_l}\}_{l\in [L]},\ \{\ket{\psi_r}\}_{r\in [R]}$}{Initial states prepared by $U_l$ and $V_r$}
\nomenclature[C,04]{$\Phi\in \C^{L\times M}$}{Overlaps between $\{\ket{\phi_l}\}$ and $\{\ket{E_m}\}$}
\nomenclature[C,05]{$\Psi\in \C^{R\times M}$}{Overlaps between $\{\ket{\psi_r}\}$ and $\{\ket{E_m}\}$}
\nomenclature[C,06]{$p_{\rm tail}\in \mathbb{R}$}{Total overlaps with the non-dominant eigenvectors}
\nomenclature[C,07]{$p_{\min}\in \mathbb{R}$}{Minimum overlap with a dominant eigenvector}
\nomenclature[C,08]{$\chi\in \mathbb{R}_+$}{Uniform overlap condition parameter}
\nomenclature[D,01]{$N\in \mathbb{N}_+$}{Number of evolution times}
\nomenclature[D,02]{$\{t_n\}_{n\in [N]}\subset \mathbb{R}$}{Evolution time samples}
\nomenclature[D,03]{$a_T^{\rm trunc}(t)$}{Probability density function (PDF) of the truncated Gaussian distribution}
\nomenclature[D,04]{$F(x)$}{Fourier transform of $a(t)$;  $F(x)=\int_{-\infty}^{\infty} e^{\i x t} a(t)\,\d t$}
\nomenclature[D,05]{$T\in \mathbb{R}$}{Circuit depth parameter}
\nomenclature[D,06]{$\sigma\in \mathbb{R}$}{Gaussian truncation parameter such that $a_T(t)$ is supported over $[-\sigma T,\sigma T]$}
\nomenclature[E,01]{$Z\in \C^{L\times R\times N}$}{Data tensor collected from the measurement outcomes for \textsc{dods} estimation}
\nomenclature[E,02]{$Z_n\in \C^{L\times R}$}{A slice of $Z$ defined as $Z_{:,:,n}$}
\nomenclature[E,03]{$\mc{Z}_{l,r}(t)\in \C$}{The expectation $\langle \phi_l|e^{-\i H t} |\psi_r\rangle$}
\nomenclature[E,04]{$G(\theta)\in \C^{L\times R}$}{The contraction of $Z$ along the third coordinate with the vector $(e^{\i\theta t_n})_{n\in [N]}$}
\nomenclature[E,05]{$Z^O\in \C^{L\times R\times N}$}{Data tensor collected from the measurement outcomes for observable estimation}
\nomenclature[E,06]{$G^O(\theta)\in \C^{L\times R}$}{The contraction of $Z^O$ along the third coordinate}
\nomenclature[F,01]{$q\in \mathbb{R}$}{The grid-search parameter such that the grid width is $q/T$}
\nomenclature[F,02]{$J\in\mathbb{N}_+$}{The number of grid points}
\nomenclature[F,03]{$\{\theta_j\}\subset \mathbb{R}$}{The grid points}
\nomenclature[F,04]{$\mc{W}_j$}{$\|G(\theta_j)\|_F$ evaluated at each grid point}
\nomenclature[F,05]{$\alpha\in \mathbb{R}$}{Width of the block}
\nomenclature[F,06]{$\{\mc{B}_i\}$}{Blocked grid points around each peak}
\nomenclature[F,07]{$\eta\in [0,1]$}{The {failure} probability of the algorithm}
\nomenclature[F,08]{$\wt{I}\in \mathbb{N}_+$}{Estimated number of distinct dominant eigenvalues}
\nomenclature[F,09]{$\{\theta_j^\star\}_{{j}\in [\wt{I}]}\subset [-\pi, \pi]$}{Estimated dominant eigenvalues}
\nomenclature[F,10]{$\tau\in \mathbb{R}$}{The threshold for the singular values}
\nomenclature[F,11]{$\{m_i\}\subset \mathbb{N}_+$}{Estimated multiplicities}
\pagestyle{myheadings}
\markboth{}{}
\printnomenclature
\pagestyle{myheadings}
\markboth{}{}
    \section{Quantum phase estimation and limitation}\label{sec:single_state_limitation}
The standard quantum phase estimation (QPE) algorithm estimates eigenvalues of a Hamiltonian $H$ by applying the quantum Fourier transform (QFT) to an initial state. Recently, a class of ``post-Kitaev'' phase estimation algorithms (see e.g.,~\cite{Somma2019,OBrien2019,DingLin2023,Ding2023simultaneous,ding2024quantum,yi2024quantum,castaldo2025heisenberg}) has been proposed for early fault-tolerant quantum computers. These algorithms build upon the original single-ancilla Kitaev algorithm~\cite{KitaevShenVyalyi2002}, which utilizes the output from the Hadamard test circuit. However, they differ in their strategies for selecting Hamiltonian simulation times and in their use of advanced classical signal processing techniques to handle {noisy} quantum data.

Unlike QFT-based algorithms, post-Kitaev methods require only a single ancilla qubit while still achieving Heisenberg-limited scaling. Moreover, they are capable of simultaneously estimating \emph{multiple} eigenvalues, making them particularly advantageous for early fault-tolerant quantum devices. We first briefly review the structure of post-Kitaev phase estimation algorithms, which can be organized within the following three-step framework:
\begin{enumerate}
    \item Generate a sequence of real numbers $\{t_n\}_{n\in [N]}$ as the Hamiltonian evolution times.
    \item Apply Hadamard test circuit (\cref{fig:hadamard}) to the initial state $\ket{\phi}$ for each time $t_n$, and collect the measurement outcomes to form a dataset $\{(t_n, {\xi}_n)\}_{n\in [N]}$, where ${\xi}_n\in \{\pm1\pm\i\}$.
    \item Perform classical post-processing on the dataset $\{(t_n, {\xi}_n)\}_{n\in [N]}$ to estimate the dominant eigenvalues $\{\lambda_m\}_{m\in \mc{D}}$.
\end{enumerate}

\begin{figure}[htbp]
    \centering
        \begin{equation*}
        \Qcircuit @C=1em @R=1em {
          \lstick{|0\rangle} & \gate{\rm H}  & \qw& \ctrl{1}  &\gate{W}      & \gate{\rm H} & \meter \\
          \lstick{\ket{0}} & \qw & \gate{{\rm INIT}}  & \gate{e^{-\mathbf{i}t_n H}} &\qw & \qw & \qw&
        }
        \end{equation*}
        \caption{Hadamard test circuit. $W$ is either the identity or the phase gate $S^\dagger$. Here ${\rm INIT} \ket{0}=\ket{\phi}$.}
        \label{fig:hadamard}
\end{figure}
For each $n$, the Hadamard test circuit guarantees that
\begin{align}
    \E[{\xi}_n]=\langle \phi|e^{-\i H t_n} |\phi\rangle = \sum_{m\in [M]} \underbrace{|\langle \phi|E_m\rangle|^2}_{:=p_m}\cdot e^{-\i \lambda_m t_n}\,.
\end{align}
Thus, the dataset $\{(t_n, {\xi}_n)\}_{n\in [N]}$ corresponds to  unbiased noisy samples of the Fourier signal
\begin{align}
    x(t)=\sum_{m\in [M]}p_m \exp(-\i \lambda_m t).
\end{align}
Estimating the dominant eigenvalues $\{\lambda_m\}_{m\in \mc{D}}$ from these samples is then equivalent to the classical spectral estimation problem, which has been extensively studied in signal processing. Different post-Kitaev phase estimation algorithms primarily vary in the choice of $t_n$ in the first step and the classical spectral estimation technique used in the third step to extract the dominant eigenvalues $\{\lambda_m\}_{m\in \mc{D}}$ from these samples. Examples include complex exponential least squares \cite{DingLin2023,Ding2023simultaneous,ding2023robust}, robust phase estimation \cite{ni2023lowdepth}, Gaussian filtering and searching \cite{wfz22,ding2024quantum}, ESPRIT \cite{ni2023lowdepth_2,ding2024esprit}, compressed sensing \cite{yi2024quantum,castaldo2025heisenberg}, to name a few.

Theoretical analyses have demonstrated that state-of-the-art post-Kitaev algorithms can achieve Heisenberg-limited scaling. Specifically, for a target precision $\epsilon$, the total Hamiltonian evolution time required scales as $1/\epsilon$, ensuring that the algorithm estimates each dominant eigenvalue to within $\epsilon$-accuracy~\cite{Ding2023simultaneous,ding2024quantum,ni2023lowdepth_2}. Moreover, these algorithms achieve ``short'' circuit depths: for well-separated dominant eigenvalues, the maximal evolution time scales as $p_{\rm tail} / \epsilon$. These properties, along with other advantages (see \cite{ding2024quantum}), make post-Kitaev methods particularly suitable for early fault-tolerant quantum devices.

However, QPE and post-Kitaev phase estimation algorithms encounter fundamental limitations when it comes to resolving closely spaced eigenvalues or determining the multiplicities of degenerate eigenvalues. In the nearly degenerate case, although these algorithms guarantee that the output is close to one of the dominant eigenvalues, they cannot distinguish eigenvalues separated by extremely small gaps. For instance, if a Hamiltonian has two dominant eigenvalues with an arbitrarily small separation, the algorithm may return a single estimate that is $\epsilon$-close to both, without revealing their multiplicities. In the degenerate case, when access is restricted to a single initial state, these algorithms are incapable of producing the correct multiplicity. This limitation is formalized in~\cref{prop:no-go}.

\section{Uniform overlap condition for the QFAMES algorithm}
\label{sec:assump}

For a matrix $A\in\mathbb{C}^{m\times n}$ with $m\ge n$, denote its singular values as $\{s_i\}_{i\in [n]}$. Then we define $s_{\min}(A)=\min_{i} s_i$ to be the smallest singular value of $A$. We also define
\begin{equation}
s_{\mathrm{avg}}(A)=\sqrt{\frac{\sum_{i} s_i^2}{n}}=\frac{\norm{A}_F}{\sqrt{n}}
\end{equation}
to be the average singular value of $A$, where $\norm{A}_F$ is the Frobenius norm of $A$.

In this section, we formally present the uniform overlap condition for the QFAMES algorithm.

\begin{assumption}[Uniform overlap condition]\label{asp:linear_dep}
For each $i\in [I]$, let $\mc{D}_i:=\{m\in \mc{D}|\lambda_m = \lambda_i^{\star}\}$ denote the index set of dominant eigenvectors with the same eigenvalue $\lambda_i^{\star}$.
We assume there exists a constant $\chi>0$ such that, for every $i\in [I]$, the following inequalities hold:
\begin{equation}\label{eqn:linear_independent}
s_{\min}\left(\Phi_{:,\mc{D}_i}\right)>\frac{s_{\mathrm{avg}}\left(\Phi_{:,\mc{D}_i}\right)}{1+\chi},\quad s_{\min}\left(\Psi_{:,\mc{D}_i}\right)>\frac{s_{\mathrm{avg}}\left(\Psi_{:,\mc{D}_i}\right)}{1+\chi}.
\end{equation}
\end{assumption}

Recall from \cref{eqn:sufficient_coverage_informal} that
\begin{equation}
\min_{\ket{E}\in \mathcal{E}_i, \braket{E|E} = 1 }  \sum_{l\in [L]} \left| \braket{\phi_l | E} \right|^2=\min_{\norm{u}=1}  \left\|\Phi_{:,\mc{D}_i}u\right\|^2=s_{\min}\left(\Phi_{:,\mc{D}_i}\right)^2.
\end{equation}
Then the uniform overlap condition states that the {smallest} singular value must be $(1+\chi)^{-1}$ times larger than the average singular value, for both $\Phi_{:,\mc{D}_i}$ and $\Psi_{:,\mc{D}_i}$.

The uniform overlap condition implies that $\Phi_{:,\mc{D}_i}$ and $\Psi_{:,\mc{D}_i}$ have full column {rank}. The latter is a \emph{necessary} condition for any algorithm {in this access model} to correctly estimate multiplicities. This is summarized in the following proposition. As a corollary, when there is only a single initial state, it is impossible to resolve eigenstate degeneracy in the setting of \cref{prop:no-go}, which corresponds {to} the case of QPE and post-Kitaev phase estimation algorithms.

\begin{proposition}[Information-theoretic indistinguishability]
    \label{prop:no-go}
    Assume that we only have access to data produced by the Hadamard test with controlled Hamiltonian evolution $\exp(-\i Ht)$ and left and right initial states with overlap matrices $\Phi$ and $\Psi$ as defined in \cref{eq:def_Phi_Psi}. Let $\mc{D}_i$ denote the indices of a degenerate energy eigenvalue subspace. There exists no classical signal-processing algorithm that can reliably determine the multiplicity $|\mc{D}_i|$,
    if $ {\rm rank} \left( \Phi_{:,\mc{D}_i}  \right) < |\mc{D}_i|$ or $ {\rm rank} \left( \Psi_{:,\mc{D}_i} \right) < |\mc{D}_i|$.
\end{proposition}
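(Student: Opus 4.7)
The plan is to establish information-theoretic indistinguishability by an explicit construction: I will exhibit two Hamiltonians that yield \emph{identically distributed} Hadamard-test outcomes for every choice of $(l,r,t)$, yet whose multiplicities at the target eigenvalue $\lambda_i^\star$ differ by one. The starting observation is that the probability distribution of the single-qubit measurement $Z_{l,r,n}\in\{\pm 1\pm\i\}$ from the generalized Hadamard test depends on the underlying Hamiltonian only through the complex number $\mc{Z}_{l,r}(t)=\braket{\phi_l|e^{-\i H t}|\psi_r}$, since the real and imaginary parts of this quantity are precisely the biases of the two Hadamard branches. Thus any classical post-processing algorithm ingests samples from a law that is a functional of $(l,r,t)\mapsto \mc{Z}_{l,r}(t)$, and two Hamiltonians inducing the same function are indistinguishable to every such algorithm.

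Without loss of generality suppose $\mathrm{rank}(\Phi_{:,\mc{D}_i})<|\mc{D}_i|$; the $\Psi$ case is symmetric after swapping the roles of the left and right initial states. By the rank-nullity theorem there exists a unit vector $v\in\C^{|\mc{D}_i|}$ with $\Phi_{:,\mc{D}_i}\,v=0$. Define
\begin{equation}
\ket{E}:=\sum_{m\in\mc{D}_i} v_m\,\ket{E_m}\in \mc{E}_i,\qquad \text{so that}\qquad \braket{\phi_l|E}=\bigl(\Phi_{:,\mc{D}_i}v\bigr)_l=0 \quad \forall\,l\in[L].
\end{equation}
Extend $v$ to an orthonormal basis $\{v,v^{(2)},\dots,v^{(|\mc{D}_i|)}\}$ of $\C^{|\mc{D}_i|}$, inducing a corresponding orthonormal basis $\{\ket{E},\ket{\widetilde E_2},\dots,\ket{\widetilde E_{|\mc{D}_i|}}\}$ of $\mc{E}_i$. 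Construct $H'$ from $H$ by leaving the spectrum outside $\mc{E}_i$ untouched, keeping $\ket{\widetilde E_2},\dots,\ket{\widetilde E_{|\mc{D}_i|}}$ as eigenvectors of eigenvalue $\lambda_i^\star$, but reassigning $\ket{E}$ to a fresh eigenvalue $\lambda'$ chosen generically away from the rest of the spectrum of $H$ (so no accidental degeneracy elsewhere is created).

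Verification that both Hamiltonians generate identical data is a one-line calculation. Letting $\Pi_{\mc{E}_i}$ be the orthogonal projector onto $\mc{E}_i$, the contribution of $\mc{E}_i$ to $\mc{Z}_{l,r}(t)$ under $H$ is $e^{-\i\lambda_i^\star t}\braket{\phi_l|\Pi_{\mc{E}_i}|\psi_r}$, whereas under $H'$ the same qubits in $\mc{E}_i$ contribute
\begin{equation}
e^{-\i\lambda_i^\star t}\braket{\phi_l|\Pi_{\mc{E}_i}-\ket{E}\bra{E}|\psi_r}+e^{-\i\lambda' t}\braket{\phi_l|E}\braket{E|\psi_r}.
\end{equation}
Both terms that differ from the $H$-expression are proportional to $\braket{\phi_l|E}=0$, so $\mc{Z}_{l,r}'(t)=\mc{Z}_{l,r}(t)$ for every $l,r,t$. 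However, the multiplicity of $\lambda_i^\star$ is $|\mc{D}_i|$ for $H$ and $|\mc{D}_i|-1$ for $H'$, establishing that no algorithm operating on the Hadamard data can determine $|\mc{D}_i|$ reliably.

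The only step requiring mild care is the last paragraph's choice of $\lambda'$: one must avoid letting $\lambda'$ collide with an existing eigenvalue of $H$, which would merely relocate rather than remove the ambiguity. This is handled by picking $\lambda'$ generically (for instance in a tiny open interval disjoint from the spectrum of $H$). Otherwise the argument is clean: it reduces the impossibility claim to the elementary observation that nullspace vectors of $\Phi_{:,\mc{D}_i}$ produce eigencomponents that are completely invisible to the cross-correlator $\mc{Z}_{l,r}(t)$, and therefore can be freely relocated in energy without any observable consequence.
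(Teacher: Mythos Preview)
Your proof is correct and follows the same core idea as the paper: exhibit a null vector of $\Phi_{:,\mc{D}_i}$, observe that the corresponding eigencomponent is invisible in every $\mc{Z}_{l,r}(t)$, and relocate it to a different eigenvalue to produce a Hamiltonian with strictly smaller multiplicity but identical data. The paper's version first simplifies to a Hamiltonian with a single degenerate eigenvalue, uses the SVD of $\Phi$ to isolate the entire null space at once, and explicitly builds new overlap matrices $\tilde\Phi,\tilde\Psi$ of reduced width to witness a multiplicity of $k=\mathrm{rank}(\Phi_{:,\mc{D}_i})$; you instead work directly with the general $H$, pick a single null vector, and reduce the multiplicity by one. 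Your route is slightly more direct (no simplifying assumption on $H$, no need to argue about row normalization of the new overlap matrices), while the paper's route shows in one shot that the multiplicity cannot be pinned down any finer than $\mathrm{rank}(\Phi_{:,\mc{D}_i})$. Either way the substance is the same.
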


\begin{proof} For simplicity, we assume $\lambda^\star$ is the only eigenvalue of $H$ with degenerate eigenvectors indexed by $\mc{D}$, meaning $H=\lambda^\star\sum_{{i}\in \mc{D}}\ket{\varphi_i}\bra{\varphi_i}$. According to the assumptions, when the noise is negligible, we have access to quantities of the form
    \begin{align}
        \mc{Z}_{l,r} (t) =  \braket{\phi_l | e^{-\i Ht} | \psi_r} = e^{-\i \lambda^\star t} \sum_{m\in\mc{D}} \Phi_{l,m} \Psi_{r,m}^{*}\,.
    \end{align}

    The corresponding data matrix is $\Phi \Psi^{\dagger}$. If $ {\rm rank} \left( \Phi  \right) = k < |\mc{D}|$, consider the singular value decomposition of $\Phi = U S V^{\dagger}$, where the singular values $s_j = S_{j,j}$ are sorted in descending order. Therefore $(\Phi V)_{:,j} = (US)_{:, j} = 0$ for any $j \ge k$ (assume the column vectors are indexed by $j \in [|\mc{D}|] = \{ 0, 1, \cdots, |\mc{D}| - 1\}$). Let us take $\tilde{\Phi}\in \mathbb{C}^{L\times (k+1)}$ and $ \tilde{\Psi}\in \mathbb{C}^{R\times (k+1)}$
    defined as
    \begin{equation}
        \begin{aligned}
             \tilde{\Phi}_{:,j} &:= \left(\Phi V\right)_{:,j},\ \tilde{\Psi}_{:,j} := \left(\Psi V\right)_{:,j},\text{ for }j\in[k]\,;\\
             \tilde{\Phi}_{:,k} &:= 0,\ \tilde{\Psi}_{r,k} := \sqrt{1 - \sum_{j \in [k]} \left|  \tilde{\Psi}_{r,j}  \right|^2
             },\text{ for } r\in [R]\,.
        \end{aligned}
    \end{equation}
    $\tilde{\Phi}, \tilde{\Psi}$ correspond to the overlap matrices generated by
    a different Hamiltonian $\tilde{H}$ that has only multiplicity $k$ at the dominant eigenvalue $\lambda^{\star}$, and an additional eigenvector $\ket{E_k}$ with a different eigenvalue. $\ket{E_k}$ is defined to ensure the normalization conditions for the overlap matrices: $\left\|\tilde{\Phi}_{l,:}\right\|= \left\|\tilde{\Psi}_{r,:}\right\|= 1$, for $l\in [L]$ and $r \in [R]$.

    Since by construction
    \begin{align}
        \tilde{\Phi}\tilde{\Psi}^{\dagger} = \tilde{\Phi}_{:,[k]}\tilde{\Psi}_{:,[k]}^{\dagger} = \Phi_{:,\mc{D}} V V^{\dagger}{\Psi}_{:,\mc{D}}^{\dagger} =  \Phi{\Psi}^{\dagger}\,,
    \end{align}
    we cannot distinguish the signals generated by $H$ and $\tilde{H}$, and hence cannot correctly resolve the multiplicity. The case for $ {\rm rank} \left( \Psi_{:,\mc{D}} \right) < |\mc{D}|$ is analogous.
\end{proof}

In our proof of the main theorem (\cref{thm:non_orthogonal_location}), the uniform overlap condition plays an important role in two aspects:

\begin{enumerate}
    \item In eigenvalue estimation, the condition~\cref{eqn:linear_independent} is applied in~\cref{eqn:Phi_Psi_lower_bound} to ensure that
    \[
        \left\|\Phi_{:,{\mc{D}_{\pi_\theta}}}\left(\Psi_{:,{\mc{D}_{\pi_\theta}}}\right)^\dagger\right\|_F^2
        \quad \text{and} \quad
        \|\mc{G}(\theta\approx \lambda^\star_{{\pi_\theta}})\|_F
    \]
    admit sufficiently large lower bounds. This guarantees that the dominant eigenvalues can be detected by identifying the peaks of $\|\mc{G}(\theta)\|_F$.

    \item In multiplicity estimation, the condition ensures that the overlap matrix
    \[
\Phi_{:,\mc{D}_{\pi_\theta}}\left(\Psi_{:,\mc{D}_{\pi_\theta}}\right)^\dagger
    \]
    has rank equal to the corresponding degeneracy. This property is then quantitatively exploited in~\cref{eqn:lower_bound_singularvalue} to establish a lower bound on the singular values of $\mc{G}(\theta\approx \theta^\star_{\pi_i})$.
\end{enumerate}

\section{Rigorous version of~\cref{thm:main} and proof}\label{sec:rigorous_version}

In this section, we introduce and prove a generalized and rigorous version of~\cref{thm:main}. In the more general setting, we consider the case where each dominant eigenvalue is not exactly degenerate but instead lies within a small interval of width $\delta$ that is centered at $\lambda^\star_i$. This motivates the following definition:
\begin{definition}[Dominant eigenvalue clusters]\label{asp:eig_cluster}
There {exist} parameters $\Delta \gg \delta > 0$ and $I\in \mathbb{N}_+$ such that:
\begin{itemize}
    \item The dominant eigenvalues $\{\lambda_m\}_{m\in \mc{D}}$ are covered by $I$ disjoint cluster intervals $\{\mc{I}_i\}_{i=1}^I$ that are centered at $\{\lambda_i^{\star}\}^I_{i=1}$.
    \item Each cluster interval has length at most $\delta$.
    \item The cluster intervals are $\Delta$-well-separated. That is, for all $i\ne j\in [I]$, $\mathrm{dis}(\mc{I}_i, \mc{I}_j)\geq \Delta$.
\end{itemize}
Here, $I$ is the number of dominant cluster intervals, $\delta$ is the maximum width of each interval, and $\Delta$ is the minimum pairwise separation between intervals.
\end{definition}
We note that \cref{thm:main} corresponds to the special case $\delta = 0$, where each cluster reduces to a single dominant eigenvalue with exact degeneracy. In the more general setting, a dominant eigenvalue cluster may be viewed as an \emph{approximately degenerate} eigenvalue, and our goal is to estimate both the center of each cluster and the number of dominant eigenvalues it contains. (We slightly abuse notation and use $\lambda_i^\star$ to denote the center of the $i$-th cluster $\mc{I}_i$.) The QFAMES algorithm (\cref{alg:QFAMES}) applies directly to this case.

We are now ready to state a generalized and rigorous version of \cref{thm:main}:
\begin{theorem}[General version of~\cref{thm:main}]\label{thm:non_orthogonal_location}
Suppose~\cref{asp:linear_dep}
hold with $\chi=\mc{O}(1)$, and the dominant eigenvalue cluster length parameter $\delta$ is sufficiently small. Given failure probability $\eta\in (0,1)$, if the parameters satisfy the following conditions: $\widetilde{I}\geq I$,
\begin{equation}\label{eqn:condition_old}
T=\Omega\left(\frac{1}{\Delta}\log\left(\frac{KLR}{p_{\rm tail}}\right)\right),\
N=\Omega\left(\frac{LR}{p_{\rm tail}^2}\log\left(\left(J+K\right)\frac{LR}{\eta}\right)\right),\
\sigma=\Omega\left(\log^{1/2}\left( \frac{\sqrt{LR}}{p_{\rm tail}} \right)\right)\,,
\end{equation}
\begin{equation}\label{eqn:condition_old_2}
q=\mc{O}\left(\frac{p_{\rm tail}}{(1+\sigma)\sqrt{KLR}}\right), \quad q=\mc{O}\left(\sqrt{\log\left(\frac{1+2\chi}{1+\chi}\right)}\right), \quad \alpha=\mc{O}(\Delta\cdot T)\,,
\end{equation}
and
\begin{equation}\label{eqn:T_alpha_extra}
T=\mc{O}\left(\frac{1}{\delta}\min\left\{\frac{p_{\rm tail}}{(1+\sigma)\sqrt{KLR}},\frac{p^2_{\rm tail}}{KLR}\right\}\right),\quad \alpha=\Omega(\delta \cdot T)\,,
\end{equation}
\begin{align}\label{eqn:suff_dom}
    {\frac{p_{\min}}{p_{\rm tail}}} > 6(1 + 2\chi).
\end{align}
Here $K = |\mc{D}|$, and $J = \left\lfloor \frac{2\pi T}{q}\right\rfloor$ is the number of candidate locations of cluster centers $\left\{\theta_i\right\}$ in the block and searching algorithm.

Then, with probability at least $1-\eta$, there exists a permutation $\pi:[I]\rightarrow [I]$
such that the output of~\cref{alg:QFAMES} $\{(\theta_j^\star,m_j)\}_{j\in [\wt{I}]}$ satisfies the following guarantees:
\begin{equation}\label{eqn:location_error}
\sup_{1\leq j\leq I}|\theta^\star_j-\lambda^\star_{\pi(j)}|=\mc{O}\left((1+\sigma)\frac{p_{\rm tail}}{p_{\min}}\frac{1}{T}\right)\,.
\end{equation}
Furthermore, if $p_{\rm tail}/p_{\min},\delta$ are sufficiently small so that
\begin{equation}\label{eqn:condition_rank}
\frac{1-\mc{O}\left(\left((1+\sigma)p_{\rm tail}/p_{\min}+\delta\cdot T\right)^2\right)}{(1+\chi)^2}=\Omega\left(\frac{p_{\rm tail}}{p_{\min}}\right)\,.
\end{equation}
we can choose $\tau=\Theta(p_{\rm tail})$, and $\alpha=\Omega(\log(KLR/p_{\rm tail}))$ to ensure that, with probability at least $1-\eta$, we have
\begin{align}
    m_{j}=|\mc{D}_{\pi(j)}|,\quad 1\leq j\leq I,\quad m_j=0,\quad j>I\,.
\end{align}
\end{theorem}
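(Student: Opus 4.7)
The plan is to decompose the analysis into four stages: (i) uniform concentration of the empirical matrix $G(\theta)$ around its expectation $\mc{G}(\theta)$ across the search grid; (ii) structural analysis of the landscape $\|\mc{G}(\theta)\|_F$ establishing that its peaks align with cluster centers $\lambda_i^\star$; (iii) quantitative perturbation bounds yielding the location error in \cref{eqn:location_error}; and (iv) rank/SVD analysis at the identified peaks for exact multiplicity recovery. Stage (i) follows from a Hoeffding-type bound applied entrywise to $G(\theta)-\mc{G}(\theta)$: since $|Z_{l,r,n}\,e^{\i\theta t_n}| \le 2$ and the $N$ samples are i.i.d., a union bound over the $J$ grid locations and the $LR$ entries, with $N$ chosen as in \cref{eqn:condition_old}, yields $\|G(\theta)-\mc{G}(\theta)\|_F = \mc{O}(p_{\rm tail})$ at every grid point simultaneously with probability at least $1-\eta$.

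For stage (ii), I would use the factorization
\[ \mc{G}(\theta) = \Phi \cdot \diag\bigl( F_T(\theta-\lambda_m)\bigr)_{m\in[M]} \cdot \Psi^\dagger, \]
where $F_T$ denotes the Fourier transform of the truncated Gaussian density $a_T^{\rm trunc}$. The choices of $T$ and $\sigma$ in \cref{eqn:condition_old} make $|F_T(x)|$ decay like $\exp(-x^2 T^2)$ down to a floor of order $e^{-\sigma^2/4}$, so the filter weight on any non-dominant eigenvalue or any dominant eigenvalue outside the $i$-th cluster is negligible on the $p_{\rm tail}/\sqrt{KLR}$ scale. Consequently, at $\theta = \lambda_i^\star$ the matrix $\mc{G}(\lambda_i^\star)$ equals $\Phi_{:,\mc{D}_i} D_i (\Psi_{:,\mc{D}_i})^\dagger$ plus an $\mc{O}(p_{\rm tail})$ perturbation, where $D_i$ is diagonal and close to the identity; combining with \cref{asp:linear_dep} and the sufficient dominance \cref{eqn:suff_dom} gives $\|\mc{G}(\lambda_i^\star)\|_F \gtrsim p_{\min}/(1+\chi)$, strictly above the background bound $\|\mc{G}(\theta)\|_F \lesssim p_{\rm tail}$ that holds whenever $\theta$ is at distance $\ge \alpha/T$ from every cluster. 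This peak-background gap, together with the block width and grid spacing in \cref{eqn:condition_old_2}, lets me argue inductively that the search-and-block procedure returns $I$ correct peaks (under some permutation $\pi$) before producing any spurious candidate. Stage (iii) is then a Taylor expansion of $\|\mc{G}(\theta)\|_F^2$ around each $\lambda_{\pi(i)}^\star$: the leading curvature is of order $T^2 p_{\min}^2$ while the sampling noise is $\mc{O}(p_{\rm tail})$, producing the location error in \cref{eqn:location_error}.

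For stage (iv), I would show that at each identified $\theta_j^\star$ the matrix $G(\theta_j^\star)$ has exactly $|\mc{D}_{\pi(j)}|$ singular values above a cluster-dependent floor and all remaining singular values bounded by $\mc{O}(p_{\rm tail})$. The floor comes from \cref{asp:linear_dep}: the ``ideal'' matrix $\Phi_{:,\mc{D}_{\pi(j)}}(\Psi_{:,\mc{D}_{\pi(j)}})^\dagger$ has smallest nonzero singular value lower bounded by $s_{\min}(\Phi_{:,\mc{D}_{\pi(j)}})\,s_{\min}(\Psi_{:,\mc{D}_{\pi(j)}}) \gtrsim p_{\min}/(1+\chi)^2$, and Weyl's singular-value inequalities propagate this floor to $\mc{G}(\theta_j^\star)$ and then to $G(\theta_j^\star)$ after absorbing three perturbations: the $D_i \ne I$ mismatch controlled by $\delta T$ via \cref{eqn:T_alpha_extra}, the residual tail weight $\mc{O}(p_{\rm tail})$, and the location error from stage (iii). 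Condition \cref{eqn:condition_rank} is precisely what keeps this floor strictly above $\tau = \Theta(p_{\rm tail})$, so the SVD count returns $m_j = |\mc{D}_{\pi(j)}|$. For $j > I$, the blocking has already excluded every cluster neighborhood, so $\|\mc{G}(\theta_j^\star)\|_F = \mc{O}(p_{\rm tail})$, every singular value of $G(\theta_j^\star)$ falls below $\tau$, and $m_j = 0$.

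The hard part will be the approximately-degenerate regime $\delta > 0$: within a cluster the filter weights $F_T(\theta_j^\star - \lambda_m)$ for $m \in \mc{D}_{\pi(j)}$ differ by $\mc{O}((\delta T)^2)$, so $\mc{G}(\theta_j^\star)$ is not quite proportional to $\Phi_{:,\mc{D}_{\pi(j)}}(\Psi_{:,\mc{D}_{\pi(j)}})^\dagger$ but is twisted on both sides by a nearly-identity diagonal $D_i$. Transferring the minimum-singular-value lower bound through this two-sided twist, together with the tail and noise perturbations, requires a multiplicative rather than purely additive Weyl argument, and this is where the conditions \cref{eqn:T_alpha_extra} and \cref{eqn:condition_rank} become tight. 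A secondary technical point is that the search-and-block procedure operates only on a finite grid of spacing $q/T$, so within each blocking interval of radius $\alpha/T$ I must verify that the grid maximum of $\|G(\theta)\|_F$ approximates the continuous maximum closely enough to retain the peak-background gap; this imposes the Lipschitz-type trade-off between $q$ and $\alpha$ already appearing in \cref{eqn:condition_old_2}.
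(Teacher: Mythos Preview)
Your four-stage architecture matches the paper's proof closely (their Steps 0--4 correspond to your stages (i)--(iv) in the same logical order), and your treatment of stages (i), (ii), and (iv) is essentially correct. However, stage (iii) as you describe it has a genuine gap that would produce the wrong bound.

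You write that the curvature of $\|\mc{G}(\theta)\|_F^2$ near a peak is of order $T^2 p_{\min}^2$ while the sampling noise is $\mc{O}(p_{\rm tail})$, and that this yields \cref{eqn:location_error}. But a curvature-versus-noise-magnitude argument only gives $|\theta_j^\star - \lambda_{\pi(j)}^\star| \lesssim \sqrt{p_{\rm tail}/p_{\min}}\,/\,T$, not the claimed $(1+\sigma)\,p_{\rm tail}/(p_{\min}T)$. Concretely: if $\|G(\theta)\|_F^2 = \|A(\theta)\|_F^2 + g(\theta)$ with $\|A(\theta)\|_F^2 \approx c_0 - c_1 T^2 p_{\min}^2(\theta-\lambda^\star)^2$ and $|g| \lesssim p_{\min}\,p_{\rm tail}$ (the cross term $2\Re\tr(A^\dagger \mc{E})$ is of this size), comparing values at the empirical and true maxima gives $c_1 T^2 p_{\min}^2(\theta^\star-\lambda^\star)^2 \lesssim p_{\min}\,p_{\rm tail}$, which is the square-root bound.

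What the paper exploits---and what your plan is missing---is the \emph{Lipschitz structure of the perturbation in $\theta$}: because every $t_n$ satisfies $|t_n|\le \sigma T$, one has $\|\mc{E}(\theta)-\mc{E}(\theta')\|_F \lesssim p_{\rm tail}\,\sigma T\,|\theta-\theta'|$ with high probability (and similarly for the tail and off-cluster pieces). This turns the comparison $\|G(\theta^\star)\|_F^2 \ge \|G(\wt\theta^\star)\|_F^2$ (with $\wt\theta^\star$ the grid point nearest $\lambda^\star$) into a \emph{linear} inequality in $\delta_\theta := T|\theta^\star-\lambda^\star|$: the cross-term difference is bounded by $(1+\sigma)\,p_{\min}\,p_{\rm tail}\,\delta_\theta + \mc{O}(p_{\rm tail}^2)$ rather than by $2\|g\|_\infty$. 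Solving the resulting quadratic $-c\,\delta_\theta^2 + (1+\sigma)(p_{\rm tail}/p_{\min})\,\delta_\theta + \mc{O}((p_{\rm tail}/p_{\min})^2) \ge 0$ gives the correct $(1+\sigma)\,p_{\rm tail}/(p_{\min}T)$ scaling, and is also where the factor $(1+\sigma)$ in \cref{eqn:location_error} originates. Note too that this refinement is only valid once a crude $\mc{O}(1/T)$ localization is already in hand (so that the quadratic approximation to $\|A(\theta)\|_F^2$ holds), which is why the paper runs a two-pass argument: Step~1 extracts $|\theta^\star-\lambda^\star| = \mc{O}(1/T)$ from the peak--background gap alone, and Step~2 bootstraps via the Lipschitz bound. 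Your stage (i) should therefore also include the Lipschitz-type concentration for $\mc{E}(\theta)-\mc{E}(\theta')$, not just the pointwise bound.
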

\begin{remark}\label{rem:thm_condition} Here, we give some comments about the conditions and assumptions in the above theorem:
\begin{itemize}
\item The conditions~\cref{eqn:condition_old} and~\cref{eqn:condition_old_2} are similar to those in~\cite[Theorem 3.2]{ding2024quantum}, which ensure the algorithm can identify each cluster separately.

\item In the above theorem,~\cref{eqn:T_alpha_extra} essentially requires that $\delta$ is sufficiently small so that all dominant eigenvalues within a cluster can be approximated by a single output. When $T=\Theta(\alpha/\delta)$, the analysis becomes more involved, as a single cluster may yield multiple outputs, making the second step of the algorithm difficult to track.

\item {\cref{eqn:suff_dom} can be referred to as the sufficient domination condition, defined after \cref{prob:dods_estimation}.}

\item Although the condition~\cref{eqn:condition_rank} appears technical, it essentially requires that $\chi$ and $p_{\rm tail}$ are small so that the right-hand side is close to $1$. This condition ensures a separation between the singular values of $\mc{W}_j$, which is necessary for accurately counting the number of dominant eigenvalues in each cluster.
\end{itemize}

\end{remark}

\begin{remark}
    Together, the outputs of the location and multiplicity estimation define an approximate density $\wt{\mu}$ which approximates the true density $\mu_D$ up to a small error in Wasserstein distance.
Let $\theta_i^\star$ be the estimated center of $\mc{I}_i$. Define
\begin{equation}
    \begin{aligned}
    \wt{\mu}(E):=&~ \frac{1}{|\mc{D}|}{\sum^I_{i=1}}m_i\delta(E-\theta_i^\star)\,,\\
    \mu_D(E) := &~ \frac{1}{|{\mc{D}}|} \sum_{i\in \mc{D}}\delta(E-\lambda_i)\,.
    \end{aligned}
\end{equation}
{Let $\epsilon:=\sup_{1\leq j\leq I}|\theta_j^\star-\lambda_{\pi(j)}^\star|$.}
Consider a transport map that transports each atom at {$\lambda_j$} in $\mc{I}_i$ to $\theta_i^\star$. The cost is
\begin{align}
    {|\lambda_j - \theta_i^\star| \leq |\lambda_j - \lambda_i^\star| + |\lambda_i^\star - \theta_i^\star|\leq \epsilon+\frac{\delta}{2}}\,.
\end{align}
Averaging over the {$|\mc{D}|$} atoms gives that
\begin{align}
    W_1(\mu_D, \wt{\mu})\leq \epsilon + \frac{\delta}{2}\,.
\end{align}
\end{remark}

\begin{proof}
The proof strategy of \cref{thm:non_orthogonal_location} is similar to the proof of~\cite[Theorem 3.2]{ding2024quantum}. We first show a rough estimation for the location of the clusters and then refine it to get a more accurate estimation. First of all, let us define the exact filter matrix $\mc{F}(\theta)$, the data matrix $G(\theta)$ that approximates $\mc{F}(\theta)$, the error matrix $E(\theta)$, and the computed filter value function $\mc{W}(\theta)$:
\begin{equation}
    \begin{aligned}
\mc{F}(\theta):=&~ \Phi \cdot \diag\left(\{\exp(-(\theta-\lambda_m)^2T^2)\}_{m\in [M]}\right) \cdot \Psi^\dagger\in \C^{L\times R}\,,\\
G(\theta):=&~ \frac{1}{N}\sum^N_{n=1}Z_n\exp(i\theta t_n)\in \C^{L\times R}\,,\\
\mathcal{E}(\theta):=&~ G(\theta)- \mc{F}(\theta)\in \C^{L\times R}\,,\\
\mc{W}(\theta):=&~ \left\|G(\theta)\right\|_F\in \R\,.
\end{aligned}
\end{equation}

\textbf{Step 0:} We show that the computed filter value function $\mc{W}(\theta)$ well approximates $\lVert \mc{F}(\theta) \rVert_F$.

Since $\sigma=\Omega\left(\log^{1/2}\left(\frac{\sqrt{LR}}{p_{\rm tail}}\right)\right)$ and $N=\Omega\left(\frac{LR}{p_{\rm tail}^2}\log\left(\left(J+K\right)\frac{LR}{\eta}\right)\right)$, according to~\cref{lem:truncation_error} with $\beta={p_{\rm tail}} / {8}$,
we have
\begin{equation}\label{eqn:truncation_error}
\sup_{\theta_j}\left|\mc{W}(\theta_j)-\left\| \mc{F}(\theta_j)\right\|_F\right|\leq \sup_{\theta_j}\|\mathcal{E}(\theta_j)\|_F\leq \frac{p_{\rm tail}}{8}
\end{equation}
with probability at least $1-\eta/2$. For simplicity, we only consider the case when~\cref{eqn:truncation_error} is satisfied, which is true with probability at least $1-\eta/2$.

\textbf{Step 1:} For each $\theta^\star_j$, let $\pi_j=\argmin_i\left\{\mathrm{dist}(\theta^\star_j,{\mc{I}_i})\right\}$. We show
\begin{align}
    \sup_{1\leq j\leq I}|\theta^\star_j-\lambda^\star_{\pi_j}|{\leq}\sqrt{\log\left(\frac{2(1+2\chi)}{1+\chi}\right)}\frac{1}{T}
\end{align}
when~\cref{eqn:truncation_error} holds true. To achieve this, we need to prove lower and upper bounds for $\mc{W}(\theta)$:
\begin{itemize}
  \item If $\theta\in \left(\cup_i [\lambda^\star_i-\frac{q}{2T},\lambda^\star_i+\frac{q}{2T}]\right)\cap \left\{\theta_j\right\}^J_{j=1}$, there exists $\pi_\theta$ (index of the cluster that $\theta$ belongs to) such that
  \begin{equation}\label{eqn:theta_distance}
  \begin{aligned}
      &\max_m\left\{|\theta-\lambda_m| : m\in \mc{D}_{\pi_\theta}\right\}\leq \frac{q}{2T}+\frac{\delta}{2},\\
      &\min_m\left\{|\theta-\lambda_m| : m\in \mc{D}\setminus \mc{D}_{\pi_\theta}\right\}\geq \Delta-\frac{q}{2T}\geq \Delta/2\,,
  \end{aligned}
  \end{equation}
  where we use $T\geq \frac{q}{\Delta}$ to ensure the second inequality holds. This implies that $\theta$ lies within a single cluster and is at least a distance of $\Delta/2$ from any other cluster.

Using~\cref{eqn:truncation_error} and the triangle inequality for the Frobenius norm, we observe that
\begin{equation}\label{eqn:W_F_estimation}
\begin{aligned}
\mc{W}(\theta)\geq & \left\|\Phi_{:,\mc{D}_{\pi_\theta}} \cdot \diag\left(\left\{\exp(-(\theta-\lambda_m)^2T^2)\right\}_{m\in \mc{D}_{\pi_\theta}}\right) \cdot \left(\Psi_{:,\mc{D}_{\pi_\theta}}\right)^\dagger\right\|_F\\
&-\sum_{i\neq \pi_\theta} \left\|\Phi_{:,\mc{D}_{i}} \cdot \diag\left(\left\{\exp(-(\theta-\lambda_m)^2T^2)\right\}_{m\in \mc{D}_{i}}\right) \cdot \left(\Psi_{:,\mc{D}_{i}}\right)^\dagger\right\|_F\\
&-\left\|\Phi_{:,\mc{D}^c}\cdot \diag\left(\left\{\exp(-(\theta-\lambda_m)^2T^2)\right\}_{m\in \mc{D}^c}\right) \cdot \left(\Psi_{:,\mc{D}^c}\right)^\dagger\right\|_F-\frac{p_{\rm tail}}{8}
\end{aligned}
\end{equation}
For the second term, using the second inequality of~\cref{eqn:theta_distance} and~\cref{lem:G_theta_property}~\cref{eqn:G_theta_bound}, we have
\begin{equation}\label{eqn:W_F_other_domin_upper_bound}
\begin{aligned}
&\sum_{i\neq \pi_\theta} \left\|\Phi_{:,\mc{D}_{i}} \cdot \diag\left(\left\{\exp(-(\theta-\lambda_m)^2T^2)\right\}_{m\in \mc{D}_{i}}\right) \cdot \left(\Psi_{:,\mc{D}_{i}}\right)^\dagger\right\|_F\\
\leq &~\exp(-\Delta^2T^2/4)\sum_{i\neq \pi_\theta}\sum_{m\in \mc{D}_{i}}\sqrt{\left(\sum^L_{l=1} |\Phi_{l,m}|^2\right)\left(\sum^R_{r=1} |\Psi_{r,m}|^2\right)}\\
\leq &~\exp(-\Delta^2T^2/4)\sqrt{K-1}\sqrt{\sum_{i\neq \pi_\theta}\sum_{m\in \mc{D}_{i}}\left(\sum^L_{l=1} |\Phi_{l,m}|^2\right)\left(\sum^R_{r=1} |\Psi_{r,m}|^2\right)}\\
\leq &~\exp(-\Delta^2T^2/4)\sqrt{K-1}\sqrt{ \sum_{i\neq \pi_\theta}\sum_{m\in \mc{D}_{i}}\sum^L_{l=1} |\Phi_{l,m}|^2
R }\\
\leq &~\sqrt{KLR}\exp(-\Delta^2T^2/4)\leq \frac{p_{\rm tail}}{8}\,.
\end{aligned}
\end{equation}
where we use H\"older's inequality in the second inequality and $T=\Omega\left(\frac{1}{\Delta}\log(KLR/p_{\rm tail})\right)$
in the last inequality. Similarly, for the third term, we have
\begin{equation}
    \begin{aligned}
    &\left\|\Phi_{:,\mc{D}^c}\cdot \diag\left(\left\{\exp(-(\theta-\lambda_m)^2T^2)\right\}_{m\in \mc{D}^c}\right) \cdot \left(\Psi_{:,\mc{D}^c}\right)^\dagger\right\|_F\\
    \leq &~\sum_{m\in \mc{D}^c}\sqrt{\left(\sum^L_{l=1} |\Phi_{l,m}|^2\right)\left(\sum^R_{r=1} |\Psi_{r,m}|^2\right)}=p_{\rm tail}\,.
    \end{aligned}
\end{equation}
Plugging these two inequalities into~\cref{eqn:W_F_estimation}, we have
\begin{equation}\label{W_theta_lower_bound_2}
\begin{aligned}
\mc{W}(\theta)\geq &\left\|\Phi_{:,\mc{D}_{\pi_\theta}} \cdot \diag\left(\left\{\exp(-(\theta-\lambda_m)^2T^2)\right\}_{m\in \mc{D}_{\pi_\theta}}\right) \cdot \left(\Psi_{:,\mc{D}_{\pi_\theta}}\right)^\dagger\right\|_F-\frac{5}{4}p_{\rm tail}
\end{aligned}\,.
\end{equation}
For the first term, we notice that $\left|\lambda_m-\lambda^\star_{\pi_\theta}\right|\leq \delta$ when $m\in\mathcal{D}_{\pi_\theta}$.

Combining this and~\cref{lem:G_theta_property}~\cref{eqn:G_theta_bound} we have
\begin{equation}\label{eqn:small_diff}
\begin{aligned}
&\left\|\Phi_{:,\mc{D}_{\pi_\theta}} \cdot \diag\left(\left\{\exp(-(\theta-\lambda_m)^2T^2)-\exp(-(\theta-\lambda^\star_{\pi_\theta})^2T^2)\right\}_{m\in \mc{D}_{\pi_\theta}}\right) \cdot \left(\Psi_{:,\mc{D}_{\pi_\theta}}\right)^\dagger\right\|_F\\
\leq &~\delta\cdot T\sum_{m\in [\mc{D}_{\pi_\theta}]}\sqrt{\left(\sum^L_{l=1} |\Phi_{l,m}|^2\right)\left(\sum^R_{r=1} |\Psi_{r,m}|^2\right)}\leq \delta\cdot T\sqrt{KLR}\leq \frac{p_{\rm tail}}{4}\,,
\end{aligned}
\end{equation}
where we use $\delta=\mathcal{O}(p_{\rm tail}/(\sqrt{KLR}T))$ in the last equality. This gives
\begin{equation}
    \begin{aligned}
&\left\|\Phi_{:,\mc{D}_{\pi_\theta}} \cdot \diag\left(\left\{\exp(-(\theta-\lambda_m)^2T^2)\right\}_{m\in \mc{D}_{\pi_\theta}}\right) \cdot \left(\Psi_{:,\mc{D}_{\pi_\theta}}\right)^\dagger\right\|_F\\
\geq &~\left\|\Phi_{:,\mc{D}_{\pi_\theta}} \cdot \underbrace{\diag\left(\left\{\exp(-(\theta-\lambda^\star_{\pi_\theta})^2T^2)\right\}_{m\in \mc{D}_{\pi_\theta}}\right)}_{\text{constant diagonal}} \cdot \left(\Psi_{:,\mc{D}_{\pi_\theta}}\right)^\dagger\right\|_F-\frac{p_{\rm tail}}{4}\\
\geq &~\exp\left(-\left(\frac{q}{2T}\right)^2T^2\right)\left\|\Phi_{:,\mc{D}_{\pi_\theta}}\left(\Psi_{:,\mc{D}_{\pi_\theta}}\right)^\dagger\right\|_F-\frac{p_{\rm tail}}{4}\\
\geq &~\frac{1+\chi}{1+2\chi}\left\|\Phi_{:,\mc{D}_{\pi_\theta}}\left(\Psi_{:,\mc{D}_{\pi_\theta}}\right)^\dagger\right\|_F-\frac{p_{\rm tail}}{4}
\end{aligned}
\end{equation}
where we use the fact $q=\mc{O}\left(\sqrt{\log\left(\frac{1+2\chi}{1+\chi}\right)}\right)$ in the last inequality.

Finally, plugging this into~\cref{W_theta_lower_bound_2}, we obtain that
\begin{equation}\label{W_theta_lower_bound_3}
\mc{W}(\theta)\geq \frac{1+\chi}{1+2\chi}\left\|\Phi_{:,\mc{D}_{\pi_\theta}}\left(\Psi_{:,\mc{D}_{\pi_\theta}}\right)^\dagger\right\|_F-\frac{3}{2}p_{\rm tail}\,.
\end{equation}
This gives {a} lower bound of $\mc{W}(\theta)$ if $\theta\in \left(\cup_i \left[\lambda^\star_i-\frac{q}{2T},\lambda^\star_i+\frac{q}{2T}\right]\right)\cap \left\{\theta_j\right\}^J_{j=1}$.

\item Similar to the above calculation and using~\cref{eqn:truncation_error},~\cref{eqn:small_diff} and~\cref{lem:G_theta_property}~\cref{eqn:G_theta_bound} for the first term, we have
\begin{equation}\label{eqn:W_F_upper_bound}
\begin{aligned}
\mc{W}(\theta^\star_j)\leq &~ \left\|\Phi_{:,\mc{D}_{\pi_j}} \cdot \diag\left(\left\{\exp(-(\theta^\star_j-\lambda_m)^2T^2)\right\}_{m\in \mc{D}_{\pi_j}}\right) \cdot \left(\Psi_{:,\mc{D}_{\pi_j}}\right)^\dagger\right\|_F\\
&+\underbrace{\sum_{i\neq \pi_j} \left\|\Phi_{:,\mc{D}_{i}} \cdot \diag\left(\left\{\exp(-(\theta^\star_j-\lambda_m)^2T^2)\right\}_{m\in \mc{D}_{i}}\right) \cdot \left(\Psi_{:,\mc{D}_{i}}\right)^\dagger\right\|_F}_{\leq \sqrt{KLR}\exp(-\Delta^2T^2/4)\leq \frac{p_{\rm tail}}{8}}\\
&+\underbrace{\left\|\Phi_{:,\mc{D}^c}\cdot \diag\left(\left\{\exp(-(\theta^\star_j-\lambda_m)^2T^2)\right\}_{m\in \mc{D}^c}\right) \cdot \left(\Psi_{:,\mc{D}^c}\right)^\dagger\right\|_F}_{\leq p_{\rm tail}}+\frac{p_{\rm tail}}{8}\\
\leq &~ \left\|\Phi_{:,\mc{D}_{\pi_j}} \cdot \diag\left(\left\{\exp(-(\theta^\star_j-\lambda_m)^2T^2)\right\}_{m\in \mc{D}_{\pi_j}}\right) \cdot \left(\Psi_{:,\mc{D}_{\pi_j}}\right)^\dagger\right\|_F+\frac{5}{4}p_{\rm tail}\\
\leq &~\exp\left(-(\theta^\star_j-\lambda^\star_{\pi_j})^2T^2\right)\left\|\Phi_{:,\mc{D}_{\pi_j}}\left(\Psi_{:,\mc{D}_{\pi_j}}\right)^\dagger\right\|_F+\frac{3}{2}p_{\rm tail}\,.
\end{aligned}
\end{equation}
This gives an upper bound of $\mc{W}(\theta^\star_j)$.
\end{itemize}

According to~\cref{W_theta_lower_bound_3,eqn:W_F_upper_bound}, when $j=1$, because $\mc{W}(\theta^\star_1)$ takes the maximal value, we have
\begin{equation}
\begin{aligned}
    \exp\left(-(\theta^\star_1-\lambda^\star_{\pi_1})^2T^2\right)\left\|\Phi_{:,\mc{D}_{\pi_1}}\left(\Psi_{:,\mc{D}_{\pi_1}}\right)^\dagger\right\|_F+\frac{3}{2}p_{\rm tail} \geq \mc{W}(\theta^\star_1)\geq \frac{1+\chi}{1+2\chi}\left\|\Phi_{:,\mc{D}_{\pi_1}}\left(\Psi_{:,\mc{D}_{\pi_1}}\right)^\dagger\right\|_F-\frac{3}{2}p_{\rm tail}\,,
\end{aligned}
\end{equation}
where the second inequality holds since there must exist some grid point $\theta_1$ such that $\theta_1\in [\lambda^\star_{\pi_1}-\frac{q}{2T},\lambda^\star_{\pi_1}+\frac{q}{2T}]$ and $\mc{W}(\theta_1^\star)\geq \mc{W}(\theta_1)$ by definition. Thus,
\begin{equation}
\begin{aligned}
&\exp\left(-(\theta^\star_1-\lambda^\star_{\pi_1})^2T^2\right)\left\|\Phi_{:,\mc{D}_{\pi_1}}\left(\Psi_{:,\mc{D}_{\pi_1}}\right)^\dagger\right\|_F+\frac{3}{2}p_{\rm tail}\geq \frac{1+\chi}{1+2\chi}\left\|\Phi_{:,\mc{D}_{\pi_1}}\left(\Psi_{:,\mc{D}_{\pi_1}}\right)^\dagger\right\|_F-\frac{3}{2}p_{\rm tail}\\
\Longrightarrow\quad &|\theta^\star_1-\lambda^\star_{\pi_1}|\leq \log^{1/2}\Bigg(\frac{1}{\frac{1+\chi}{2+\chi}-3p_{\rm tail}/\big\|\Phi_{:,\mc{D}_{\pi_1}}\left(\Psi_{:,\mc{D}_{\pi_1}}\right)^\dagger\big\|_F}\Bigg)\frac{1}{T}
\end{aligned}
\end{equation}

We further note that, 
\begin{equation}\label{eqn:Phi_Psi_lower_bound}
\begin{aligned}
\left\|\Phi_{:,\mc{D}_{\pi_\theta}}\left(\Psi_{:,\mc{D}_{\pi_\theta}}\right)^\dagger\right\|_F^2=&~
\mathrm{Tr}\left(\Psi_{:,\mc{D}_{\pi_\theta}}\left(\Phi_{:,\mc{D}_{\pi_\theta}}\right)^\dagger\Phi_{:,\mc{D}_{\pi_\theta}}\left(\Psi_{:,\mc{D}_{\pi_\theta}}\right)^\dagger\right)\\
\geq &~ \lambda_{\min}\left(
\left(\Phi_{:,\mc{D}_{\pi_\theta}}\right)^\dagger\Phi_{:,\mc{D}_{\pi_\theta}}\right)
\mathrm{Tr}\left(
\left(\Psi_{:,\mc{D}_{\pi_\theta}}\right)^\dagger \Psi_{:,\mc{D}_{\pi_\theta}}
\right)\\
= &~ \lambda_{\min}\left(
\left(\Phi_{:,\mc{D}_{\pi_\theta}}\right)^\dagger\Phi_{:,\mc{D}_{\pi_\theta}}\right) \left(\sum_{i\in\mc{D}_{\pi_\theta}}\sum^R_{k=1}\left|\Psi_{k,i}\right|^2\right)\\
\geq &~\left(\frac{1}{1+\chi}\right)^2\frac{1}{|\mc{D}_{\pi_\theta}|} \mathrm{Tr}\left(
\left(\Phi_{:,\mc{D}_{\pi_\theta}}\right)^\dagger \Phi_{:,\mc{D}_{\pi_\theta}}
\right)\left(\sum_{i\in\mc{D}_{\pi_\theta}}\sum^R_{k=1}\left|\Psi_{k,i}\right|^2\right)\\
=&~\left(\frac{1}{1+\chi}\right)^2\frac{1}{|\mc{D}_{\pi_\theta}|}\left(\sum_{j\in\mc{D}_{\pi_\theta}}\sum^L_{k=1}\left|\Phi_{k,j}\right|^2\right)\left(\sum_{i\in\mc{D}_{\pi_\theta}}\sum^R_{k=1}\left|\Psi_{k,i}\right|^2\right)\\
\geq &~\left(\frac{1}{1+\chi}\right)^2\frac{1}{|\mc{D}_{\pi_\theta}|}\left(\sum_{i\in\mc{D}_{\pi_\theta}}\left(\sum^L_{k=1}\left|\Phi_{k,i}\right|^2\right)\left(\sum^R_{k=1}\left|\Psi_{k,i}\right|^2\right)\right)\\
\geq &~ \left(\frac{1}{1+\chi}\right)^2 p_{\rm min}^2.
\end{aligned}
\end{equation}

The second inequality comes from \cref{asp:linear_dep}.
Plugging this bound into the above inequality, we have
\begin{equation}\label{eqn:bound_theta_star_1}
|\theta^\star_1-\lambda^\star_{\pi_1}|<\sqrt{\log\left(\frac{1}{\frac{1+\chi}{2+\chi}-3(1+\chi)p_{\rm tail}/p_{\rm min}}\right)}\frac{1}{T}\leq \sqrt{\log\left(\frac{2(1+2\chi)}{1+\chi}\right)}\frac{1}{T}\,.
\end{equation}
where we use $3{p_{\rm tail}} / {p_{\min}}\leq \frac{1}{2(1+2\chi)}$ (\cref{eqn:suff_dom}) in the last step.

Furthermore, since $\mathrm{dist}({\mc{I}_i,\mc{I}_j})>\Delta$, $T=\Omega\left(\left(1+\alpha+q\right)/\Delta\right)$ and $\delta\ll \Delta$, we have
\begin{equation}\label{eqn:sep_btw_one_and_other}
\left[\theta^\star_1-\frac{\alpha}{T},\theta^\star_1+\frac{\alpha}{T}\right]\cap \left(\cup_{i\neq \pi_1} \left[\lambda^\star_i-\frac{q}{2T},\lambda^\star_i+\frac{q}{2T}\right]\right)=\emptyset
\end{equation}
This implies that the first block interval does not overlap with $\cup_{i\neq \pi_1}\left[\lambda^\star_i-\frac{q}{2T},\lambda^\star_i+\frac{q}{2T}\right]$. In addition, $\alpha=\Omega\left(\max\left\{\sqrt{\log\left(\frac{2(1+2\chi)}{1+\chi}\right)},\delta\cdot T\right\}\right)$, we have ${\pi_2}\neq \pi_1$, meaning $\theta^\star_2$ is not close to $\lambda^\star_{\pi_1}$.
More specifically, by \cref{eqn:sep_btw_one_and_other}, we can repeat the above argument for $\theta_2^\star$ and obtain
\begin{align}
    |\theta_2^\star-\lambda_{{\pi_2}}^\star|\leq \sqrt{\log\left(\frac{2(1+2\chi)}{1+\chi}\right)}\frac{1}{T}= \mc{O}\left(\frac{\alpha}{T}\right)\,.
\end{align}
If {$\pi_2$}$=\pi_1$, then by \cref{eqn:bound_theta_star_1} and triangle inequality, we have
\begin{align}
    |\theta_2^\star - \theta_1^\star|\leq |\theta_2^\star-\lambda_{\pi_1}^\star| + |\theta_1^\star-\lambda_{\pi_1}^\star| < \frac{\alpha}{T}\,,
\end{align}
which is impossible since the grid points that are $\frac{\alpha}{T}$-close to $\theta_{1}^\star$ have been blocked.

Repeating this argument for all $1\leq j\leq I$, we can show
\begin{equation}
|\theta^\star_j-\lambda^\star_{\pi_j}|\leq\sqrt{\log\left(\frac{2(1+2\chi)}{1+\chi}\right)}\frac{1}{T}
\end{equation}
for all $1\leq j\leq I$.

\textbf{Step 2:} Using the result obtained in Step 1, we can impose stronger bounds on $\mc{W}(\theta)$ to refine the error estimation.
We will show
\begin{equation}
    \sup_{1\leq j\leq I}|\theta^\star_j-\lambda^\star_{\pi_j}|=\mc{O}\left((1+\sigma)\frac{p_{\rm tail}}{p_{\min}}\frac{1}{T}\right)
\end{equation}
when~\cref{eqn:truncation_error} holds true.

Without loss of generality, we only consider $j=1$. The other cases can be improved similarly. Let us pick $\widetilde{\theta}^\star_1\in  [\lambda^\star_{\pi_1}-\frac{q}{2T},\lambda^\star_{\pi_1}+\frac{q}{2T}]\cap \left\{\theta_j\right\}^J_{j=1}$, which is the closest energy grid point to the desired cluster center $\lambda_{\pi_1}^{\star}$. It can vary from our computed estimate $\theta_1^{\star}$ due to the errors arising from noise, contributions of non-dominant eigenvectors, and of dominant eigenvectors in distant clusters, and it holds that $\mc{W}(\theta_1^{\star}) \ge \mc{W}(\widetilde{\theta}_1^{\star})$. We will bound the distance between $\theta_1^{\star}$ and $\widetilde{\theta}_1^{\star}$, and thus between $\theta_1^{\star}$ and  $\lambda_{\pi_1}^{\star}$.

We first define
\begin{equation}
    \begin{aligned}
  G(\theta)=&~\underbrace{\Phi_{:,\mc{D}_{\pi_1}} \cdot \diag\left(\left\{\exp(-(\theta-\lambda_m)^2T^2)\right\}_{m\in \mc{D}_{\pi_1}}\right) \cdot \left(\Psi_{:,\mc{D}_{\pi_1}}\right)^\dagger}_{:=A(\theta)}\\
  +&\underbrace{\Phi_{:,\mc{D}^c} \cdot \diag\left(\left\{\exp(-(\theta-\lambda_m)^2T^2)\right\}_{m\in \mc{D}^c}\right) \cdot \left(\Psi_{:,\mc{D}^c}\right)^\dagger}_{:=B(\theta)}\\
  +&\underbrace{\sum_{i\neq \pi_1}\Phi_{:,\mc{D}_{i}} \cdot \diag\left(\left\{\exp(-(\theta-\lambda_m)^2T^2)\right\}_{m\in \mc{D}_{i}}\right) \cdot \left(\Psi_{:,\mc{D}_{i}}\right)^\dagger}_{:=C(\theta)}+\mathcal{E}(\theta)\,.
\end{aligned}
\end{equation}
According to~\cref{eqn:W_F_upper_bound},~\cref{eqn:truncation_error} and~\cref{eqn:W_F_other_domin_upper_bound}, we have
\begin{equation}\label{eqn:A_bound}
\begin{aligned}
&\left\|A(\theta^\star_1)\right\|_F\leq \exp\left(-(\theta^\star_1-{\lambda^{\star}_{\pi_1}})^2T^2\right)\left\|\Phi_{:,\mc{D}_{\pi_1}}\left(\Psi_{:,\mc{D}_{\pi_1}}\right)^\dagger\right\|_F+\frac{1}{4}p_{\rm tail}\leq \frac{5}{4}\left\|\Phi_{:,\mc{D}_{\pi_1}}\left(\Psi_{:,\mc{D}_{\pi_1}}\right)^\dagger\right\|_F,\\
&\left\|A\left(\widetilde{\theta}^\star_{1}\right)\right\|_F\leq \exp\left(-\left(\widetilde{\theta}^\star_1-{\lambda^{\star}_{\pi_1}}\right)^2T^2\right)\left\|\Phi_{:,\mc{D}_{\pi_1}}\left(\Psi_{:,\mc{D}_{\pi_1}}\right)^\dagger\right\|_F+\frac{1}{4}p_{\rm tail}\leq \frac{5}{4}\left\|\Phi_{:,\mc{D}_{\pi_1}}\left(\Psi_{:,\mc{D}_{\pi_1}}\right)^\dagger\right\|_F\,,
\end{aligned}
\end{equation}
where we use $p_{\rm tail}<\frac{p_{\min}}{1+\chi}\leq \left\|\Phi_{:,\mc{D}_{\pi_1}}\left(\Psi_{:,\mc{D}_{\pi_1}}\right)^\dagger\right\|_F$ according to~\cref{eqn:Phi_Psi_lower_bound} in the second inequalities, and
\begin{equation}\label{eqn:B_c_bound}
\left\|B(\theta^\star_1)+C(\theta^\star_1)+\mathcal{E}\left(\theta^\star_{1}\right)\right\|_F\leq \frac{5p_{\rm tail}}{4},\quad \left\|B\left(\widetilde{\theta}^\star_{1}\right)+C\left(\widetilde{\theta}^\star_{1}\right)+\mathcal{E}\left(\widetilde{\theta}^\star_{1}\right)\right\|_F\leq \frac{5p_{\rm tail}}{4}\,.
\end{equation}
Then, we get that
\begin{equation}
    \begin{aligned}
\mc{W}^2(\theta)=&~ \mathrm{Tr}(G^\dagger(\theta)G(\theta))\\
=&~ \mathrm{Tr}(A^\dagger(\theta)A(\theta))+2\mathrm{Re}\left(\mathrm{Tr}(A^\dagger(\theta)(B(\theta)+C(\theta)+\mathcal{E}(\theta)))\right)\\
&+\mathrm{Tr}((B^\dagger(\theta)+C^\dagger(\theta)+\mathcal{E}^\dagger(\theta))(B(\theta)+C(\theta)+\mathcal{E}(\theta)))\,.
\end{aligned}
\end{equation}
We first deal with the first term, the last term, and then the second term:
\begin{itemize}
\item For the first term, we have
\begin{equation}\label{eqn:A_theta}
    \begin{aligned}
A(\theta)=&~\Phi_{:,\mc{D}_{\pi_1}} \cdot \diag\left(\left\{\exp(-(\theta-\lambda^\star_{\pi_1})^2T^2)\right\}_{m\in \mc{D}_{\pi_1}}\right) \cdot \left(\Psi_{:,\mc{D}_{\pi_1}}\right)^\dagger\\
&+\Phi_{:,\mc{D}_{\pi_1}} \cdot \diag\left(\left\{\exp(-(\theta-\lambda_m)^2T^2)-\exp(-(\theta-\lambda^\star_{\pi_1})^2T^2)\right\}_{m\in \mc{D}_{\pi_1}}\right) \cdot \left(\Psi_{:,\mc{D}_{\pi_1}}\right)^\dagger\\
\end{aligned}\,.
\end{equation}
Noticing $\left|\lambda_m-\lambda^\star_{\pi_1}\right|\leq \delta$ when $m\in\mathcal{D}_{\pi_1}$, we have
\begin{equation}\label{eqn:Phi_diff}
\left\|\Phi_{:,\mc{D}_{\pi_1}} \cdot \diag\left(\left\{\exp(-(\theta-\lambda_m)^2T^2)-\exp(-(\theta-\lambda^\star_{\pi_1})^2T^2)\right\}_{m\in \mc{D}_{\pi_1}}\right) \cdot \left(\Psi_{:,\mc{D}_{\pi_1}}\right)^\dagger\right\|_F\leq \sqrt{KLR} \delta\cdot T
\end{equation}
according to~\cref{lem:G_theta_property}~\cref{eqn:G_theta_lipschitz}. Thus, we obtain that
\begin{equation}\label{eqn:A_theta_diff}
\left|\left\|A(\theta)\right\|_F-\left\|\Phi_{:,\mc{D}_{\pi_1}} \cdot \diag\left(\left\{\exp(-(\theta-\lambda^\star_{\pi_1})^2T^2)\right\}_{m\in \mc{D}_{\pi_1}}\right) \cdot \left(\Psi_{:,\mc{D}_{\pi_1}}\right)^\dagger\right\|_F\right|\leq \sqrt{KLR}\delta\cdot T\,,
\end{equation}
where we use $\delta\cdot T<1$.

\item For the last term, by~\cref{eqn:B_c_bound}, we have
\begin{equation}
\begin{aligned}
&\mathrm{Tr}((B^\dagger(\theta^\star_1)+C^\dagger(\theta^\star_1)+\mathcal{E}^\dagger(\theta^\star_1))(B(\theta^\star_1)+C(\theta^\star_1)+\mathcal{E}(\theta^\star_1))))\leq \frac{9}{4}p_{\rm tail}^2,\\
&\mathrm{Tr}((B^\dagger(\widetilde{\theta}^\star_{1})+C^\dagger(\widetilde{\theta}^\star_{1})+\mathcal{E}^\dagger(\widetilde{\theta}^\star_{1}))(B(\widetilde{\theta}^\star_{1})+C(\widetilde{\theta}^\star_{1})+\mathcal{E}(\widetilde{\theta}^\star_{1})))\leq \frac{9}{4}p_{\rm tail}^2\,.
\end{aligned}
\end{equation}
\item For the second term, we have
\begin{equation}
\begin{aligned}
&2\mathrm{Re}\left(\mathrm{Tr}(A^\dagger(\theta^\star_1)(B(\theta^\star_1)+C(\theta^\star_1)+\mc{E}(\theta^\star_1))\right)-2\mathrm{Re}\left(\mathrm{Tr}(A^\dagger(\widetilde{\theta}^\star_{1})(B(\widetilde{\theta}^\star_{1})+C(\widetilde{\theta}^\star_{1})+\mc{E}(\widetilde{\theta}^\star_1)))\right)\\
\leq &~2\left\|A(\theta^\star_1)-A(\widetilde{\theta}^\star_{1})\right\|_F\left\|(B(\theta^\star_1)+C(\theta^\star_1)+\mc{E}(\theta^\star_1))\right\|_F\\
&+2\left\|A(\widetilde{\theta}^\star_1)\right\|_F\left\|(B(\theta^\star_1)+C(\theta^\star_1)+\mc{E}(\theta^\star_1))-(B(\widetilde{\theta}^\star_1)+C(\widetilde{\theta}^\star_1)+\mc{E}(\widetilde{\theta}^\star_1))\right\|_F\,.
\end{aligned}
\end{equation}
Similar to the calculation in~\cref{eqn:small_diff}, we obtain from~\cref{eqn:A_theta_diff} that
\begin{equation}
\begin{aligned}
\left\|A(\theta^\star_1)-A(\widetilde{\theta}^\star_{1})\right\|_F\leq &~ T\left|\theta^\star_1-\widetilde{\theta}^\star_{1}\right|\left\|\Phi_{:,\mc{D}_{\pi_1}}\left(\Psi_{:,\mc{D}_{\pi_1}}\right)^\dagger\right\|_F+2\sqrt{KLR}\delta\cdot T\\
\leq &~ T\left|\theta^\star_1-\widetilde{\theta}^\star_{1}\right|\left\|\Phi_{:,\mc{D}_{\pi_1}}\left(\Psi_{:,\mc{D}_{\pi_1}}\right)^\dagger\right\|_F+\frac{p_{\rm tail}}{2}\,,
\end{aligned}
\end{equation}
where we use $\delta\cdot T=\mathcal{O}(p_{\rm tail}/\sqrt{KLR})$ in the last inequality.
According to \cref{lem:G_theta_property}~\cref{eqn:G_theta_lipschitz} and \cref{lem:truncation_error}~\cref{eqn:random_error_2}, we have
\begin{equation}
\left\|B(\theta)-B(\theta')\right\|_F\leq p_{\rm tail}\sigma T|\theta-\theta'|/2,\quad \Pr\left[\forall \theta\ne \theta'\in \Theta:\left\|\mathcal{E}(\theta)-\mathcal{E}(\theta')\right\|_F\leq p_{\rm tail}\sigma T|\theta-\theta'|/2\right]\geq 1-\frac{\eta}{2}\,.
\end{equation}
And we have
\begin{equation}
\begin{aligned}
&\left\|C({\theta^\star_1})-C({\widetilde{\theta}^\star_1})\right\|_F\\
\leq &~\sum_{i\neq \pi_1}\left\|\Phi_{:,\mathcal{D}_i}\left(\exp(-({\theta^\star_1}-\lambda^\star_{i})^2T^2)-\exp(-({\widetilde{\theta}^\star_1}-\lambda^\star_i)^2T^2)\right)\Psi^\dagger_{:,\mathcal{D}_i}\right\|_F\\
&+\sum_{i\neq \pi_1}\left\|\Phi_{:,\mathcal{D}_i}\mathrm{diag}\left(\left\{\exp(-({\theta^\star_1}-\lambda_{m})^2T^2)-\exp(-({\theta^\star_1}-\lambda^\star_i)^2T^2)\right\}_{m\in\mc{D}_i}\right)\Psi^\dagger_{:,\mathcal{D}_i}\right\|_F\\
&+\sum_{i\neq \pi_1}\left\|\Phi_{:,\mathcal{D}_i}\mathrm{diag}\left(\left\{\exp(-({\widetilde{\theta}^\star_1}-\lambda_{m})^2T^2)-\exp(-({\widetilde{\theta}^\star_1}-\lambda^\star_i)^2T^2)\right\}_{m\in\mc{D}_i}\right)\Psi^\dagger_{:,\mathcal{D}_i}\right\|_F\\
\leq &~\sum_{i\neq \pi_1}\left\|\Phi_{:,\mathcal{D}_i}\left(\exp(-({\theta^\star_1}-\lambda^\star_{i})^2T^2)-\exp(-({\widetilde{\theta}^\star_1}-\lambda^\star_i)^2T^2)\right)\Psi^\dagger_{:,\mathcal{D}_i}\right\|_F\\
&+2\delta\cdot T\sqrt{KLR}\\
\leq &~\sum_{i\neq \pi_1}\left|\exp(-({\theta^\star_1}-\lambda^\star_i)^2T^2)-\exp(-({\widetilde{\theta}^\star_1}-\lambda^\star_i)^2T^2)\right|\left\|\Phi_{:,\mathcal{D}_i}\Psi^\dagger_{:,\mathcal{D}_i}\right\|_F+\frac{p^2_{\rm tail}}{4\sqrt{KLR}}\\
\leq &~\frac{p^2_{\rm tail}}{4\sqrt{KLR}} + o(p_{\rm tail}T|\theta-\theta'|)\,,
\end{aligned}
\end{equation}
where we use a similar calculation as~\cref{eqn:small_diff} and $\delta\cdot T=\mathcal{O}(p^{2}_{\rm tail}/KLR)$ in the second step. And the third step follows from
\begin{align*}
\left(\exp(-({\theta^\star_1}-\lambda^\star_i)^2T^2)-\exp(-({\widetilde{\theta}^\star_1}-\lambda^\star_i)^2T^2)\right)\leq &~ T|{\theta^\star_1}-{\widetilde{\theta}^\star_1}|\left(\sup_{|x|\geq \min\{|({\theta^\star_1}-\lambda^\star_i)T|,|({\widetilde{\theta}^\star_1}-\lambda^\star_i)T|\}}2x\exp(-x^2)\right)\\
=&~o\left(\frac{p_{\rm tail}T|{\theta^\star_1}-{\widetilde{\theta}^\star_1}|}{K\|\Phi_{:,\mathcal{D}_i}\Psi^\dagger_{:,\mathcal{D}_i}\|_F}\right)\,,
\end{align*}
where the last inequality comes from $\min\{|({\theta^\star_1}-\lambda^\star_i)T|,|({\widetilde{\theta}^\star_1}-\lambda^\star_i)T|\}{\geq T\Delta /4}$ and $T=\Omega\left(\frac{1}{\Delta}\log\left(\frac{KLR}{p_{\rm tail}}\right)\right)$.

Plugging these three inequalities into the second term and using~\cref{eqn:A_bound} and~\cref{eqn:B_c_bound}, we have  probability at least $1-\eta$ such that
\begin{equation}\label{eqn:A_B_C_inter}
\begin{aligned}
&2\mathrm{Re}\left(\mathrm{Tr}(A^\dagger(\theta^\star_1)(B(\theta^\star_1)+C(\theta^\star_1)+\mc{E}(\theta^\star_1)))\right)-2\mathrm{Re}\left(\mathrm{Tr}(A^\dagger(\widetilde{\theta}^\star_{1})(B(\widetilde{\theta}^\star_{1})+C(\widetilde{\theta}^\star_{1})+\mc{E}(\widetilde{\theta}^\star_1))))\right)\\
\leq &~ 2\left(T\left|\theta^\star_1-\widetilde{\theta}^\star_{1}\right|\left\|\Phi_{:,\mc{D}_{\pi_1}}\left(\Psi_{:,\mc{D}_{\pi_1}}\right)^\dagger\right\|_F+\frac{p_{\rm tail}}{2}\right)\cdot \frac{5}{4}p_{\rm tail}\\
& +2\cdot \frac{5}{4}\left\|\Phi_{:,\mc{D}_{\pi_1}}\left(\Psi_{:,\mc{D}_{\pi_1}}\right)^\dagger\right\|_F\cdot  \left(2p_{\rm tail}\sigma T \left|\theta^\star_1-\widetilde{\theta}^\star_{1}\right|+\frac{p^2_{\rm tail}}{4\sqrt{KLR}}\right)\\
\leq &~ (3+5\sigma)\left\|\Phi_{:,\mc{D}_{\pi_1}}\left(\Psi_{:,\mc{D}_{\pi_1}}\right)^\dagger\right\|_FT\left|\theta^\star_1-\widetilde{\theta}^\star_{1}\right|p_{\rm tail}+\frac{15}{8}p^2_{\rm tail}\\
\leq &~(3+5\sigma)\left\|\Phi_{:,\mc{D}_{\pi_1}}\left(\Psi_{:,\mc{D}_{\pi_1}}\right)^\dagger\right\|_FT
\left(\left|\theta^\star_1-\lambda^\star_{\pi_1}\right| + \frac{q}{T} + \delta \right)
p_{\rm tail} +\frac{15}{8}p^2_{\rm tail}\\
\leq &~(3+5\sigma)\left\|\Phi_{:,\mc{D}_{\pi_1}}\left(\Psi_{:,\mc{D}_{\pi_1}}\right)^\dagger\right\|_FT\left|\theta^\star_1-\lambda^\star_{\pi_1}\right|p_{\rm tail}+\frac{23}{8}p^2_{\rm tail}\,,
\end{aligned}
\end{equation}
where we use $\left\|\Phi_{:,\mc{D}_{\pi_1}}\left(\Psi_{:,\mc{D}_{\pi_1}}\right)^\dagger\right\|_F\leq \sqrt{KLR}$ (according to similar calculation as~\cref{eqn:W_F_other_domin_upper_bound}) in the second inequality and $q=\mc{O}(p_{\rm tail}/((1+\sigma)\sqrt{KLR}))$ and $\delta=\mc{O}(p_{\rm tail}/(T(1+\sigma)\sqrt{KLR}))$ in the last inequality.
\end{itemize}

Now, plugging $\theta=\theta^\star_1$ and $\theta=\widetilde{\theta}^\star_{1}$ into $\mc{W}^2(\theta)$ and using~\cref{eqn:Phi_Psi_lower_bound,eqn:A_theta_diff}, we have
\begin{equation}\label{eqn:W_diff}
\begin{aligned}
0\leq &~\mc{W}^2(\theta^\star_1)-\mc{W}^2\left(\widetilde{\theta}^\star_{1}\right)\\
\leq &~\underbrace{\left(\exp\left(-2\left(\theta^\star_1-\lambda^\star_{\pi_1}\right)^2T^2\right)-\exp\left(-2\left(\widetilde{\theta}^\star_1-\lambda^\star_{\pi_1}\right)^2T^2\right)\right)}_{\leq 0~\text{by the definition of} ~\widetilde{\theta}^\star_1}\left\|\Phi_{:,\mc{D}_{\pi_1}}\left(\Psi_{:,\mc{D}_{\pi_1}}\right)^\dagger\right\|^2_F\\
&+\underbrace{6KLR\delta\cdot T}_{\leq p^2_{\rm tail}} +(3+5\sigma)\left\|\Phi_{:,\mc{D}_{\pi_1}}\left(\Psi_{:,\mc{D}_{\pi_1}}\right)^\dagger\right\|_FT\left|\theta^\star_1-\lambda^\star_{\pi_1}\right|p_{\rm tail}+\frac{23}{8}p_{\rm tail}^2 +\underbrace{\frac{9}{2}p_{\rm tail}^2}_{\text{last term}}\,.
\end{aligned}
\end{equation}
where we use the condition that $\delta \cdot T=\mathcal{O}\left(p^2_{\rm tail}/KLR\right)$.
Because we have already shown $\left|\theta^\star_1-\lambda^\star_{\pi_1}\right|=\mc{O}(1/T)$ (noticing $\chi=\mc{O}(1)$) in the first step, let $\delta_\theta=T\left|\theta^\star_1-\lambda^\star_{\pi_1}\right|$, the above inequality implies that
\begin{equation}\label{eqn:delta_theta_inquality}
-C\delta^2_\theta+\left((3+5\sigma)p_{\rm tail}/\left\|\Phi_{:,\mc{D}_{\pi_1}}\left(\Psi_{:,\mc{D}_{\pi_1}}\right)^\dagger\right\|_F\right)\delta_\theta+\frac{67}{8}\frac{p_{\rm tail}^2}{\left\|\Phi_{:,\mc{D}_{\pi_1}}\left(\Psi_{:,\mc{D}_{\pi_1}}\right)^\dagger\right\|_F^2}\geq 0\,,
\end{equation}
where $C$ is a uniform constant. Thus, we have that
\begin{equation}
\left|\theta^\star_1-\lambda_{\pi_1}^{\star}\right|=\mc{O}\left((1+\sigma)\frac{p_{\rm tail}}{\left\|\Phi_{:,\mc{D}_{\pi_1}}\left(\Psi_{:,\mc{D}_{\pi_1}}\right)^\dagger\right\|_F}\frac{1}{T}\right)=\mc{O}\left((1+\sigma)\frac{p_{\rm tail}}{p_{\min}}\frac{1}{T}\right)\,,
\end{equation}
{where} we use~\cref{eqn:Phi_Psi_lower_bound} and $\chi=\mathcal{O}(1)$ for the last equality.
This concludes the first part of the proof.

\textbf{Step 3:} We show that $m_{{j}}=|\mc{D}_{\pi_j}|$ when $1\leq j\leq I$.

Without loss of generality, we only consider $i=1$. The other cases can also be shown similarly. Define $s_k\left(A\right)$ as the $k$-th singular value of $A$. Recall
\begin{equation}
\begin{aligned}
  G(\theta^\star_1)=&~\underbrace{\Phi_{:,\mc{D}_{\pi_1}} \cdot \diag\left(\left\{\exp(-(\theta^\star_1-\lambda_m)^2T^2)\right\}_{m\in \mc{D}_{\pi_1}}\right) \cdot \left(\Psi_{:,\mc{D}_{\pi_1}}\right)^\dagger}_{:=A(\theta^\star_1)}\\
  +&\underbrace{\Phi_{:,\mc{D}^c} \cdot \diag\left(\left\{\exp(-(\theta^\star_1-\lambda_m)^2T^2)\right\}_{m\in \mc{D}^c}\right) \cdot \left(\Psi_{:,\mc{D}^c}\right)^\dagger}_{:=B(\theta^\star_1)}\\
  +&\underbrace{\sum_{i\neq \pi_1}\Phi_{:,\mc{D}_{i}} \cdot \diag\left(\left\{\exp(-(\theta^\star_1-\lambda_m)^2T^2)\right\}_{m\in \mc{D}_{i}}\right) \cdot \left(\Psi_{:,\mc{D}_{i}}\right)^\dagger}_{:=C(\theta^\star_1)}+\mc{E}(\theta^\star_1)\,.
\end{aligned}
\end{equation}
With probability at least $1-\eta$, we have
\begin{equation}
\left\|B(\theta^\star_1)+C(\theta^\star_1)+\mc{E}(\theta^\star_1)\right\|_2\leq \left\|B(\theta^\star_1)+C(\theta^\star_1)+\mc{E}(\theta^\star_1)\right\|_F\leq 2p_{\rm tail}\,.
\end{equation}
Furthermore, for $1\leq k\leq |\mc{D}_{\pi_1}|$,~\cref{eqn:location_error} implies that 
\begin{equation}\label{eqn:lower_bound_singularvalue}
\begin{aligned}
s_k\left(A(\theta^\star_1)\right)=&~\exp\left(-\mathcal{O}\left((1+\sigma)(p_{\rm tail}/p_{\min})+\delta\cdot T\right)^2\right)\left(\lambda_k\left(\Psi_{:,\mc{D}_{\pi_1}}\Phi^\dagger_{:,\mc{D}_{\pi_1}}\Phi_{:,\mc{D}_{\pi_1}}\left(\Psi_{:,\mc{D}_{\pi_1}}\right)^\dagger\right)\right)^{1/2}\\
\geq &~ \exp\left(-\mathcal{O}\left((1+\sigma)(p_{\rm tail}/p_{\min})+\delta\cdot T\right)^2\right)\left(\lambda_{\min}\left(\Phi^\dagger_{:,\mc{D}_{\pi_1}}\Phi_{:,\mc{D}_{\pi_1}}\right)\right)^{1/2}\left(\lambda_k\left(\Psi_{:,\mc{D}_{\pi_1}}\left(\Psi_{:,\mc{D}_{\pi_1}}\right)^\dagger\right)\right)^{1/2}\\
\geq &~ \exp\left(-\mathcal{O}\left((1+\sigma)(p_{\rm tail}/p_{\min})+\delta\cdot T\right)^2\right)\left(\lambda_{\min}\left(\Phi^\dagger_{:,\mc{D}_{\pi_1}}\Phi_{:,\mc{D}_{\pi_1}}\right)\right)^{1/2}\left(\lambda_{\min}\left(\Psi^\dagger_{:,\mc{D}_{\pi_1}}\Psi_{:,\mc{D}_{\pi_1}}\right)\right)^{1/2}
\\
\geq &~\exp\left(-\mathcal{O}\left((1+\sigma)(p_{\rm tail}/p_{\min})+\delta\cdot T\right)^2\right)\left(\frac{1}{1+\chi}\right)^2\frac{\left\|\Phi_{:,\mc{D}_i}\right\|_F\left\|\Psi_{:,\mc{D}_i}\right\|_F}{|\mc{D}_i|}\\
=&~ \exp\left(-\mathcal{O}\left((1+\sigma)(p_{\rm tail}/p_{\min})+\delta\cdot T\right)^2\right)\left(\frac{1}{1+\chi}\right)^2\frac{\sqrt{\sum_{m\in \mc{D}_i}\|\Phi_{:,m}\|^2}\sqrt{\sum_{m\in \mc{D}_i}\|\Psi_{:,m}\|^2}}{|\mc{D}_i|}\\
\geq&~ \exp\left(-\mathcal{O}\left((1+\sigma)(p_{\rm tail}/p_{\min})+\delta\cdot T\right)^2\right)\left(\frac{1}{1+\chi}\right)^2\frac{\sum_{m\in \mc{D}_i} \|\Phi_{:,m}\|\|\Psi_{:,m}\|}{|\mc{D}_i|}\\
\geq &~ \exp\left(-\mathcal{O}\left((1+\sigma)(p_{\rm tail}/p_{\min})+\delta\cdot T\right)^2\right)\left(\frac{1}{1+\chi}\right)^2p_{\min}\\
\geq &~\frac{1-\mc{O}\left(\left((1+\sigma)(p_{\rm tail}/p_{\min})+\delta\cdot T\right)^2\right)}{(1+\chi)^2}p_{\min}
\end{aligned}
\end{equation}
where we use $\delta=\mc{O}(1/T)$,~\cref{asp:linear_dep}, and the definition of $p_{\min}$.

Because
\begin{equation}
\frac{1-\mc{O}\left(\left((1+\sigma)(p_{\rm tail}/p_{\min})+\delta\cdot T\right)^2\right)}{(1+\chi)^2}=\Omega\left(\frac{p_{\rm tail}}{p_{\min}}\right)\,,
\end{equation}
we have
\begin{equation}
 \inf_{1\leq k\leq |\mc{D}_1|}s_k\left(G(\theta^\star_1)\right)\geq \inf_{1\leq k\leq |\mc{D}_1|}s_k\left(A(\theta^\star_1)\right)-2p_{\rm tail}>\tau\geq 2p_{\rm tail}\geq \sup_{k> |\mc{D}_1|}s_k\left(G(\theta^\star_1)\right)
\end{equation}
This guarantees that $m_1=|\mc{D}_{\pi_1}|$.

\textbf{Step 4:} We show that $m_i=0$ when $i>I$.

For $i>I$, because of the block intervals from $1\leq i\leq I$, we first have
\begin{equation}
\inf_{j} \mathrm{dist}(\theta^\star_i,{\mc{I}_j})\geq \frac{\alpha}{T}-\mc{O}\left((1+\sigma)\left(\frac{p_{\rm tail}}{p_{\min}}\right)\frac{1}{T}\right)-\delta/2\geq \frac{\alpha}{2T}\,,
\end{equation}
where we use $\alpha=\Omega(\chi^2+(1+\sigma)^2(p_{\rm tail}/p_{\min})^2)$ and $\alpha=\mc{O}(\Delta\cdot T)$ in the last inequality. Using~\cref{lem:G_theta_property}~\cref{eqn:G_theta_bound}, we have
\begin{equation}
\begin{aligned}
  G(\theta^\star_i)=&~\underbrace{\Phi_{:,\mc{D}} \cdot \diag\left(\left\{\exp(-(\theta^\star_i-\lambda_m)^2T^2)\right\}_{m\in \mc{D}}\right) \cdot \left(\Psi_{:,\mc{D}}\right)^\dagger}_{\|\cdot\|_2\leq \sqrt{KLR}\exp(-\alpha^2/4)}\\
  +&\underbrace{\Phi_{:,\mc{D}^c} \cdot \diag\left(\left\{\exp(-(\theta^\star_i-\lambda_m)^2T^2)\right\}_{m\in \mc{D}^c}\right) \cdot \left(\Psi_{:,\mc{D}^c}\right)^\dagger+\mc{E}(\theta^\star_i)}_{\|\cdot\|_2\leq p_{\rm tail}}\,.
\end{aligned}
\end{equation}
Because $\alpha=\Omega(\log(KLR/p_{\rm tail}))$, we have
\begin{equation}
\left\|  G(\theta^\star_i)\right\|_2\leq 2p_{\rm tail}<\tau\,.
\end{equation}
for $i>I$. This implies that $m_i=0$ for $i>I$ and concludes the proof.
\end{proof}

\subsection{Useful lemmas}
\begin{lemma}\label{lem:truncation_error} Given $\beta>0$. Let $\{t_n\}$ be i.i.d. sampled from the truncated Gaussian $a_T(t)$ defined as~\cref{eqn:a_T}. Define
\begin{equation}
\mc{E}\left(\theta\right):=\frac{1}{N}\sum^N_{n=1}Z_n\exp(\i \theta t_n)-{\Phi} \cdot \diag\left(\{\exp(-(\theta-\lambda_m)^2T^2{)}\}_{m\in [M]}\right) \cdot {\Psi^\dagger}\in \C^{L\times R}\,.
\end{equation}
Let $\Theta:=\left\{\theta_j\right\}^J_{j=1}\cup \left\{\lambda_m\right\}_{m\in\mathcal{D}}$.

If $\sigma=\Omega\left(\log^{1/2}\left(\sqrt{LR}/\beta\right)\right)$ as in~\cref{eqn:a_T} and $N=\Omega\left(\frac{LR}{\beta^2}\log\left(\left(J+|\mathcal{D}|\right)\frac{LR}{\eta}\right)\right)$,  we have
\begin{equation}\label{eqn:random_error}
\Pr\left[\max_{\theta\in\Theta}\|\mc{E}(\theta)\|_F\leq \beta\right]\geq 1-\eta\,.
\end{equation}
and
\begin{equation}\label{eqn:random_error_2}
\Pr\left[\bigcap_{\theta,\theta'\in\Theta}\|\mc{E}(\theta)-\mc{E}(\theta')\|_F\leq \beta\sigma T|\theta-\theta'|\right]\geq 1-\eta\,.
\end{equation}
\end{lemma}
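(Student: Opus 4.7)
The plan is to decompose $\mc{E}(\theta)$ into a deterministic \emph{truncation bias} and a stochastic \emph{Monte Carlo fluctuation}, bound each separately, and close the argument with a union bound over the finite set $\Theta$. Write
\begin{equation*}
\mc{E}(\theta) \;=\; \underbrace{\E\!\left[\tfrac{1}{N}\!\sum_n Z_n e^{\i\theta t_n}\right] - \Phi\,\diag\!\left(e^{-(\theta-\lambda_m)^2 T^2}\right)\Psi^\dagger}_{=:\,\mc{E}_{\mathrm{bias}}(\theta)} \;+\; \underbrace{\tfrac{1}{N}\!\sum_n Z_n e^{\i\theta t_n} - \E\!\left[\tfrac{1}{N}\!\sum_n Z_n e^{\i\theta t_n}\right]}_{=:\,\mc{E}_{\mathrm{rand}}(\theta)}.
\end{equation*}
The bias $\mc{E}_{\mathrm{bias}}$ arises because the identity $\int e^{\i(\theta-\lambda)t}a_T(t)\,\d t = e^{-(\theta-\lambda)^2T^2}$ holds for the untruncated Gaussian, while samples are drawn from $a_T^{\mathrm{trunc}}$. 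By a standard Gaussian tail estimate $\int_{|t|>\sigma T} a_T(t)\,\d t = O(e^{-\sigma^2/4})$, each entry is bounded by $O(e^{-\sigma^2/4})\sum_m |\Phi_{l,m}||\Psi_{r,m}|$, hence $\|\mc{E}_{\mathrm{bias}}(\theta)\|_F \leq \sqrt{LR}\cdot O(e^{-\sigma^2/4})$ using Cauchy--Schwarz and $\|\Phi_{l,:}\|_2 = \|\Psi_{r,:}\|_2 = 1$. The choice $\sigma=\Omega(\log^{1/2}(\sqrt{LR}/\beta))$ makes this $\leq \beta/2$ uniformly in $\theta$.

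For the stochastic part, I would apply a Hoeffding-type concentration inequality entrywise. The random variable $Z_{n,l,r}e^{\i\theta t_n}$ is bounded in magnitude by a constant (since $Z_n\in\{\pm 1\pm\i\}$), and the $N$ samples are i.i.d., so for each fixed $(\theta,l,r)$ the real and imaginary parts each concentrate at rate $O(\sqrt{\log(1/\eta')/N})$. Union-bounding over the $LR\cdot|\Theta|$ triples $(l,r,\theta)$ with $|\Theta|\leq J+|\mc{D}|$, every entry of $\mc{E}_{\mathrm{rand}}(\theta)$ is at most $O\!\big(\sqrt{\log((J+|\mc{D}|)LR/\eta)/N}\big)$ with probability $\geq 1-\eta$. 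Summing the $LR$ squared entries yields
\begin{equation*}
\max_{\theta\in\Theta}\|\mc{E}_{\mathrm{rand}}(\theta)\|_F \;=\; O\!\Big(\sqrt{LR\log\tfrac{(J+|\mc{D}|)LR}{\eta}\big/N}\Big),
\end{equation*}
which is $\leq \beta/2$ under the stated lower bound on $N$. Combined with the bias bound, this establishes~\cref{eqn:random_error}.

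For the Lipschitz-type bound~\cref{eqn:random_error_2}, the key observation is that $|e^{\i\theta t_n}-e^{\i\theta' t_n}| \leq |t_n|\cdot|\theta-\theta'| \leq \sigma T|\theta-\theta'|$, because the truncation guarantees $|t_n|\leq \sigma T$. Consequently each summand of $\mc{E}_{\mathrm{rand}}(\theta)-\mc{E}_{\mathrm{rand}}(\theta')$ is bounded by $O(\sigma T|\theta-\theta'|)$, and the same Hoeffding argument, now applied to the centered difference and union-bounded over the $O(|\Theta|^2)$ pairs, yields an entrywise bound scaling like $\sigma T|\theta-\theta'|\cdot O(\sqrt{\log(|\Theta|^2LR/\eta)/N})$; the Frobenius norm picks up a $\sqrt{LR}$ factor, giving $\beta\sigma T|\theta-\theta'|$ under the $N$ assumption. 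The bias part $\mc{E}_{\mathrm{bias}}(\theta)-\mc{E}_{\mathrm{bias}}(\theta')$ can likewise be bounded by Lipschitzness of the truncated Fourier integrand against $t$, which is controlled by the same $\sigma T$ factor.

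The main obstacle is ensuring that the Frobenius-norm concentration is sharp enough to absorb the $\sqrt{LR}$ factor that arises both from the bias (via Cauchy--Schwarz over $L$ and $R$) and from summing $LR$ entrywise fluctuations; this dictates the precise $LR/\beta^2$ scaling of $N$ and the $\log^{1/2}(\sqrt{LR}/\beta)$ scaling of $\sigma$. A secondary subtlety in the Lipschitz bound is that the union-bound constant now involves $|\Theta|^2$, but since this enters only logarithmically it is absorbed into the stated $N$.
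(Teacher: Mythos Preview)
Your proposal is correct and follows essentially the same route as the paper. The paper's proof is more modular: it invokes the entrywise result \cref{lem:bound_E_entrywise} (cited from \cite[Lemma~A.1]{ding2024quantum}) with rescaled parameters $\wt{\beta}=\beta/\sqrt{LR}$ and $\wt{\eta}=\eta/(LR)$, then union-bounds over the $LR$ entries and passes to the Frobenius norm via $\|\mc{E}(\theta)\|_F=\sqrt{\sum_{i,j}|\mc{E}_{i,j}(\theta)|^2}$. What you have written is precisely an unpacking of that cited lemma into its two ingredients (Gaussian tail for the truncation bias, Hoeffding for the Monte Carlo fluctuation) together with the same union bound and Frobenius aggregation, so the structure and the parameter dependencies match.
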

\begin{proof} For each $1\leq i\leq L$ and $1\leq j\leq R$, by \cref{lem:bound_E_entrywise} with $\wt{\beta}=\frac{\beta}{\sqrt{LR}}$ and $\wt{\eta}=\frac{\eta}{LR}$, we have
\begin{equation}
\Pr\left[\max_{\theta\in \Theta}|\mc{E}_{i,j}(\theta)|\leq \frac{\beta}{\sqrt{LR}}\right]\geq 1-\frac{\eta}{LR}\,,
\end{equation}
and
\begin{equation}
\Pr\left[\bigcap_{\theta,\theta'\in\Theta}\left\{|\mc{E}_{i,j}(\theta)-\mc{E}_{i,j}(\theta')|\leq \frac{\beta}{\sqrt{LR}}\sigma T|\theta-\theta'|\right\}\right]\geq 1-\frac{\eta}{LR}\,.
\end{equation}
By union bound, the first equation implies that
\begin{equation}
\Pr\left[\max_{\theta\in \Theta}\|\mc{E}(\theta)\|_F=\max_{\theta\in \Theta}\sqrt{\sum_{i,j}|\mc{E}_{i,j}(\theta)|^2}\leq \beta\right]\geq \Pr\left[\bigcap_{i,j}\left\{\max_{\theta\in \Theta}|\mc{E}_{i,j}(\theta)|\leq \frac{\beta}{\sqrt{LR}}\right\}\right]\geq 1-\eta\,.
\end{equation}
And the second equation implies
\begin{equation}
\begin{aligned}
&\Pr\left[\bigcap_{\theta,\theta'\in \Theta}\|\mc{E}(\theta)-\mc{E}(\theta')\|_F=\sqrt{\sum_{i,j}|\mc{E}_{i,j}(\theta)-\mc{E}_{i,j}(\theta')|^2}\leq \beta \sigma T|\theta-\theta'|\right]\\
\geq &~\Pr\left[\bigcap_{i,j}\left\{\bigcap_{\theta,\theta'\in \Theta}|\mc{E}_{i,j}(\theta)-\mc{E}_{i,j}(\theta')|\leq \frac{\beta}{\sqrt{LR}}\sigma T|\theta-\theta'|\right\}\right]\geq 1-\eta\,.
\end{aligned}
\end{equation}
This completes the proof of the lemma.
\end{proof}
\begin{lemma}\label{lem:G_theta_property} Given any matrix $\Phi\in\mathbb{C}^{L\times M}$, $\Psi\in\mathbb{C}^{R\times M}$, and $\{\lambda_m\}_{m\in[M]}$. Define
\begin{equation}
G(\theta)=\Phi \cdot \diag\left(\{\exp(-(\theta-\lambda_m)^2T^2)\}_{m\in [M]}\right) \cdot \Psi^\dagger.
\end{equation}
We have
\begin{equation}\label{eqn:G_theta_bound}
\|G(\theta)\|_F\leq \left(\max_m \exp\left(-(\theta-\lambda_m)^2T^2\right)\right)\sum_{m\in [M]}\sqrt{\left(\sum^L_{l=1} |\Phi_{l,m}|^2\right)\left(\sum^R_{r=1} |\Psi_{r,m}|^2\right)}
\end{equation}
and
\begin{equation}\label{eqn:G_theta_lipschitz}
\begin{aligned}
&\|G(\theta)-G(\theta')\|_F\leq T|\theta-\theta'|\sum_{m\in [M]}\left\|\Phi_{:,m}\Psi^\dagger_{:,m}\right\|_F\\
\leq &{\sqrt{2/e}}T|\theta-\theta'|\sum_{m\in [M]}\sqrt{\left(\sum^L_{l=1} |\Phi_{l,m}|^2\right)\left(\sum^R_{r=1} |\Psi_{r,m}|^2\right)}
\end{aligned}
\end{equation}
\end{lemma}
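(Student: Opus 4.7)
The plan is to express $G(\theta)$ as a sum of weighted rank-one matrices indexed by $m\in[M]$, and then apply the triangle inequality for the Frobenius norm together with the identity $\|uv^\dagger\|_F = \|u\|_2\|v\|_2$ for column vectors $u,v$. Concretely, I would write
\begin{equation*}
G(\theta) \;=\; \sum_{m\in[M]} e^{-(\theta-\lambda_m)^2 T^2}\,\Phi_{:,m}\,\Psi^\dagger_{:,m}.
\end{equation*}
This decomposition is the main structural observation from which both bounds follow easily.

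For the first bound \cref{eqn:G_theta_bound}, I would apply the triangle inequality in the Frobenius norm, pull out the scalar Gaussian factor, bound it by its maximum over $m$, and use $\|\Phi_{:,m}\Psi^\dagger_{:,m}\|_F = \|\Phi_{:,m}\|_2 \|\Psi_{:,m}\|_2$. This directly yields the claimed estimate, since $\|\Phi_{:,m}\|_2 = \sqrt{\sum_{l}|\Phi_{l,m}|^2}$ and similarly for $\Psi_{:,m}$.

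For the Lipschitz bound \cref{eqn:G_theta_lipschitz}, I would again use the triangle inequality to reduce to controlling each scalar difference
\begin{equation*}
\Bigl| e^{-(\theta-\lambda_m)^2 T^2} - e^{-(\theta'-\lambda_m)^2 T^2} \Bigr|.
\end{equation*}
Applying the mean value theorem to $f(x) = e^{-x^2 T^2}$, we have $|f'(x)| = 2T^2|x|\, e^{-x^2T^2} \le T \sup_{y\in\mathbb{R}} 2|y|e^{-y^2} \le T$ (the supremum is $\sqrt{2/e}<1$, attained at $|y|=1/\sqrt{2}$). Hence the scalar difference is at most $T|\theta-\theta'|$, and multiplying this uniform factor through the sum of rank-one terms gives the first inequality in \cref{eqn:G_theta_lipschitz}. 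The second inequality then follows immediately from $\|\Phi_{:,m}\Psi^\dagger_{:,m}\|_F = \|\Phi_{:,m}\|_2\|\Psi_{:,m}\|_2$.

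There is no real obstacle here; both statements are elementary consequences of the rank-one decomposition together with the triangle inequality and a uniform derivative bound on the Gaussian. The only mildly delicate step is verifying that the Lipschitz constant of $e^{-x^2T^2}$ in $x$ is bounded by $T$ rather than something larger, which requires computing the maximum of $2|y|e^{-y^2}$; this is a one-line calculus exercise.
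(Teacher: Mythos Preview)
Your proposal is correct and follows essentially the same approach as the paper's proof: both use the rank-one decomposition $G(\theta)=\sum_m e^{-(\theta-\lambda_m)^2T^2}\Phi_{:,m}\Psi_{:,m}^\dagger$, apply the triangle inequality for the Frobenius norm, and invoke the identity $\|uv^\dagger\|_F=\|u\|_2\|v\|_2$ together with the $T$-Lipschitzness of $\theta\mapsto e^{-(\theta-\lambda_m)^2T^2}$. Your treatment of the Lipschitz constant via the explicit bound $\sup_y 2|y|e^{-y^2}=\sqrt{2/e}<1$ is slightly more detailed than the paper's one-line appeal to ``the $1$-Lipschitzness of $\exp(-x^2)$,'' but the argument is otherwise identical.
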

\begin{proof} First, we note that
\begin{equation}
\begin{aligned}
\left\|G(\theta)\right\|_F&\leq \sum_{m\in [M]}\left\|\Phi_{:,m}\exp(-(\theta-\lambda_m)^2T^2)\Psi^\dagger_{:,m}\right\|_F\\
&\leq \left(\max_m \exp\left(-(\theta-\lambda_m)^2T^2\right)\right)\sum_{m\in [M]}\left\|\Phi_{:,m}\Psi^\dagger_{:,m}\right\|_F\\
&=\left(\max_m \exp\left(-(\theta-\lambda_m)^2T^2\right)\right)\sum_{m\in [M]}\sqrt{\left(\sum^L_{l=1} |\Phi_{l,m}|^2\right)\left(\sum^R_{r=1} |\Psi_{r,m}|^2\right)}
\end{aligned}
\end{equation}
This concludes the proof of~\cref{eqn:G_theta_bound}.

Second, we have
\begin{equation}
\begin{aligned}
&\left\|G(\theta)-G(\theta')\right\|_F\\
\leq &~\sum_{m\in [M]}\left\|\Phi_{:,m}\left(\exp(-(\theta-\lambda_m)^2T^2)-\exp(-(\theta'-\lambda_m)^2T^2)\right)\Psi^\dagger_{:,m}\right\|_F\\
\leq &~\left(\max_m \left(\exp(-(\theta-\lambda_m)^2T^2)-\exp(-(\theta'-\lambda_m)^2T^2)\right)\right)\sum_{m\in [M]}\left\|\Phi_{:,m}\Psi^\dagger_{:,m}\right\|_F\\
\leq &{\sqrt{2/e}}T|\theta-\theta'|\sum_{m\in [M]}\left\|\Phi_{:,m}\Psi^\dagger_{:,m}\right\|_F\\
=&~{\sqrt{2/e}}T|\theta-\theta'|\sum_{m\in [M]}\sqrt{\left(\sum^L_{l=1} |\Phi_{l,m}|^2\right)\left(\sum^R_{r=1} |\Psi_{r,m}|^2\right)}\,,
\end{aligned}
\end{equation}
where the last step follows from the {Lipschitzness} of $\exp(-x^2)$.

This concludes the proof of~\cref{eqn:G_theta_lipschitz}.
\end{proof}

\begin{lemma}[{\cite[Lemma A.1]{ding2024quantum}}]\label{lem:bound_E_entrywise}
Given $\widetilde{\beta}>0$, $i\in [L]$, $j\in [R]$. Let $\{t_n\}$ to be i.i.d sampled from the truncated Gaussian $a_T^{\rm trunc}(t)$ defined as~\cref{eqn:a_T}. Define
\begin{align}
    \mc{E}_{i,j}(\theta):=\frac{1}{N}\sum^N_{n=1}(Z_n)_{i,j}\exp(\i \theta t_n)-\Phi_{i,:} \cdot \diag\left(\{\exp(-(\theta-\lambda_m)^2T^2{)}\}_{m\in [M]}\right) \cdot (\Psi^\dagger)_{:,j}\in \C\,.
\end{align}
Let $\Theta:=\left\{\theta_j\right\}^J_{j=1}\cup \left\{\lambda_m\right\}_{m\in\mathcal{D}}$. If $\sigma=\Omega\left(\log^{1/2}\left(1/\wt{\beta}\right)\right)$  and $N=\Omega\left(\frac{1}{\wt{\beta}^2}\log\left(\left(J+|\mathcal{D}|\right)\frac{1}{\wt{\eta}}\right)\right)$,  we have
\begin{align}
    \Pr\left[\max_{\theta\in \Theta}|\mc{E}_{i,j}(\theta)|>\wt{\beta}\right]\leq {\wt{\eta}}\,,
\end{align}
and\footnote{In fact,~\cite[Lemma A.1]{ding2024quantum} has extra $\beta^2$ on the RHS of the inequality inside the probability in the equation below, which is unnecessary.}
\begin{align}
    \Pr\left[\bigcap_{\theta,\theta'\in \Theta}|\mc{E}_{i,j}(\theta)-\mc{E}_{i,j}(\theta')|\leq \wt{\beta}\sigma T|\theta-\theta'|\right]\geq 1-\wt{\eta}\,.
\end{align}
\end{lemma}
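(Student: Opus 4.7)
The plan is to decompose $\mc{E}_{i,j}(\theta)$ into a deterministic truncation bias and a stochastic sampling fluctuation, and control each separately. Write $Y_n(\theta) := (Z_n)_{i,j}\exp(\i\theta t_n)$, and note that the Fourier transform of the \emph{untruncated} Gaussian $(2T\sqrt{\pi})^{-1}\exp(-t^2/(4T^2))$ equals the target diagonal entry $\exp(-(\theta-\lambda_m)^2T^2)$. Thus $\mc{E}_{i,j}(\theta) = \bigl(\tfrac{1}{N}\sum_n Y_n(\theta) - \E Y_n(\theta)\bigr) + b(\theta)$, where $b(\theta)$ is the bias from truncating $a_T$ to $[-\sigma T,\sigma T]$. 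Since $|(Z_n)_{i,j}| = O(1)$ (Hadamard-test outputs lie in $\{\pm 1 \pm \i\}$) and the truncated Gaussian tail mass is $O(\exp(-\sigma^2/4))$, the assumption $\sigma = \Omega(\log^{1/2}(1/\wt{\beta}))$ forces $|b(\theta)| \leq \wt{\beta}/2$ uniformly in $\theta$.

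For the stochastic term, $Y_n(\theta)$ is a bounded i.i.d.\ complex random variable, so Hoeffding's inequality applied to its real and imaginary parts gives $\Pr[|\tfrac{1}{N}\sum_n Y_n(\theta) - \E Y_n(\theta)| > \wt{\beta}/2] \leq 4\exp(-c N\wt{\beta}^2)$. A union bound over $\theta \in \Theta$ (with $|\Theta| = J + |\mc{D}|$) then yields the first inequality once $N = \Omega(\wt{\beta}^{-2}\log((J+|\mc{D}|)/\wt{\eta}))$, which is precisely the sample-complexity hypothesis.

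For the Lipschitz-type bound, the crucial fact is that truncation forces $|t_n| \leq \sigma T$ almost surely, so $|\exp(\i\theta t_n)-\exp(\i\theta' t_n)| \leq \sigma T|\theta-\theta'|$ and hence $|Y_n(\theta)-Y_n(\theta')| = O(\sigma T|\theta-\theta'|)$ almost surely. Applying Hoeffding to the rescaled centered differences $(Y_n(\theta)-Y_n(\theta') - \E[Y_n(\theta)-Y_n(\theta')])/(\sigma T|\theta-\theta'|)$ yields concentration at scale $\wt{\beta}$ with the same sample complexity per pair, and a union bound over the $|\Theta|^2$ pairs only doubles the logarithmic factor. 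The bias difference $b(\theta)-b(\theta')$ inherits the same $\sigma T|\theta-\theta'|$ Lipschitz factor with an additional $e^{-\Omega(\sigma^2)}$ prefactor that is comfortably absorbed into $\wt{\beta}$. The main bookkeeping subtlety is that $\sigma$ must be simultaneously large enough to suppress the bias \emph{and} small enough to keep the Lipschitz constant tight; since $\sigma$ enters the bias only through $e^{-\sigma^2/4}$ but appears linearly in the Lipschitz factor, the stated choice $\sigma = \Omega(\log^{1/2}(1/\wt{\beta}))$ makes both demands compatible, and this is effectively the only nontrivial point in the argument.
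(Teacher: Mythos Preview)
Your argument is correct and follows the standard route: split $\mc{E}_{i,j}(\theta)$ into a deterministic truncation bias (controlled by the Gaussian tail mass $O(e^{-\sigma^2/4})$, hence $\leq \wt{\beta}/2$ under the hypothesis on $\sigma$) and a bounded-variable empirical fluctuation (handled by Hoeffding plus a union bound over $|\Theta|$ points), and for the second inequality exploit the hard support constraint $|t_n|\leq \sigma T$ to make the summands $(\sigma T|\theta-\theta'|)$-Lipschitz before rescaling and reapplying Hoeffding over pairs. Your handling of the bias difference $b(\theta)-b(\theta')$ is also right: differentiating $\int_{|t|>\sigma T}(1-e^{ixt})g_T(t)\,\d t$ gives a Lipschitz constant $O(Te^{-\sigma^2/4})$, which is comfortably $\leq \wt{\beta}\sigma T$ under the assumed $\sigma$.

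There is nothing to compare against in the present paper: the lemma is quoted verbatim from \cite[Lemma~A.1]{ding2024quantum} and not re-proved here (the only modification is the footnoted removal of a superfluous additive $\wt{\beta}^2$ term in the Lipschitz bound). Your sketch is exactly the argument one expects in that reference, and the ``bookkeeping subtlety'' you flag---that $\sigma$ enters the bias exponentially but the Lipschitz constant only linearly, so $\sigma=\Theta(\log^{1/2}(1/\wt{\beta}))$ balances both---is indeed the one non-mechanical point. One cosmetic remark: the first displayed probability in the statement is evidently a typo (it should read $\leq \wt{\eta}$, not $\leq 1-\wt{\eta}$, as is clear from how the lemma is invoked in \cref{lem:truncation_error}); your proof establishes the intended inequality.
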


\section{Analysis of the observable estimation}\label{sec:ana_ob}

In this section, we give the proof of~\cref{prop:observable}. In {addition} to the {assumptions} in~\cref{thm:non_orthogonal_location}, we assume $\delta=0,p_{\rm tail}=0$, $p_{\min}=\Omega(1)$, and $K,L,R=\mathcal{O}(1)$. Following the proof of~\cref{thm:non_orthogonal_location}, when $T=\widetilde{\Omega}(\Delta^{-1})$ and $N=\Omega(1)$, we first have
\begin{equation}
\sup_{1\leq j\leq I}|\theta^\star_j-\lambda^\star_{\pi(j)}|\leq \Delta/4\,.
\end{equation}
This implies $\theta^\star_j$ is at least $\Delta/2$ away from $\lambda^\star_i$ when $i\neq \pi(j)$.

Now, we improve the above bound and aim to show
\begin{equation}\label{eqn:error_for_ob}
\sup_{1\leq j\leq I}|\theta^\star_j-\lambda^\star_{\pi(j)}|=\mathcal{O}\left(\frac{1}{T}\left(\exp(-\Delta^2T^2/32)+N^{-1/2}\right)\right)\,.
\end{equation}
Without loss of generality, we consider $j=1$. Because $p_{\rm tail}=0$, we have
\begin{equation}\label{eqn:G_error}
\begin{aligned}
  G(\theta)=&\underbrace{\Phi_{:,\mc{D}_{\pi_1}} \cdot \diag\left(\left\{\exp(-(\theta-\lambda_m)^2T^2)\right\}_{m\in \mc{D}_{\pi_1}}\right) \cdot \left(\Psi_{:,\mc{D}_{\pi_1}}\right)^\dagger}_{:=A(\theta)}\\
  +&\underbrace{\Phi_{:,\mc{D}^c} \cdot \diag\left(\left\{\exp(-(\theta-\lambda_m)^2T^2)\right\}_{m\in \mc{D}^c}\right) \cdot \left(\Psi_{:,\mc{D}^c}\right)^\dagger}_{=0}\\
  +&\underbrace{\sum_{i\neq \pi_1}\Phi_{:,\mc{D}_{i}} \cdot \diag\left(\left\{\exp(-(\theta-\lambda_m)^2T^2)\right\}_{m\in \mc{D}_{i}}\right) \cdot \left(\Psi_{:,\mc{D}_{i}}\right)^\dagger}_{:=C(\theta)}+\mc{E}(\theta)\,.
\end{aligned}
\end{equation}
Because $K,L,R=\mathcal{O}(1)$ and $|\theta^\star_1-\lambda^\star_j|\geq \Delta/2$ for $j\neq \pi_1$, we first have $|\theta^\star_1-\lambda_m|\geq \Delta/4$ if the dominant index $m\notin \mathcal{D}_{\pi_1}$. This implies
\begin{equation}
\left\|C(\theta)\right\|_F=\mathcal{O}\left(\exp\left(-\Delta^2 T^2/16\right)\right)\,,
\end{equation}
and
\begin{equation}
\left\|G(\theta)-A(\theta)-\mc{E}(\theta)\right\|_F=\mathcal{O}\left(\exp\left(-\Delta^2 T^2/16\right)\right)\,.
\end{equation}
Using~\cref{lem:bound_E_entrywise}, with high probability, we also have
\begin{equation}
{\|\mc{E}(\theta)\|_F=\widetilde{\mathcal{O}}\left(N^{-1/2}\right)},\quad \|\mc{E}(\theta)-\mc{E}(\theta')\|_F=\mathcal{O}\left(\sigma T|\theta-\theta'|N^{-1/2}\right)\,.
\end{equation}
Let $\tilde{\theta}_1^\star$ be the closest energy grid point to the desired cluster center $\lambda_{\pi_1}^{\star}$. Similar to~\cref{eqn:A_B_C_inter} in the second step of the proof of~\cref{thm:non_orthogonal_location}, we have
\begin{equation}
    \begin{aligned}
    &2\mathrm{Re}\left(\mathrm{Tr}(A^\dagger(\theta^\star_1)\mc{E}(\theta^\star_1))\right)-2\mathrm{Re}\left(\mathrm{Tr}(A^\dagger(\widetilde{\theta}^\star_{1})\mc{E}(\widetilde{\theta}^\star_{1}))\right)\\
    =&~\mathcal{O}\left(T\left|\theta^\star_1-\widetilde{\theta}^\star_{1}\right|N^{-1/2}+ \sigma T \left|\theta^\star_1-\widetilde{\theta}^\star_{1}\right|N^{-1/2}\right)
    \\
    =&~\mathcal{O}\left((1+\sigma)T \left(\left|\theta^\star_1-\lambda^\star_{\pi_1}\right| +\frac{q}{T} \right) N^{-1/2}\right)\,.
    \end{aligned}
\end{equation}
Choosing $q=\widetilde{\mathcal{O}}(N^{-1/2})$, similar to~\cref{eqn:W_diff}, we have
\begin{equation}
\begin{aligned}
&\underbrace{\left(\exp\left(-2\left(\widetilde{\theta}^\star_1-\lambda^\star_{\pi_1}\right)^2T^2\right)-\exp\left(-2\left(\theta^\star_1-\lambda^\star_{\pi_1}\right)^2T^2\right)\right)}_{\leq 0}\\
=&~\mathcal{O}\left(T\left|\theta^\star_1-\lambda^\star_{\pi_1}\right|N^{-1/2}+\frac{1}{N}+\exp\left(-\Delta^2 T^2/16\right)\right)\,.
\end{aligned}
\end{equation}
Following a similar calculation as~\cref{eqn:delta_theta_inquality}, we have
\begin{equation}
T\left|\theta^\star_1-\lambda^\star_{\pi_1}\right|=\mathcal{O}\left(\frac{1}{\sqrt{N}}+\exp\left(-\Delta^2 T^2/32\right)\right)\,.
\end{equation}
We conclude {the} proof of~\cref{eqn:error_for_ob}.

Next, we consider the observation data matrix error. For a fixed $i$, we define the exact data matrix for the generalized eigenvalue problem:
\begin{equation}
G_{\rm exact}=\Phi_{:,{\mc{D}_i}}(\Psi_{:,{\mc{D}_i}})^\dagger,\quad G^O_{\rm exact}=\Phi_{:,{\mc{D}_i}}\cdot {O_{\mc{D}_i}}\cdot (\Psi_{:,{\mc{D}_i}})^\dagger\,,
\end{equation}
where {$O_{\mc{D}_i}$} is defined in~\cref{eqn:O_D_i}. According to~\cref{eqn:error_for_ob},~\cref{lem:truncation_error}, and following a similar calculation as~\cref{eqn:error_calculation}, we have
\begin{equation}\label{eqn:norm_error}
\begin{aligned}
&\left\|G(\theta^\star_i)-G_{\rm exact}\right\|_2=\mathcal{O}\left(\frac{1}{\sqrt{N}}+\exp\left(-\Delta^2 T^2/16\right)\right)\,,\\
&\left\|G^O(\theta^\star_i)-G^O_{\rm exact}\right\|_2=\mathcal{O}\left(\|O\|\left(\frac{1}{\sqrt{N}}+\exp\left(-\Delta^2 T^2/16\right)\right)\right)\,,
\end{aligned}
\end{equation}
where $G(\theta^\star_i)$ and $G^O(\theta^\star_i)$ are defined in~\cref{eqn:G,eqn:G_O}, respectively.

According to~\cref{asp:linear_dep}, $G_{\rm exact}$ has rank $m_i$ with smallest nonzero singular value lower bounded by $p_{\min}/(1+\chi)$. Combining this and~\cref{eqn:norm_error}, $|m_i|$-th singular value of $G(\theta^\star_i)$ can be lower bound by $p_{\min}/(2(1+\chi))$ when $T=\Omega(\Delta^{-1})$ and $N=\Omega(1)$. Furthermore, following the third step of the proof of~\cref{thm:non_orthogonal_location}, the $(m_i+1)$-th singular value of $G(\theta^\star_i)$ can be upper bound by $p_{\min}/(4(1+\chi))$ when $T=\Omega(\Delta^{-1})$ and $N=\Omega(1)$. Using singular value perturbation theory, these imply that
\begin{equation}
\begin{aligned}
&\left\|\widetilde{G}(\theta^\star_i)-\widetilde{G}_{\rm exact}\right\|_2=\mathcal{O}\left(\frac{1}{\sqrt{N}}+\exp\left(-\Delta^2 T^2/32\right)\right)\,,\\
&\left\|\widetilde{G}^O(\theta^\star_i)-\widetilde{G}^O_{\rm exact}\right\|_2=\mathcal{O}\left(\|O\|\left(\frac{1}{\sqrt{N}}+\exp\left(-\Delta^2 T^2/32\right)\right)\right)\,,
\end{aligned}
\end{equation}
where $\widetilde{G}_{\rm exact}$ and $\widetilde{G}^O_{\rm exact}$ are defined using the same forms as~\cref{eqn:truncated_matrix}. Here, both $\widetilde{G}(\theta^\star_i)$ and $\widetilde{G}_{\rm exact}$ are full rank matrices with smallest singular value lower bounded by  $p_{\min}/(2(1+\chi))$. According to generalized eigenvalue perturbation theory, we finally have
\begin{equation}
\left|\lambda^O_k-\lambda_{k,\rm exact}\right|=\mathcal{O}\left(\|O\|\left(\frac{1}{\sqrt{N}}+\exp\left(-\Delta^2 T^2/32\right)\right)\right)\,,
\end{equation}
where $\lambda_{k,\rm exact}$ is the $k$-th generalized eigenvalue {with respect to the matrix pencil $(\widetilde{G}^O_{\rm exact}, \widetilde{G}_{\rm exact})$.} Therefore to achieve $\epsilon_O$ precision in $\lambda_k^O$, it is required that $N = \tilde{\Omega}(\epsilon_O^{-2})$ and $T = \tilde{\mc{O}}(\Delta^{-1})$.
This concludes the proof.

\end{widetext}
\bibliography{ref}

\end{document}